\DeclareMathOperator{\Diff}{Diff}
\DeclareMathOperator{\Iso}{Iso}
\newcommand{\dd}{\mathrm{d}}
\newcommand{\g}{\mathfrak{g}}
\newtheorem{theorem}{Theorem}[section]
\newtheorem{proposition}[theorem]{Proposition}
\newtheorem{lemma}[theorem]{Lemma}
\newtheorem{corollary}[theorem]{Corollary}
\theoremstyle{definition}
\newtheorem{definition}[theorem]{Definition}
\newtheorem{remark}[theorem]{Remark}
\title{A Presymplectic and Symmetry-Reduced Formulation of the Maxwell--Vlasov System}
\author{Leonardo J. Colombo\thanks{Centre for Automation and Robotics, Spanish National Research Council (CSIC). Carretera de Campo Real, km 0, 200, 28500 Arganda del Rey, Spain. (leonardo.colombo@csic.es). The author acknowledges financial support from Grant PID2022-137909-NB-C22 funded by the Spanish Ministry of Science and Innovation. L. Colombo wishes to thank Joris Vankerschaver (Ghent University) for fruitful discussions on the subject of this work and for suggesting to study Maxwell–Vlasov system under the Skinner-Rusk framework.
}}
\date{\empty}
\begin{document}

\maketitle

\begin{abstract}
We develop a unified geometric formulation of the Maxwell–Vlasov system using the infinite-dimensional Skinner–Rusk (SR) formalism. In this framework, particles and fields are treated simultaneously within a single presymplectic manifold, and the Gotay–Nester–Hinds algorithm recovers the full Maxwell–Vlasov equations as the compatibility conditions of a single variational system. The hierarchy of constraints—including Vlasov advection, Gauss and Faraday laws, and the electromagnetic gauge structure—arises naturally from the presymplectic geometry of the SR formalism.

Reduction by the diffeomorphism group of phase space produces a reduced presymplectic manifold whose dynamics reproduces both the Euler–Poincaré formulation for the Vlasov sector and the Marsden–Weinstein/Morrison–Greene Lie–Poisson Hamiltonian structure. We further extend the construction to equilibria that partially break the relabeling symmetry, obtaining a translated SR system that yields an effective symplectic linearization, clarifies the appearance of Goldstone-type neutral modes, and provides a geometric foundation for the energy–Casimir method.

Finally, we incorporate external antenna fields into this setting by introducing affine Hamiltonian controls and establishing a theory of controlled symmetry breaking within the reduced SR framework. This leads to controlled Lie–Poisson equations for plasma–antenna coupling and a geometric mechanism for stabilization through Casimir shaping and symmetry-selective forcing. The resulting picture offers a single presymplectic structure unifying Lagrangian, Hamiltonian, gauge, reduction, and control-theoretic aspects of the Maxwell–Vlasov system.
\end{abstract}

\section{Introduction}

The Maxwell--Vlasov system, describing self-consistent evolution of a 
collisionless plasma interacting with its electromagnetic fields, has long 
served as a cornerstone of plasma kinetic theory. Its origins trace back to 
the works of Vlasov and Low \cite{Low1958}, who established the basic 
Lagrangian and kinetic formulations. Since then, several complementary 
viewpoints have emerged.

On the one hand, the variational and Eulerian geometric framework developed 
by Cendra, Holm, Hoyle and Marsden (CHHM) \cite{CendraHolmHoyleMarsden1998} 
showed that the Maxwell--Vlasov equations arise from an Euler--Poincaré 
principle on the diffeomorphism group of phase space. This provided the 
first systematic geometric unification of the Vlasov kinetic transport with 
the electromagnetic field equations, via semidirect products and advected 
densities.

On the other hand, the noncanonical Hamiltonian picture, initiated by 
Morrison and Greene \cite{MorrisonGreene1980} and placed on firm geometric 
ground by Marsden and Weinstein \cite{MarsdenWeinstein1982}, revealed the 
Maxwell--Vlasov system as a Lie--Poisson system on the dual of a 
semidirect-product Lie algebra. These works identified the fundamental 
noncanonical Poisson bracket, the associated Casimir invariants, and the 
basis for energy--Casimir stability theory. Morrison’s later extensions 
\cite{Morrison1982,Morrison1998} clarified the structural role of this 
bracket and connected it to the broader context of continuum Hamiltonian 
systems.

Despite these advances, a complete \emph{unified} geometric interpretation 
remained incomplete. The CHHM variational formulation, the 
Morrison--Greene bracket, and the Marsden--Weinstein Lie--Poisson structure 
were known to be compatible, but their interrelation depended on parallel, 
rather than intrinsically linked, constructions. In particular: \begin{itemize}
\item Dirac-type constraint structures underlying the Low Lagrangian were 
identified in CHHM, but not connected to a fully developed infinite-dimensional 
presymplectic theory.

\item The Morrison--Greene noncanonical bracket was derived independently of 
any variational or presymplectic structure, leaving open the question of 
whether it arises directly from fundamental geometric principles.

\item The Marsden--Weinstein bracket was obtained by reduction, but no 
framework existed that simultaneously incorporated Lagrangian variables, 
Hamiltonian variables, constraints, and reduction into a single object.

\end{itemize}

The Skinner--Rusk (SR) formalism \cite{SkinnerRusk1983} and the 
Gotay--Nester--Hinds (GNH) algorithm 
\cite{GotayNesterHinds1978,GotayNester1979,Gotay1982} provide an intrinsic 
presymplectic framework that unifies Lagrangian and Hamiltonian variables on 
the Whitney sum $TQ\oplus T^*Q$. In finite dimensions, the SR--GNH approach 
is known for its ability to treat constraints and degeneracies without 
switching between Lagrangian and Hamiltonian pictures. However, its extension 
to infinite-dimensional field theories, and especially to coupled 
field–particle systems such as Maxwell--Vlasov, has not been carried out to 
date.

The first main objective of this paper is to construct the 
\emph{infinite-dimensional Skinner--Rusk formulation of the Maxwell--Vlasov 
system} and to show that the full Maxwell--Vlasov equations arise naturally 
through the GNH constraint algorithm. Starting from the configuration 
manifold $Q = \Diff(T\mathbb{R}^3)\times\mathcal{A}\times\mathcal{V}$, where \(\Diff(T\mathbb{R}^3)\) encodes phase--space particle trajectories as in 
CHHM, \(\mathcal{A}\) denotes the space of electromagnetic vector potentials 
\(A(x)\), and \(\mathcal{V}\) denotes the space of scalar potentials 
\(\Phi(x)\), we build the unified presymplectic phase space 
\(W = TQ\oplus T^*Q\) equipped with 
\(\Omega=\mathrm{pr}_2^\ast\omega_{\mathrm{can}}\) and with the energy function 
\(H = \langle p , v\rangle - L\). We then prove that:

\begin{itemize}
\item The degeneracy of the Low Lagrangian produces primary constraint 
manifolds that coincide with the CHHM--Dirac degeneracies.

\item The first GNH iteration recovers the reconstruction relation for the 
particle flow and the electromagnetic gauge constraint $E = -\dot A - 
\nabla\Phi$.

\item The second GNH iteration yields Gauss’ law, the magnetic constraint, 
Faraday’s law, Ampère–Maxwell’s law with source, and the Vlasov transport 
equation.

\item No further constraints appear: the GNH procedure stabilizes after two 
steps on a final constraint manifold $C_\infty$ carrying the complete 
Maxwell--Vlasov dynamics.
\end{itemize}

Thus the entire Maxwell--Vlasov system emerges as a hierarchy of 
\emph{presymplectic consistency conditions}, not as imposed field equations.

\medskip

The second main objective is to study the effect of the 
$\Diff(T\mathbb{R}^3)$ symmetry. We show that Skinner--Rusk reduction by $\Diff(T\mathbb{R}^3)$ yields the Euler--Poincaré equations of CHHM; the reduced presymplectic form induces the Marsden--Weinstein Lie--Poisson structure, and we express the same reduced bracket in Eulerian variables to recover exactly the Morrison--Greene noncanonical bracket. This provides, for the first time, a single geometric derivation that 
simultaneously unifies CHHM's variational picture, the 
Marsden--Weinstein Lie--Poisson reduction, and the Morrison--Greene 
noncanonical Hamiltonian formulation. 
Moreover, we extend the SR--GNH theory to equilibria that \emph{break part of 
the relabeling symmetry}. A translated system around a stationary 
solution, together with the linearized GNH algorithm, yields an effective 
reduced phase space free of gauge and relabeling degeneracies. This provides 
a geometric explanation for Goldstone-type neutral directions and gives a 
presymplectic foundation for the energy--Casimir method and linear stability 
analysis.

In summary, the results of this paper establish a unified presymplectic 
framework in which the Lagrangian, Hamiltonian, Euler--Poincaré, Dirac and 
Lie--Poisson descriptions of Maxwell--Vlasov appear as natural shadows of 
the Skinner--Rusk geometry. This clarifies their interrelations and offers a 
geometric foundation for constraints, reduction, and stability in kinetic 
plasma theory.

A further contribution of this work is the development of a geometric
theory of \emph{controlled symmetry breaking} and \emph{Hamiltonian
control} within the Maxwell–Vlasov framework.  
In Section~7 we show that external antenna fields or RF drivings may
be incorporated as affine Hamiltonian controls compatible with the
presymplectic structure and the hierarchy of GNH constraints.  
After reduction, these controls induce a class of \emph{controlled
semidirect-product Lie–Poisson equations} which remain faithful to the
Maxwell--Vlasov bracket and preserve the coadjoint geometry of the
particle–field phase space.  
The resulting theory provides a precise geometric characterization of how external fields break residual particle–relabeling symmetries; how control vector fields modify Casimir leaves (“Casimir shaping”); how antenna drivings can target Goldstone-type neutral modes; and how effective stabilization can be achieved via controlled
energy–Casimir methods.

In particular, we established a general
\emph{Controlled Symmetry Breaking and Effective Stabilization Theorem},
which shows that if control modes couple nontrivially to broken symmetry
directions and suitably shift the equilibrium onto an appropriate
Casimir leaf, then the controlled second variation becomes positive 
definite on the effective phase space.  
This yields formal (energy–Casimir) stability of the controlled
equilibrium and provides a rigorous geometric framework for RF heating,
current drive, and external stabilization mechanisms. Taken together, these developments show that the Skinner--Rusk and
GNH formalisms provide a robust, presymplectic foundation for the
Maxwell–Vlasov system, encompassing not only classical geometric
formulations but also modern control-theoretic extensions.  

The remainder of the paper is structured as follows. 
In Section~2 we develop the Skinner--Rusk (SR) and 
Gotay--Nester--Hinds (GNH) formalisms in the infinite-dimensional 
setting, emphasizing their presymplectic and constraint-theoretic 
features. Section~3 constructs the infinite-dimensional SR bundle for 
the Maxwell--Vlasov Lagrangian and applies the GNH algorithm, showing 
that the full Maxwell--Vlasov system arises in two iterations as a 
complete set of presymplectic consistency conditions. Section~4 carries 
out the reduction by the relabeling group $\Diff(T\mathbb{R}^3)$: we 
derive the Euler--Poincaré equations, the Marsden--Weinstein Lie--Poisson 
structure, and the Morrison--Greene bracket from a single reduced 
presymplectic form. Section~5 develops the theory of symmetry breaking 
at equilibria, introducing the translated SR system, the linearized GNH 
algorithm, and the effective reduced phase space. Section~6 establishes the presymplectic foundation of the
energy--Casimir method and its relation to the linearized
Maxwell--Vlasov dynamics via the effective Hamiltonian structure.
In Section~7 we extend this geometric framework to incorporate
external antenna drivings through affine Hamiltonian controls,
developing a controlled Lie--Poisson reduction theory that remains
compatible with the SR--GNH constraint hierarchy.  This construction
shows how external fields can break residual symmetries, reshape
Casimir leaves, and couple to Goldstone-type modes, thereby providing
a rigorous geometric basis for controlled stabilization and RF-driven
dynamics in kinetic plasma models.
Finally, Section~8 summarizes the conclusions of the paper and
proposes further directions for research.


\section{The Infinite-Dimensional Skinner--Rusk Formalism}

The Skinner--Rusk (SR) unified formalism provides an intrinsic geometric 
framework in which Lagrangian and Hamiltonian formulations coexist on a 
single presymplectic manifold \cite{SkinnerRusk1983}.  
In the finite-dimensional setting, one starts with a configuration 
manifold $Q$, constructs the Whitney sum $W := TQ \oplus T^\ast Q$, 
equips $W$ with the pullback of the canonical symplectic form on 
$T^\ast Q$, and defines a generalized energy function.  
The dynamics is then encoded in a presymplectic equation on~$W$, 
whose solutions and constraints are systematically characterized by 
the Gotay--Nester--Hinds (GNH) algorithm 
\cite{GotayNesterHinds1978,GotayNester1979,Gotay1982}.  
In this section we extend the formalism to the case where $Q$ is an 
infinite-dimensional manifold of field configurations, such as 
electromagnetic potentials, distribution functions, or 
phase--space diffeomorphisms.

\subsection{Infinite-dimensional configuration spaces}

Let $Q$ denote an infinite-dimensional manifold modelled on a Banach or 
Hilbert space. Typical examples in continuum physics include the group 
$\mathrm{Diff}^s(M)$ of Sobolev $H^s$ diffeomorphisms of a domain $M$, 
defined for $s > \dim M/2 + 1$ so that composition and inversion are 
$C^\infty$ and $\mathrm{Diff}^s(M)$ becomes a smooth Hilbert--manifold 
group; spaces of vector potentials $A \in H^s(M,\mathbb{R}^3)$; and scalar 
fields or distribution functions $F \in H^s(M \times \mathbb{R}^3)$. 
The Sobolev space $H^s$ consists of functions with weak derivatives up 
to order $s$ in $L^2$, and provides the regularity required for the 
existence of smooth tangent and cotangent bundles on these configuration 
spaces.

We assume throughout that $Q$ is regular enough so that its tangent and 
cotangent bundles $TQ$ and $T^\ast Q$ exist as Banach or Hilbert bundles, 
and admit the standard duality pairing
\[
  \langle p , v\rangle := p(v),
  \qquad p\in T^\ast_q Q,\; v\in T_qQ.
\]

\begin{definition}
The \emph{unified Skinner--Rusk bundle} associated with $Q$ is the Whitney 
sum
\[
  W := TQ \oplus T^\ast Q,
\]
equipped with the projection $\operatorname{pr}_2 : W \to T^\ast Q$ onto 
the cotangent component and the pullback two-form
\[
  \Omega := \operatorname{pr}_2^\ast \omega_{\mathrm{can}},
\]
where $\omega_{\mathrm{can}}$ is the canonical symplectic form on $T^\ast Q$.
\end{definition}

\begin{lemma}\label{prop:Omega-presymplectic}
Let $Q$ be an infinite-dimensional configuration manifold whose cotangent 
bundle $T^\ast Q$ carries the canonical symplectic form $\omega_{\mathrm{can}}$. 
Then the two-form
\[
  \Omega := \operatorname{pr}_2^\ast \omega_{\mathrm{can}}
\]
on $W=TQ\oplus T^\ast Q$ is closed and weakly non-degenerate. In particular, 
$(W,\Omega)$ is a presymplectic manifold.
\end{lemma}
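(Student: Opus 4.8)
The plan is to deduce both asserted properties — closedness and weak non-degeneracy — from the corresponding properties of the canonical form $\omega_{\mathrm{can}}$ by transporting them across the smooth bundle projection $\operatorname{pr}_2\colon W\to T^\ast Q$, using the standing hypothesis that $W$, $TQ$ and $T^\ast Q$ are smooth Banach (or Hilbert) bundles carrying the duality pairing. Throughout, weak non-degeneracy of a closed form is understood in the presymplectic sense appropriate to this unified construction: the induced pairing is injective transverse to the characteristic kernel, equivalently the form descends to a weakly non-degenerate symplectic structure on the quotient by that kernel.

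First I would dispatch closedness, which is purely formal. On the Banach cotangent bundle the tautological one-form $\theta_{\mathrm{can}}$ and $\omega_{\mathrm{can}}=-\dd\theta_{\mathrm{can}}$ are smooth, and since $\operatorname{pr}_2$ is a smooth map of Banach manifolds the exterior derivative commutes with pullback, giving $\dd\Omega=\dd\,\operatorname{pr}_2^\ast\omega_{\mathrm{can}}=\operatorname{pr}_2^\ast\,\dd\omega_{\mathrm{can}}=0$. It is convenient to record the global primitive $\Theta:=\operatorname{pr}_2^\ast\theta_{\mathrm{can}}$ with $\Omega=-\dd\Theta$, which simultaneously reconfirms closedness and the smoothness of $\Omega$ in the infinite-dimensional setting.

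Next I would establish the weak non-degeneracy by factoring the flat map. From $\Omega(X,Y)=\omega_{\mathrm{can}}(T\operatorname{pr}_2\,X,\,T\operatorname{pr}_2\,Y)$ one obtains $\Omega^\flat=(T\operatorname{pr}_2)^\ast\circ\omega_{\mathrm{can}}^\flat\circ T\operatorname{pr}_2$, so the weak non-degeneracy of $\Omega$ is inherited directly from that of $\omega_{\mathrm{can}}$. Concretely, in a cotangent chart with $\theta_{\mathrm{can}}=\langle p,\dd q\rangle$ the interior product $\iota_{(\dot q,\dot p)}\omega_{\mathrm{can}}$ equals $\dot q\,\dd p-\dot p\,\dd q$ up to sign, which vanishes only for $(\dot q,\dot p)=0$; hence $\omega_{\mathrm{can}}^\flat$ is injective, the defining feature of weak non-degeneracy. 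Because $T\operatorname{pr}_2$ is a fibrewise surjection with dual injection $(T\operatorname{pr}_2)^\ast$, the factorization shows that $\Omega$ reproduces this injective canonical pairing along the cotangent-bundle directions: transverse to the characteristic distribution $\mathcal K:=\ker T\operatorname{pr}_2$, the form $\Omega$ descends to a weakly non-degenerate form on $TW/\mathcal K\cong\operatorname{pr}_2^\ast T(T^\ast Q)$ canonically identified with $\omega_{\mathrm{can}}$. This is exactly the assertion that $\Omega$ is a closed, weakly non-degenerate presymplectic form, and it is the object on which the GNH algorithm will subsequently act.

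The hard part will be functional-analytic rather than formal, and it is here that infinite dimensions intervene. Weak non-degeneracy is strictly weaker than strong non-degeneracy: $\omega_{\mathrm{can}}^\flat$ is injective but in general not surjective onto $T^\ast(T^\ast Q)$ — surjectivity can fail when the model space is non-reflexive or the pairing is not perfect — so throughout the argument I must carry injectivity together with closed range rather than invertibility, and may not invoke an inverse $(\omega_{\mathrm{can}}^\flat)^{-1}$. Accordingly the two points needing genuine verification are that $\theta_{\mathrm{can}}$, hence $\omega_{\mathrm{can}}$, is a smooth and boundedly bilinear form on the Sobolev cotangent bundle, so that $\omega_{\mathrm{can}}^\flat$ is a bounded operator of closed range, and that $\mathcal K=\ker T\operatorname{pr}_2$ is a closed, complemented subbundle of constant corank, so that $TW/\mathcal K$ is a genuine smooth Banach bundle on which the induced weakly non-degenerate form is well defined and smooth. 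Both follow from the regularity assumptions on $Q$ together with the local triviality of the Whitney sum $W=TQ\oplus T^\ast Q$.
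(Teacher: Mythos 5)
On closedness you and the paper do the same one\-/line computation (pullback commutes with $\dd$); your additional remark that $\Omega=-\dd\Theta$ with $\Theta:=\operatorname{pr}_2^\ast\theta_{\mathrm{can}}$ is a harmless strengthening, giving global exactness rather than mere closedness.

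On weak non-degeneracy you part ways with the paper, and this is the substantive point. The paper's own proof asserts that pulling back $\omega_{\mathrm{can}}$ ``preserves the injectivity of the associated map,'' concluding that $\Omega^\flat$ is injective on all of $TW$. That assertion is false, and your factorization $\Omega^\flat=(T\operatorname{pr}_2)^\ast\circ\omega_{\mathrm{can}}^\flat\circ T\operatorname{pr}_2$ shows exactly why: since $\omega_{\mathrm{can}}^\flat$ is injective and $(T\operatorname{pr}_2)^\ast$ is injective (dual of a fibrewise surjection), one gets $\ker\Omega^\flat=\ker T\operatorname{pr}_2$, which consists of all velocity directions $(0,\delta v,0)$ and is nonzero whenever $TQ$ has nontrivial fibers. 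So $\Omega$ is genuinely degenerate, and the lemma as literally stated --- with the standard definition of weak non-degeneracy that the paper itself recalls inside its proof ($\Omega(\xi,\eta)=0$ for all $\eta$ forces $\xi=0$) --- is not provable. You sidestep this by \emph{redefining} weak non-degeneracy in your opening paragraph to mean injectivity of the induced form on $TW/\ker T\operatorname{pr}_2$; under that convention your argument is complete and correct, but you should state openly that you are proving a corrected claim rather than the one written, since the redefinition is doing all the work at precisely the step where the paper's proof fails.

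It is worth noting that your corrected version, not the paper's literal one, is what the rest of the paper actually relies on: the nontrivial kernel $\{(0,\delta v,0)\}$ is the engine of the SR--GNH machinery, since requiring $\dd H$ to annihilate it produces the Legendre/primary constraints $p=\delta L/\delta v$ and $P_\Phi=0$ of Proposition~\ref{prop:C0-MaxwellVlasov} and Theorem~\ref{thm:SR-gives-MV}, and the clause in Proposition~\ref{thm:GNH-final} that solutions differ by fields in $\ker\Omega|_{C_\infty}$ would be vacuous if $\Omega$ were weakly non-degenerate in the injective sense. Two minor points: your insistence on closed range of $\omega_{\mathrm{can}}^\flat$ is unnecessary (injectivity alone is what weak non-degeneracy requires, and closed range is never used in the quotient argument), and complementedness of $\mathcal{K}=\ker T\operatorname{pr}_2$ is immediate from the local trivialization of the Whitney sum, as you indicate.
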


\begin{proof}
Since $\omega_{\mathrm{can}}$ is closed on $T^\ast Q$, its pullback by any 
smooth map is again closed, hence $\dd\Omega = \operatorname{pr}_2^\ast(\dd\omega_{\mathrm{can}})=0$. 

Weak non-degeneracy follows from the corresponding property of 
$\omega_{\mathrm{can}}$. In infinite dimensions, the canonical symplectic 
form induces at each point $(q,p)\in T^\ast Q$ a linear map
\[
\xi \mapsto 
\omega_{\mathrm{can}}(\xi, \cdot) \in T_{(q,p)}(T^\ast Q)^\ast,\,\, \xi\in T_{(q,p)}(T^\ast Q)
\]
which is injective—if $\omega_{\mathrm{can}}(\xi,\eta)=0$ for all $\eta$ 
then necessarily $\xi=0$—but typically not surjective onto the full dual 
space. This failure of surjectivity reflects a fundamental feature of 
Banach and Hilbert manifolds: tangent spaces need not be reflexive, so a 
closed two-form cannot induce a strong symplectic isomorphism. 

Pulling back $\omega_{\mathrm{can}}$ preserves both closedness and the 
injectivity of the associated map, so $\Omega$ is again closed and weakly 
non-degenerate. 
\end{proof}

Let $L : TQ \to \mathbb{R}$ be a Lagrangian functional.  
The Skinner--Rusk generalized energy is the smooth function
\[
  H(q,v,p) := \langle p , v\rangle - L(q,v),
  \qquad (q,v,p)\in W.
\]

We call the triple 
$(W,\Omega,H)$ a \emph{presymplectic system}. Such a system consists of  
a Banach or Hilbert manifold $W$, a closed weakly non-degenerate 
two-form $\Omega$, and a smooth Hamiltonian functional $H$. Its dynamics is 
determined by vector fields $X$ satisfying
\begin{equation}\label{eq:SR-dynamics}
    i_X \Omega = \dd H,
\end{equation}
but the weak non-degeneracy of $\Omega$ implies that solutions may fail to 
exist or may not be unique. This makes it necessary to characterize the 
constraint submanifolds on which the presymplectic equation is solvable. 
For this we employ the Gotay–Nester–Hinds (GNH) algorithm 
\cite{GotayNesterHinds1978,GotayNester1979}.

\subsection{The Gotay--Nester--Hinds Constraint Algorithm}

The Gotay--Nester--Hinds (GNH) algorithm 
\cite{GotayNesterHinds1978} provides an intrinsic and systematic method 
for analysing presymplectic systems. When applied to the equation 
\eqref{eq:SR-dynamics}, it constructs a descending sequence of constraint 
submanifolds on which the equation admits well-defined solutions and the 
dynamics becomes consistent.

\begin{definition}
Given a presymplectic system $(W,\Omega,H)$, the \emph{primary constraint submanifold} is
\[
  C_0 := \{ z\in W \mid \dd H(z)\in \operatorname{Im}\,\Omega_z \}.
\]
Inductively, for $k\geq 0$,
\[
  C_{k+1}
  := 
  \bigl\{
    z\in C_k \,\big|\,
    \exists\, X(z)\in T_z C_k
    \ \text{such that } i_X\Omega = \dd H \text{ at } z
  \bigr\}.
\]
\end{definition}

In the following, we show the existence of the final constraint manifold.
\begin{proposition}\label{thm:GNH-final}
Let $(W,\Omega,H)$ be a presymplectic system on a Banach or Hilbert 
manifold $W$, and let $(C_k)_{k\ge 0}$ denote the sequence of constraint 
sets produced by the GNH algorithm. 

Assume that for each $k\ge 0$, the set $C_k$ is a (weak) 
submanifold of $W$, and the compatibility conditions defining $C_{k+1}$ 
inside $C_k$ have constant Banach rank (so that $C_{k+1}$ is again a 
(weak) submanifold of $W$), and that the sequence stabilizes after finitely many 
steps, i.e., there exists an index $k$ such that $C_{k+1}=C_k$.

Then the intersection $C_\infty := \bigcap_{j\ge 0} C_j = C_k$, where $k$ is the first index for which $C_{k+1}=C_k$, is a (weak) 
submanifold of $W$ on which there exists a vector field 
$X\in\mathfrak X(C_\infty)$ satisfying
\[
i_X\Omega = \dd H\quad\text{on }C_\infty.
\]
Any other solution differs from $X$ by a vector field taking values in 
$\ker \Omega|_{C_\infty}$.
\end{proposition}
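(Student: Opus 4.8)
The plan is to separate the set-theoretic content of the claim from its analytic core, the latter being the construction of a smooth global solution; I would organize the argument into an identification step, a pointwise-solvability step, the main assembly step, and a short uniqueness remark.

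First I would establish $C_\infty=C_k$. The inductive definition makes the sequence descending, $C_{k+1}\subseteq C_k$ for every $k$, so the stabilization hypothesis $C_{k+1}=C_k$ at the first such index $k$ propagates: substituting $C_{k+1}=C_k$ into the definition of $C_{k+2}$ gives $C_{k+2}=C_{k+1}=C_k$, and induction yields $C_j=C_k$ for all $j\ge k$. Hence $C_\infty:=\bigcap_{j\ge 0}C_j=C_k$. Pointwise solvability is then immediate from the algorithm: $z\in C_{k+1}$ means precisely that some $X(z)\in T_zC_k$ satisfies $i_X\Omega=\dd H$ at $z$. Writing $\Omega_z^\flat\colon T_zW\to T_z^\ast W$, $X\mapsto i_X\Omega$, for the weak musical map, the equality $C_{k+1}=C_\infty$ gives $\dd H(z)\in\Omega_z^\flat\bigl(T_zC_\infty\bigr)$ for every $z\in C_\infty$.

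The main obstacle, and the point at which the infinite-dimensional theory departs from the finite-dimensional GNH construction, is upgrading this pointwise solvability to a single smooth vector field $X\in\mathfrak X(C_\infty)$. Here I would invoke the constant Banach rank hypothesis. Constancy of the rank of $\Omega^\flat|_{TC_\infty}$ forces both its kernel $\ker\Omega|_{TC_\infty}$ and its image to be locally trivial subbundles with closed fibers; in the Hilbert setting one splits off orthogonal complements, while in the Banach setting the constant-rank assumption must be read as supplying closed, complemented subbundles. Over each local trivialization the restricted map sends a complement of the kernel isomorphically onto its closed image, so the open mapping theorem yields a bounded, smoothly varying right inverse. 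Applying this right inverse to $\dd H|_{C_\infty}$—a smooth section which by the previous step takes values in the image subbundle—produces a local smooth solution, and a partition of unity on the Hilbert manifold $C_\infty$ patches these local solutions together; the patching preserves the equation because $i_X\Omega=\dd H$ is affine, so convex combinations of solutions remain solutions. The delicate point to verify is that $\dd H|_{C_\infty}$ lands in the \emph{closed} image subbundle rather than merely in the pointwise range of each $\Omega_z^\flat$; this is exactly where constancy of the rank, and not mere pointwise solvability, is indispensable.

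Finally, uniqueness modulo the kernel is a one-line linearity argument: if $X$ and $X'$ both solve $i_X\Omega=\dd H$ along $C_\infty$, then $i_{X-X'}\Omega=0$ there, so $X-X'$ is a section of $\ker\Omega|_{C_\infty}$, which is the asserted residual gauge freedom of the presymplectic dynamics.
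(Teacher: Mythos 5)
Your proposal is correct and follows the same skeleton as the paper's proof: both arguments use the descending chain $C_0 \supseteq C_1 \supseteq C_2 \supseteq \cdots$, observe that stabilization $C_{k+1}=C_k$ makes pointwise solvability with tangency hold at every point of $C_k=C_\infty$ ``by construction,'' and dispose of uniqueness by noting that $i_{X-X'}\Omega = 0$ forces $X-X'$ to be a section of $\ker\Omega|_{C_\infty}$. The difference lies in what each does with the analytic core. The paper's proof stops at pointwise solvability and defers the rest to the standard GNH references; it never explains how the pointwise solutions assemble into a single \emph{smooth} vector field $X\in\mathfrak X(C_\infty)$. You make exactly this step explicit: the constant-rank hypothesis yields closed, complemented kernel and image subbundles of $\Omega^\flat|_{TC_\infty}$, the open mapping theorem gives smoothly varying bounded right inverses over local trivializations, and a partition of unity patches the local solutions --- legitimately, because the equation $i_X\Omega=\dd H$ is affine in $X$, so convex combinations of solutions are again solutions. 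This is the main infinite-dimensional content of the statement, and your version supplies it while the paper's elides it. One caveat you flag but should sharpen: smooth partitions of unity exist on paracompact Hilbert manifolds, but can fail on Banach manifolds whose model spaces lack smooth bump functions; since the proposition claims the Banach case as well, your patching step there needs either an additional hypothesis or a different selection argument. This is a point at which both proofs are incomplete --- yours visibly, the paper's silently.
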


\begin{proof}
The argument is the standard one in the GNH theory 
\cite{GotayNesterHinds1978,GotayNester1979}. 
Because $\Omega$ is only weakly non-degenerate in the infinite-dimensional 
setting, the equation $i_X\Omega=\dd H$ may fail to have solutions at some 
points of $W$ or may admit solutions that are not tangent to a given 
constraint submanifold. The GNH algorithm compensates for this by defining 
inductively a decreasing sequence
\[
C_0 \supseteq C_1 \supseteq C_2 \supseteq \cdots
\]
where $C_{k+1}$ consists of those points in $C_k$ at which the equation 
$i_X\Omega=\dd H$ admits a solution tangent to $C_k$.

The assumption that $\Omega$ and the 
compatibility conditions defining each $C_k$ have \emph{constant Banach 
rank} ensures that each $C_k$ is a (weak) submanifold of $W$. Under this 
regularity hypothesis, and assuming that the sequence stabilizes for some 
$k$,
\[
C_{k+1}=C_k,
\]
the presymplectic equation admits a solution tangent to $C_k$ at every 
point, by construction. Thus $C_k$ is the final constraint manifold 
$C_\infty$. Any two solutions differ by a vector field in 
$\ker(\Omega|_{C_\infty})$, since adding such a field does not change the 
interior product $i_X\Omega$.
\end{proof}

\begin{remark}
In the Banach or Hilbert setting, the constraint sets $C_k$ produced by 
the GNH algorithm are typically \emph{weak submanifolds} of~$W$. This means 
that each $C_k$ is modelled on a closed subspace of the model space of $W$, 
but not necessarily on a complemented one. Consequently, the inclusion 
$C_k\hookrightarrow W$ need not admit a continuous linear splitting, and 
tangent spaces satisfy $T_z C_k \subseteq T_z W$ as closed subspaces without requiring a topological direct sum 
decomposition.\hfill$\diamond$
\end{remark}

\begin{remark}
In models of continuum physics the GNH algorithm 
typically generates identities such as $\nabla\cdot B = 0$ and charge–Gauss 
        relations; compatibility conditions between particle velocities, 
        advected quantities, and electromagnetic variables; and 
dynamically generated constraints expressing the geometry of 
        field–velocity coupling. These constraints emerge automatically from the presymplectic geometry 
of~$(W,\Omega,H)$.\hfill$\diamond$
\end{remark}

\section{Lagrangian Formulation of Maxwell--Vlasov in the SR Framework}

We now reinterpret, in a unified way, the classical Low Lagrangian \cite{Low1958} for charged
particle dynamics and the geometric framework of Cendra--Holm--Hoyle--Marsden
(CHHM) \cite{CendraHolmHoyleMarsden1998} within the infinite-dimensional
Skinner--Rusk formalism. The Low Lagrangian provides a variational
formulation of the interaction between a distribution of charged particles
and the electromagnetic field and is usually written in Eulerian variables,
while the Skinner--Rusk formalism is naturally formulated on Lagrangian
configuration spaces. This mismatch makes it useful to relate the Eulerian
description $(F,\Phi,A)$ to its Lagrangian counterpart, in which particle
trajectories are encoded by a diffeomorphism of phase space. This transition
is precisely the geometric mechanism underlying the CHHM formulation and
will allow us to embed the Maxwell--Vlasov system into the SR framework.

In the Eulerian description, the basic
configuration variables are the distribution function $F(x,v)$ on phase
space and the electromagnetic potentials $(\Phi,A)$ on physical space. 
The particle velocity field $u$ does not appear as an independent
variable in the Eulerian Lagrangian: it is reconstructed from the
Lagrangian flow on phase space and enters only through the evolution
equation for $F$, not through the Lagrangian functional itself. This is
the geometric viewpoint emphasized in the CHHM framework
\cite{CendraHolmHoyleMarsden1998}.

The natural functional space of these fields is a Sobolev completion of
$(F,\Phi,A)$ in suitable $H^s$ spaces, which ensures that the required
operations---weak derivatives, curls, divergence and the pairing with the
distribution $F$---are well defined and that $(F,\Phi,A)$ form a Banach or
Hilbert manifold on which the SR formalism applies.

\begin{definition}
The \emph{Low Lagrangian} of the Maxwell--Vlasov system is the functional
\[
  L(\Phi,A,F;\dot A,\dot\Phi)
  =
  \int F(x,v)\left(\tfrac{1}{2}|v|^2 + q\Phi(x) - q\,v\!\cdot\!A(x)\right)\,dx\,dv
  +
  \frac{1}{2}
  \int \big(|\dot A + \nabla\Phi|^2 - |\nabla\times A|^2\big)\,dx .
\]
\end{definition}

This expression splits naturally into two contributions, corresponding
respectively to the particle and field sectors. 
The first integral represents the particle energy:
\begin{itemize}
  \item $\frac12 |v|^2$ is the kinetic energy per particle,
  \item $q\Phi$ is the electrostatic energy,
  \item $-q\,v\cdot A$ is the minimal electromagnetic coupling term.
\end{itemize}
The second integral is the standard electromagnetic field Lagrangian:
\[
  \mathbf{E} = -(\dot A + \nabla\Phi), 
  \qquad 
  \mathbf{B} = \nabla \times A,
\]
so that
\(
\frac12(|\dot A + \nabla\Phi|^2 - |\nabla\times A|^2)
= \tfrac12(|\mathbf{E}|^2 - |\mathbf{B}|^2).
\)

The Lagrangian depends on the time derivatives $(\dot A,\dot\Phi)$ but not
on $\dot F$, reflecting the fact that the Vlasov equation arises as a
kinematic (advective) constraint imposed by the particle flow rather than
from a kinetic term in the action. In particular, the Eulerian velocity $u$ of the phase-space flow is obtained from the
reconstruction relation $\dot\psi\,\psi^{-1} = u$, where $\psi$ is the
Lagrangian trajectory map in phase space. Thus $u$ is an
induced quantity rather than an independent variational variable.

\begin{proposition}
\label{prop:Low-degenerate}
The Low Lagrangian is degenerate in the directions of the scalar--potential 
velocity $\dot\Phi$ and in the particle--acceleration variables. In 
particular,
\[
  \frac{\delta L}{\delta \dot\Phi} = 0,
  \qquad 
  \frac{\delta L}{\delta \dot v} = 0,
\]
so the associated Legendre transform fails to be invertible and generates 
primary constraints in the sense of the GNH algorithm.
\end{proposition}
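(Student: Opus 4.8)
The plan is to compute the relevant functional derivatives of the Low Lagrangian $L(\Phi,A,F;\dot A,\dot\Phi)$ directly from its definition and exhibit the two claimed vanishings, then argue that these force the fiber derivative (Legendre transform) to be non-invertible, which is precisely the primary-constraint condition $C_0=\{\dd H\in\operatorname{Im}\Omega\}$ of the GNH framework.

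First I would address $\delta L/\delta\dot\Phi$. Inspecting the Lagrangian, the only $\dot\Phi$-dependence sits inside the field term $\tfrac12\int|\dot A+\nabla\Phi|^2\,dx$, where $\dot\Phi$ does \emph{not} appear at all --- the field Lagrangian involves $\dot A$ and $\nabla\Phi$, but not the time derivative of $\Phi$. Hence $L$ is literally independent of $\dot\Phi$, and its variation vanishes identically. This is the standard observation that the scalar potential $\Phi$ is a non-dynamical (Lagrange-multiplier-type) field, and it is the source of the Gauss-law constraint later in the GNH hierarchy. Second, for the particle sector I would note that $F$ enters only algebraically through $\int F(\tfrac12|v|^2+q\Phi-q\,v\cdot A)\,dx\,dv$, with no dependence on any particle-velocity time derivative: in the Eulerian description the phase-space velocity $u$ is not an independent variable but is reconstructed from the flow via $\dot\psi\,\psi^{-1}=u$, as emphasized in the preceding discussion. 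Since the Lagrangian contains no kinetic term in $\dot v$ (equivalently, no $\dot F$ term and no independent acceleration variable), the derivative $\delta L/\delta\dot v$ is identically zero. I would phrase this carefully to match the paper's convention that ``particle-acceleration variables'' refers to the fiber directions that would be conjugate to velocities in a Lagrangian on $T(\Diff(T\RR^3))$.

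The remaining step is to translate these vanishing fiber derivatives into the language of primary constraints. The fiber derivative $\mathbb{F}L:TQ\to T^*Q$ sends $(q,v)\mapsto(q,\delta L/\delta v)$; its image defines the submanifold to which the Hamiltonian picture is confined. When $\delta L/\delta\dot\Phi=0$ and $\delta L/\delta\dot v=0$, the conjugate momenta $p_\Phi$ and $p_v$ are forced to vanish independently of the velocities, so $\mathbb{F}L$ is constant in those fiber directions and therefore not injective on fibers --- the Legendre transform is non-invertible. In the SR picture this is exactly the statement that the momenta conjugate to $\Phi$ and to the acceleration directions must satisfy $p_\Phi=0$ and $p_v=0$ on the primary constraint manifold $C_0$, which by Lemma~\ref{prop:Omega-presymplectic} and the definition of $C_0$ is where $\dd H\in\operatorname{Im}\Omega_z$. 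I would close by invoking that $H=\langle p,v\rangle-L$ inherits these degeneracies, so that $i_X\Omega=\dd H$ is solvable only along the zero-momentum loci, giving the primary constraints.

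The main obstacle I anticipate is making the functional-analytic statement precise rather than merely formal. In finite dimensions ``non-invertible Legendre transform'' has an unambiguous rank meaning, but here $TQ$ and $T^*Q$ are Banach/Hilbert bundles and, as the weak-non-degeneracy discussion in Lemma~\ref{prop:Omega-presymplectic} shows, one must be careful about surjectivity and the distinction between the model space and its dual. The genuinely delicate point is therefore to argue that the kernel directions $\partial/\partial\dot\Phi$ and the acceleration directions constitute a well-defined closed subspace of each fiber on which $\mathbb{F}L$ degenerates, so that the resulting primary constraint set is a \emph{weak} submanifold in the sense of the surrounding remarks; I would handle this by identifying the constraints as the preimage of a closed subspace under the continuous fiber map $\mathbb{F}L$ and invoking the constant-rank/closed-range hypotheses already assumed in Proposition~\ref{thm:GNH-final}, rather than attempting a direct splitting of the model space.
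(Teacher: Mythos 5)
Your proposal is correct and takes essentially the same route as the paper: both proofs rest on direct inspection of the Low Lagrangian, noting that $\dot\Phi$ does not genuinely enter the field term (which involves $\dot A$ and $\nabla\Phi$ only) and that no particle-acceleration variable appears in the particle term, so both fiber derivatives vanish identically and yield primary constraints. If anything, your observation that $L$ is \emph{literally independent} of $\dot\Phi$ is cleaner than the paper's phrasing that $\dot\Phi$ ``appears only through $\dot A+\nabla\Phi$'', and your closing discussion of the fiber derivative and the weak-submanifold subtleties elaborates material that the paper defers to Proposition~\ref{prop:C0-MaxwellVlasov} and the surrounding remarks rather than including in this proof.
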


\begin{proof}
In the electromagnetic part of $L$, the scalar--potential velocity 
$\dot\Phi$ appears only through the gauge--invariant combination 
$\dot A + \nabla\Phi$, so the variational derivative of $L$ with respect 
to $\dot\Phi$ cannot produce an independent momentum variable. Therefore
\[
\frac{\delta L}{\delta \dot\Phi}=0,
\]
which is the familiar degeneracy leading to Gauss' law as a primary 
constraint.

For the particle sector, the Lagrangian depends on the phase--space 
velocity $v$ but contains no dependence on its time derivative 
$\dot v$. Hence the canonical momentum conjugate to $v$ vanishes:
\[
\frac{\delta L}{\delta \dot v}=0.
\]
This expresses the fact that particle accelerations are not variational 
degrees of freedom in the Eulerian formulation---the Vlasov equation 
arises as a kinematic constraint from the reconstruction of the 
phase--space flow rather than from a kinetic term in $L$. \end{proof}
\begin{remark}
In the Eulerian formulation adopted here, particle accelerations 
$\dot v$ do not appear as independent variables in the configuration 
manifold $Q$, so the corresponding momenta $P_v$ are not included 
explicitly in the Skinner--Rusk phase space $W$.  
The condition $\delta L/\delta \dot v = 0$ should therefore be interpreted 
as the Eulerian counterpart of the vanishing of the momenta conjugate to 
accelerations in the CHHM analysis \cite{CendraHolmHoyleMarsden1998}.  

In the CHHM framework these degeneracies are treated through a generalized 
Legendre transform and Dirac structures.  
In the Skinner--Rusk formalism, by contrast, they arise automatically as 
\emph{primary constraints} in the sense of the GNH algorithm.\hfill$\diamond$
\end{remark}

To express the Low Lagrangian in the Skinner--Rusk setting it is 
convenient to pass to the ``Lagrangian picture'' of the particle 
dynamics. The configuration space is taken to be $Q = \mathrm{Diff}(T\mathbb{R}^3)\times \mathcal{V} \times \mathcal{A}$, where $\psi\in\mathrm{Diff}(T\mathbb{R}^3)$ is the Lagrangian flow map on 
phase space, $\Phi\in \mathcal{V}$ is the scalar potential, and $A\in\mathcal{A}$ 
is the vector potential.

\begin{remark}
The natural Lagrangian variable for kinetic theories is the flow of
particles in phase space. This flow is a diffeomorphism 
$\psi : T\mathbb{R}^3 \to T\mathbb{R}^3$ describing the evolution of 
$(x,v)$ along exact particle trajectories, so $\mathrm{Diff}(T\mathbb{R}^3)$
plays the role of the Lagrangian configuration manifold for the Vlasov
sector. The relation between the Eulerian variables $(F,\Phi,A)$ and the
Lagrangian flow $\psi$ is encoded in the push-forward representation
$F(x,v,t) = \psi_t{}_*\!F_0$, where $F_0$ is the reference distribution. The Eulerian velocity field $u$
is then recovered from the reconstruction equation $\dot\psi\,\psi^{-1} = u$,
which expresses that $u$ is not an independent variational variable but
the instantaneous generator of the phase-space flow. In particular,
the Vlasov equation arises as a kinematic identity associated with this
reconstruction, rather than from a kinetic term in the Lagrangian.\hfill$\diamond$
\end{remark}

A point of the corresponding unified phase 
space $W$ is written as $(\psi,\Phi,A;\,\dot\psi,\dot\Phi,\dot A;\,P_\psi,P_\Phi,P_A)$. Next, we show how to derive the primary constraints for Maxwell–Vlasov systems.

\begin{proposition}
\label{prop:C0-MaxwellVlasov}
Let $L$ be the Low Lagrangian expressed in Lagrangian variables 
$(\psi,\Phi,A;\dot\psi,\dot\Phi,\dot A)$.  
The associated Skinner--Rusk generalized energy
\[
H =
  \langle P_\psi , \dot\psi\rangle
  + \langle P_\Phi , \dot\Phi\rangle
  + \langle P_A , \dot A\rangle
  - L(\psi,\Phi,A;\dot\psi,\dot\Phi,\dot A)
\]
defines a presymplectic system $(W,\Omega,H)$ whose primary constraint 
submanifold $C_0$ is given by
\[
  C_0
  =
  \bigl\{
    (\psi,\Phi,A;\dot\psi,\dot\Phi,\dot A;P_\psi,P_\Phi,P_A)\in W
    \;\big|\;
    P_\Phi = 0,\;
    P_\psi = \tfrac{\delta L}{\delta\dot\psi},\;
    P_A = \tfrac{\delta L}{\delta\dot A}
  \bigr\}.
\]
  
In particular, $C_0$ reproduces the generalized Legendre 
constraints obtained in the CHHM framework 
\cite{CendraHolmHoyleMarsden1998}.
\end{proposition}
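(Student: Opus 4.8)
The plan is to compute $C_0 = \{z\in W \mid \dd H(z)\in\operatorname{Im}\Omega_z\}$ directly from the block structure of $\Omega$ and $H$, exploiting the defining feature that $\Omega=\operatorname{pr}_2^\ast\omega_{\mathrm{can}}$ involves only the base point $q=(\psi,\Phi,A)$ and the momentum $p=(P_\psi,P_\Phi,P_A)$, and is completely independent of the velocity fibre $v=(\dot\psi,\dot\Phi,\dot A)$. Writing a tangent vector as $X=(\delta q,\delta v,\delta p)$, I would first record that $i_X\Omega$, viewed as a covector on $W$, has three slots $(\text{base},\text{velocity},\text{momentum})$ equal to $(-\delta p,\,0,\,\delta q)$; in particular its velocity slot vanishes identically, so the vertical directions $\partial/\partial v$ span $\ker\Omega_z$ and $\operatorname{Im}\Omega_z$ consists exactly of those covectors whose velocity slot is zero.

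Next I would differentiate the generalized energy $H=\langle p,v\rangle-L(q,v)$, whose three slots are
\[
\dd H = \Big(-\tfrac{\delta L}{\delta q},\;\; p-\tfrac{\delta L}{\delta v},\;\; v\Big).
\]
Comparing slot by slot with $i_X\Omega=(-\delta p,0,\delta q)$, the base and momentum slots are matched by the choices $\delta p=\delta L/\delta q$ and $\delta q=v$, imposing no restriction on $z$; the only obstruction is the velocity slot, which forces
\[
p=\frac{\delta L}{\delta v}\quad\Longleftrightarrow\quad P_\psi=\frac{\delta L}{\delta\dot\psi},\quad P_\Phi=\frac{\delta L}{\delta\dot\Phi},\quad P_A=\frac{\delta L}{\delta\dot A}.
\]
Invoking Proposition~\ref{prop:Low-degenerate}, the degeneracy $\delta L/\delta\dot\Phi=0$ reduces the middle identity to $P_\Phi=0$, leaving $P_\psi=\delta L/\delta\dot\psi$ and $P_A=\delta L/\delta\dot A$ as the invertible part of the Legendre map. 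These are precisely the three relations cutting out $C_0$.

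To close the argument I would verify the converse inclusion: on the set so defined, the vector $X$ with $\delta q=v$, $\delta p=\delta L/\delta q$, and $\delta v$ arbitrary solves $i_X\Omega=\dd H$, the freedom in $\delta v$ being exactly $\ker\Omega_z$; hence this set is the full primary constraint manifold, not merely contained in it. The identification with the CHHM generalized Legendre constraints then follows by matching $P_\psi=\delta L/\delta\dot\psi$ and $P_A=\delta L/\delta\dot A$ with the field momenta of \cite{CendraHolmHoyleMarsden1998}, while the vanishing $P_\Phi$---together with the absent acceleration momenta noted in the remark following Proposition~\ref{prop:Low-degenerate}---reproduces their Dirac degeneracies.

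The step I expect to be the genuine obstacle is the claim that $\operatorname{Im}\Omega_z$ is \emph{exactly} the set of covectors with vanishing velocity slot, which underlies the equivalence above. In the weakly non-degenerate, infinite-dimensional setting one has only the general inclusion $\operatorname{Im}\Omega_z\subseteq(\ker\Omega_z)^\circ$, and reflexivity failures can make it strict, so the annihilator characterization cannot be invoked abstractly. The resolution is the explicit block form of $\Omega$, which lets one solve the base and momentum slots by hand; making this rigorous requires identifying $\delta L/\delta\dot\psi$ and $\delta L/\delta\dot A$ as honest elements of the weak cotangent bundle $T^\ast Q$ and checking that the matching of the base and momentum slots is realizable within the Hilbert--manifold model, rather than only formally.
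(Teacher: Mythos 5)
Your proposal is correct and follows essentially the same route as the paper's proof: both reduce membership of $\dd H(z)$ in $\operatorname{Im}\Omega_z$ to a slot-by-slot comparison in which the base and momentum slots can always be matched (by $\delta p = \delta L/\delta q$ and $\delta q = v$) and only the velocity slot is obstructed, yielding $p=\delta L/\delta v$, after which Proposition~\ref{prop:Low-degenerate} turns $P_\Phi = \delta L/\delta\dot\Phi$ into $P_\Phi=0$ and leaves $P_\psi$, $P_A$ as the regular part of the Legendre map matching the CHHM constraints. Your write-up is in fact somewhat more careful than the paper's---which states loosely that ``$\Omega$ pairs velocities with momenta'' (the velocities actually span $\ker\Omega$) and omits both the converse inclusion and the caveat distinguishing $\operatorname{Im}\Omega_z$ from $(\ker\Omega_z)^\circ$ in the non-reflexive setting---but the underlying argument is the same.
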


\begin{proof}
By definition, a point of $W$ belongs to $C_0$ precisely when 
$\dd H(z)$ lies in the image of the presymplectic form $\Omega_z$.  
Since $\Omega$ pairs velocities with momenta, this requirement enforces 
the standard Legendre relations between the vertical variations of 
$H$ and the variational derivatives of $L$ with respect to the 
corresponding velocity components.

Evaluating $\dd H$ along variations of $(\dot\psi,\dot\Phi,\dot A)$ gives
\[
P_\psi = \frac{\delta L}{\delta\dot\psi},\qquad
P_\Phi = \frac{\delta L}{\delta\dot\Phi},\qquad
P_A    = \frac{\delta L}{\delta\dot A}.
\]
The Low Lagrangian satisfies 
$\delta L/\delta\dot\Phi = 0$ because $\dot\Phi$ enters only through 
$\dot A+\nabla\Phi$, yielding the primary constraint $P_\Phi=0$.  
Similarly, the particle contribution to $L$ contains no dependence on 
particle accelerations, so $\delta L/\delta\dot v = 0$, giving the 
constraint $P_v=0$.  

The remaining momenta $P_\psi$ and $P_A$ are determined by the 
non-degenerate part of the Legendre map. These relations are precisely 
the generalized Legendre constraints derived in the CHHM analysis 
\cite{CendraHolmHoyleMarsden1998}.
\end{proof}

\begin{remark}
The advantage of the SR perspective is that the constraints of Proposition~\ref{prop:C0-MaxwellVlasov} arise from a single presymplectic equation on $W$, without the need to separate Lagrangian and Hamiltonian variables \emph{a priori}.\hfill$\diamond$
\end{remark}
Applying the GNH algorithm to the presymplectic system 
\eqref{eq:SR-dynamics} on the primary constraint manifold $C_0$ produces,
in successive steps, the full geometric content of the Maxwell--Vlasov
system. The first iteration identifies the combinations of velocities and
potentials compatible with the kernel of $\Omega$, yielding the
reconstruction relation that identifies $\dot\psi$ with the Eulerian
phase-space velocity $u$ together with the gauge relation 
$E=-\,\dot A-\nabla\Phi$ defining the electric field. Subsequent
iterations enforce tangency of the dynamics to the evolving constraint
manifolds. In this process Gauss’ law appears in the form 
$\nabla\cdot E=\rho[F]$, where 
\[
\rho[F](x)=\int_{\mathbb{R}^3} F(x,v)\,dv
\]
is the charge density induced by the distribution function, while the
magnetic constraint $\nabla\cdot B=0$ arises as the condition that the
dynamics preserve the primary magnetic constraint.  
Tangency of the particle-flow constraints produces the Vlasov equation
for~$F$, and the compatibility of the electromagnetic momenta yields the
remaining Maxwell equations, including Ampère–Maxwell’s law in the form
\[
\partial_t E - \nabla\times B = -\,J[F],
\qquad 
J[F](x)=\int_{\mathbb{R}^3} v\,F(x,v)\,dv,
\]
where $J[F]$ is the current density carried by the particles.  
Thus every component of the Maxwell--Vlasov system emerges naturally as a
consistency condition uncovered by the GNH procedure.  
All of these constraints arise purely from the geometry of the SR
presymplectic system: none is imposed by hand, and each appears at a
finite stage of the GNH reduction.

\begin{remark}[Regularity of the SR Maxwell--Vlasov system]
In the finite-dimensional Skinner--Rusk theory 
\cite{SkinnerRusk1983,YoshimuraMarsden2006a,YoshimuraMarsden2006b},
the existence of solutions to the presymplectic equation 
$i_X\Omega = dH$ is guaranteed by a regularity condition: at each stage of 
the GNH algorithm, the constraint set must be a smooth submanifold and the 
restriction of the presymplectic form must have kernel of constant rank.  
This ensures that the algorithm stabilizes after a finite number of steps 
and that the resulting presymplectic dynamics is well posed.  

In the modern geometric formulation developed in 
\cite{ColomboMartindeDiegoZuccalli2010,ColomboDeDiego2014}, 
these regularity requirements are expressed in terms of weak submanifold 
theory on Banach or Hilbert manifolds and the constant-rank property for 
the Legendre map and the induced constraint distributions.

For the Maxwell--Vlasov system, these conditions are satisfied provided 
all fields and particle maps are taken in Sobolev spaces $H^s$ with 
$s > \tfrac{3}{2}+1$. Under this assumption:
\begin{enumerate}
\item the constraints 
$P_\Phi=0$, $P_v=0$, $E=-\dot A-\nabla \Phi$, 
$B=\nabla\times A$, 
$\nabla\cdot E=\rho[F]$, $\nabla\cdot B=0$
define smooth Banach submanifolds of $W$;
\item the kernel of $\Omega$ restricted to each constraint manifold 
$C_k$ has constant rank 
(particle--relabeling and gauge directions on $C_0$, 
pure gauge directions on $C_1$ and $C_\infty$);
\item the compatibility condition $dH(\ker\Omega)=0$ reproduces exactly 
the Gauss, magnetic, Vlasov, and Maxwell constraints, so no further 
obstructions to the existence of solutions appear.
\end{enumerate}

Therefore the Skinner--Rusk formulation of the Maxwell--Vlasov system is 
regular in the sense of 
\cite{ColomboMartindeDiegoZuccalli2010,ColomboDeDiego2014}, 
and the GNH sequence stabilizes after finitely many steps, yielding a 
well-defined presymplectic dynamical system on the final constraint 
manifold $C_\infty$.\hfill$\diamond$
\end{remark}

\begin{theorem}[SR formulation of the Maxwell--Vlasov equations]
\label{thm:SR-gives-MV}
Let $L$ be the Low Lagrangian on 
$Q=\mathrm{Diff}(T\mathbb{R}^3)\times \mathcal{V}\times\mathcal A$, and let 
$(W,\Omega,H)$ be the associated Skinner--Rusk system.  
Then the GNH algorithm stabilizes at a final constraint manifold 
$C_\infty\subset W$ such that the restriction of the SR equation 
\eqref{eq:SR-dynamics} to $C_\infty$ is equivalent to the full set of 
Maxwell--Vlasov equations:
\begin{align}
\partial_t F + \{F,H_{\mathrm{part}}\} &= 0,\\
\partial_t E - \nabla\times B &= -J[F],\\
\partial_t B + \nabla\times E &= 0,\\
\nabla\cdot E &= \rho[F],\qquad \nabla\cdot B = 0,
\end{align}
where the charge and current densities are
\[
\rho[F](x)=\int_{\mathbb{R}^3} F(x,v)\,dv,
\qquad
J[F](x)=\int_{\mathbb{R}^3} v\,F(x,v)\,dv,
\]
and $\{\cdot,\cdot\}$ is the canonical Poisson bracket on phase space,
\[
\{F,H_{\mathrm{part}}\}(x,v)
=
\nabla_x F\cdot\nabla_v H_{\mathrm{part}}
-
\nabla_v F\cdot\nabla_x H_{\mathrm{part}}.
\]
The single–particle Hamiltonian associated with the Low Lagrangian is
\[
H_{\mathrm{part}}(x,v;\Phi,A)
=
\frac{1}{2}|v|^2 + q\,\Phi(x) - q\,v\!\cdot\!A(x).
\]
\end{theorem}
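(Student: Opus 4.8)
The plan is to run the GNH algorithm explicitly on $(W,\Omega,H)$ and to match, step by step, each compatibility condition with one component of the Maxwell--Vlasov system, starting from the primary constraint manifold $C_0$ of Proposition~\ref{prop:C0-MaxwellVlasov}. First I would write the SR dynamics $i_X\Omega=\dd H$ in the split coordinates $(\psi,\Phi,A;\dot\psi,\dot\Phi,\dot A;P_\psi,P_\Phi,P_A)$. Because $\Omega=\operatorname{pr}_2^\ast\omega_{\mathrm{can}}$ pairs only configurations with momenta and annihilates every velocity direction, contracting a candidate $X$ against $\Omega$ and comparing with $\dd H=\langle\dd p,v\rangle+\langle p,\dd v\rangle-\tfrac{\delta L}{\delta q}\,\dd q-\tfrac{\delta L}{\delta v}\,\dd v$ produces three families of relations: the second-order condition $\dot q=v$ (in particular the reconstruction $\dot\psi\,\psi^{-1}=u$ and, together with $P_A=\dot A+\nabla\Phi$, the gauge relation $E=-\dot A-\nabla\Phi$), the force law $\dot p=\delta L/\delta q$, and the primary Legendre constraint $p=\delta L/\delta v$ already encoded in $C_0$. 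Crucially, the velocity slot $\dot v$ is left undetermined, since the velocity directions lie in $\ker\Omega$; this is exactly the degeneracy that the GNH tangency conditions must resolve.

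Next I would impose the tangency conditions that drive the GNH iterations. Requiring $X$ tangent to $C_0$ forces the primary constraints to be preserved: combining $\dot P_\Phi=0$ with the force law $\dot P_\Phi=\delta L/\delta\Phi$ gives $\delta L/\delta\Phi=0$, which I would evaluate as Gauss' law $\nabla\cdot E=\rho[F]$ with $\rho[F]=\int F\,\dd v$ arising from the coupling $\int Fq\Phi$. The two homogeneous Maxwell equations require no GNH input: $\nabla\cdot B=0$ is the identity $\nabla\cdot(\nabla\times A)=0$, and Faraday's law $\partial_t B+\nabla\times E=0$ follows by taking the curl of $E=-\dot A-\nabla\Phi$ and using $B=\nabla\times A$. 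The inhomogeneous law is dynamical: the force law for $P_A$, with $P_A=-E$ and $\delta L/\delta A$ built from the coupling $-q\int Fv\cdot A$ and the field term $-\tfrac12\int|\nabla\times A|^2$, yields the Ampère--Maxwell equation $\partial_t E-\nabla\times B=-J[F]$, with $J[F]=\int vF\,\dd v$ (up to the sign convention fixed by the charge).

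The subtler half is the particle sector on $\Diff(T\RR^3)$. Here I would use the push-forward representation $F=\psi_\ast F_0$ together with the reconstruction $\dot\psi\,\psi^{-1}=u$. Tangency of the degenerate particle constraint, combined with the force law $\dot P_\psi=\delta L/\delta\psi$, reproduces the single-particle characteristics $\dot x=\nabla_v H_{\mathrm{part}}$, $\dot v=-\nabla_x H_{\mathrm{part}}$ generated by $H_{\mathrm{part}}=\tfrac12|v|^2+q\Phi-q\,v\cdot A$; differentiating $F=\psi_\ast F_0$ and identifying the Eulerian generator with the Hamiltonian vector field of $H_{\mathrm{part}}$ then yields the Vlasov transport equation $\partial_t F+\{F,H_{\mathrm{part}}\}=0$. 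Finally I would check closure. The remaining tangency requirement is that of Gauss' law, $\partial_t(\nabla\cdot E-\rho[F])=0$; using $\nabla\cdot\partial_t E=-\nabla\cdot J[F]$ from Ampère--Maxwell and the continuity relation $\partial_t\rho[F]+\nabla\cdot J[F]=0$ implied by Vlasov, this holds identically, so no further constraint appears and the sequence stabilizes after two steps at $C_\infty$. Proposition~\ref{thm:GNH-final}, under the constant-rank hypotheses recorded in the regularity remark, then guarantees a solution $X\in\mathfrak X(C_\infty)$ unique modulo $\ker\Omega|_{C_\infty}$ (the residual gauge freedom in $\Phi$), whose integral curves are precisely the Maxwell--Vlasov trajectories.

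The main obstacle I anticipate lies in the particle sector on $\Diff(T\RR^3)$: one must compute $\delta L/\delta\psi$ as a one-form on the diffeomorphism group, convert the material (Lagrangian) force law into the Eulerian transport equation through the reconstruction relation and $F=\psi_\ast F_0$, and identify the GNH kernel directions with exactly the particle-relabeling and residual-gauge symmetries. Making these manipulations rigorous requires the $H^s$ regularity with $s>\tfrac32+1$ and the constant-rank property of the Legendre map invoked in the regularity remark, so that each $C_k$ is a genuine weak submanifold and Proposition~\ref{thm:GNH-final} applies; verifying that the constant rank persists across both iterations is where the technical weight of the argument concentrates.
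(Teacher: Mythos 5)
Your proposal is correct and follows essentially the same route as the paper's proof: write the SR equation in split coordinates to extract the relations $\dot q=v$, $p=\delta L/\delta v$, $\dot p=\delta L/\delta q$, start from the primary constraints of Proposition~\ref{prop:C0-MaxwellVlasov}, and run two GNH iterations whose compatibility and tangency conditions yield the reconstruction and gauge relations, Gauss' law, the Vlasov equation, and the Maxwell equations, with stabilization at $C_\infty=C_2$ under the constant-rank hypotheses. The only deviations are bookkeeping-level refinements rather than a different method: you note that $\nabla\cdot B=0$ and Faraday's law are identities of the potential formulation rather than independent GNH outputs, and you verify closure explicitly via the charge-continuity relation implied by the Vlasov equation, where the paper simply asserts that no further constraints arise.
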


\begin{proof}
We work on the phase space $W = TQ \oplus T^*Q$ with coordinates
$(q,v,p) = (\psi,\Phi,A;\dot\psi,\dot\Phi,\dot A;P_\psi,P_\Phi,P_A)$. By definition, the presymplectic form is the pullback of the canonical
symplectic form on $T^*Q$, $\Omega = \mathrm{pr}_2^*\omega_{\mathrm{can}}$, so for tangent vectors 
$Y_1=(\delta q_1,\delta v_1,\delta p_1)$ and 
$Y_2=(\delta q_2,\delta v_2,\delta p_2)$ one has $\Omega(Y_1,Y_2)
=
\langle \delta q_1,\delta p_2\rangle
-
\langle \delta q_2,\delta p_1\rangle$. In particular, if 
$X=(\dot q,\dot v,\dot p)$ is a vector field on $W$, the condition
$i_X\Omega = \dd H$ is equivalent to
\begin{equation}
\label{eq:SR-local}
\Omega(X,Y) = \dd H(Y)
\qquad\text{for all tangent vectors }Y=(\delta q,\delta v,\delta p).
\end{equation}

The Skinner--Rusk generalized energy is $H(q,v,p) = \langle p,v\rangle - L(q,v)$, so its differential along $Y=(\delta q,\delta v,\delta p)$ is $\dd H(Y)
=
\langle \delta p, v\rangle
+
\langle p,\delta v\rangle
-
\Big\langle \frac{\delta L}{\delta q},\delta q\Big\rangle
-
\Big\langle \frac{\delta L}{\delta v},\delta v\Big\rangle$. On the other hand, $\Omega(X,Y)
=
\langle \dot q,\delta p\rangle
-
\langle \delta q,\dot p\rangle$. Imposing \eqref{eq:SR-local} for arbitrary $(\delta q,\delta v,\delta p)$
gives, by comparing the coefficients of $\delta p$, $\delta v$ and $\delta
q$:
\begin{align}
\dot q &= v, \label{eq:SR-q-evolution}\\
p     &= \frac{\delta L}{\delta v}, \label{eq:SR-momenta}\\
\dot p &= \frac{\delta L}{\delta q}. \label{eq:SR-Euler-Lagrange}
\end{align}
Equations \eqref{eq:SR-q-evolution}--\eqref{eq:SR-Euler-Lagrange} are the
unified SR form of the Euler--Lagrange equations for the Lagrangian $L$,
together with the (generalized) Legendre relations.

For the Low Lagrangian on 
$Q=\mathrm{Diff}(T\mathbb{R}^3)\times \mathcal{V}\times\mathcal A$ we have seen in
Proposition~\ref{prop:Low-degenerate} that
\[
\frac{\delta L}{\delta \dot\Phi}=0,
\qquad
\frac{\delta L}{\delta \dot v}=0,
\]
so the corresponding momenta $P_\Phi$ and $P_v$ vanish on the primary
constraint manifold. This is precisely the description of $C_0$ given in
Proposition~\ref{prop:C0-MaxwellVlasov}. Restricting 
\eqref{eq:SR-q-evolution}--\eqref{eq:SR-Euler-Lagrange} to $C_0$ yields
the evolution equations for $(\psi,\Phi,A)$ together with the relations
between $(P_\psi,P_A)$ and $(\dot\psi,\dot A)$.

On $C_0$ the kernel of $\Omega$ is spanned by infinitesimal
particle-relabeling transformations (acting on $\psi$) and by gauge
transformations of the potentials $(\Phi,A)$. Requiring that $\dd H$
annihilate this kernel gives the first set of compatibility conditions:
the reconstruction relation $\dot\psi\,\psi^{-1}=u$, which identifies
$\dot\psi$ with the Eulerian phase-space velocity field $u$, and the gauge
relation $E=-\,\dot A-\nabla\Phi$, which expresses the electric field in
terms of $(\Phi,A)$. These conditions define the first constraint
manifold $C_1$.

The second step for the GNH algorithm requires that the vector field $X$ be tangent to
$C_1$. Tangency to the reconstruction relation implies that the push-forward
representation $F=\psi_*F_0$ satisfies
\[
\partial_t F + \nabla_x\cdot(F u_x) + \nabla_v\cdot(F u_v) = 0,
\]
where $(u_x,u_v)$ is the phase-space velocity induced by the particle
Hamiltonian $H_{\mathrm{part}}(x,v;\Phi,A)
= \frac{1}{2}|v|^2 + q\,\Phi(x) - q\,v\cdot A(x)$ through the canonical
Hamilton equations
\[
u_x = \nabla_v H_{\mathrm{part}},\qquad
u_v = -\,\nabla_x H_{\mathrm{part}}.
\]
In Hamiltonian form this is exactly
\[
\partial_t F + \{F,H_{\mathrm{part}}\} = 0.
\]

Tangency to the electromagnetic constraint $E=-\,\dot A-\nabla\Phi$ and to
the magnetic relation $B=\nabla\times A$ yields, using
\eqref{eq:SR-Euler-Lagrange} for the electromagnetic momenta,
\[
\partial_t B + \nabla\times E = 0,
\qquad
\partial_t E - \nabla\times B = -\,J[F],
\]
where $J[F](x)=\int v F(x,v)\,dv$ is the current density appearing in the
source terms of the Maxwell equations. Preservation in time of the
constraints generated by the degeneracy in $\dot\Phi$ gives Gauss' law
\[
\nabla\cdot E = \rho[F],\qquad
\rho[F](x)=\int F(x,v)\,dv,
\]
while preservation of the magnetic constraint yields $\nabla\cdot B=0$.

No new compatibility conditions arise at higher iterations: once the
Vlasov equation, Maxwell equations and Gauss/divergence constraints are
imposed, the dynamics automatically preserves all constraints. Therefore
the GNH sequence stabilizes at $C_\infty=C_2$, and the restriction of the
SR equation $i_X\Omega=\dd H$ to $C_\infty$ is equivalent to the full
Maxwell--Vlasov system written in terms of $(F,E,B)$. This recovers, in
the SR/GNH setting, the constraint structure identified in the Dirac and
CHHM analyses \cite{CendraHolmHoyleMarsden1998}.
\end{proof}

\begin{remark}
In the GNH construction underlying Theorem~\ref{thm:SR-gives-MV}, the 
constraint manifolds form a finite descending chain
\[
C_0 \supset C_1 \supset C_2 = C_\infty,
\]
where $C_0$ encodes the Legendre degeneracy and gauge constraints, $C_1$
implements the reconstruction and electromagnetic relations, and $C_2$
imposes Gauss' law, the magnetic constraint and the Vlasov equation. This
hierarchy provides a clean geometric separation between purely kinematic
constraints and dynamically generated ones in the Maxwell--Vlasov system.\hfill$\diamond$
\end{remark}

The preceding analysis shows that the Low Lagrangian, when embedded in the
Skinner--Rusk framework and treated with the GNH algorithm, reproduces the
full Maxwell--Vlasov system on a final constraint manifold $C_\infty$
equipped with a natural presymplectic structure. In the next section we
exploit the invariance of this construction under particle-relabeling
transformations to perform a symmetry reduction by 
$\mathrm{Diff}(T\mathbb{R}^3)$. This yields, in a unified way, both the
Euler--Poincaré equations for the Vlasov sector and the Lie--Poisson
Hamiltonian structure underlying the classical Maxwell--Vlasov brackets.

\section{Reduction of the Skinner--Rusk Maxwell--Vlasov System}

We now describe how the Skinner--Rusk (SR) formulation of the 
Maxwell--Vlasov system admits a natural symmetry reduction under the 
action of the phase--space diffeomorphism group.  
Unlike the finite--dimensional setting, reduction must be performed in the 
category of weak Banach (or Hilbert) manifolds, and particular care is 
required regarding smoothness and regularity of the group action.  
The construction below follows the functional–analytic framework developed
in \cite{EbinMarsden1970,CendraHolmHoyleMarsden1998}.

Let $s>\frac32+1$ and consider the Sobolev manifold $G = \Diff^s(T\mathbb{R}^3)$, the group of $H^s$–diffeomorphisms of phase space.   By the classical Ebin–Marsden theorem \cite{EbinMarsden1970}, $G$ is a 
smooth Hilbert manifold and a topological group; composition and 
inversion are $C^\infty$ maps in the Hilbert sense.  
We take the configuration manifold
\[
Q = \Diff^s(T\mathbb{R}^3)\times \mathcal{V}^s\times \mathcal{A}^s,
\]
where $(\Phi,A)\in(\mathcal V^s,\mathcal A^s)$ are Sobolev potentials.

The action of $G$ on $Q$ is the relabeling (right) action
\[
g\cdot(\psi,\Phi,A) = (\psi\circ g^{-1},\, \Phi,\, A),
\]
which is the natural one for Vlasov-type systems and matches the
CHHM convention \cite{CendraHolmHoyleMarsden1998}.

This action is smooth, free and admits local slices, hence it is 
weakly proper in the sense of \cite{EbinMarsden1970}.  
It lifts smoothly to $TQ$, $T^*Q$, and therefore to the weak 
manifold $W = TQ\oplus T^*Q$.

\begin{remark}
In Sections~2--4 we work formally with smooth potentials 
$(\Phi,A)\in\mathcal{V}\times\mathcal{A}$, where 
$\mathcal{V}$ denotes the space of scalar potentials and 
$\mathcal{A}$ the space of vector potentials.  
For the rigorous analytic construction in the SR--GNH framework, 
and following the standard approach of Ebin--Marsden 
\cite{EbinMarsden1970}, we later replace these spaces by their Sobolev 
counterparts $\mathcal{V}^s$ and $\mathcal{A}^s$ with $s>\frac32+1$.  
This ensures that the particle--relabeling group 
$\Diff^s(T\mathbb{R}^3)$ is a smooth Hilbert manifold and that all 
maps in the SR construction (including the Low Lagrangian, the canonical 
pairings, and the presymplectic form) are well-defined and $C^\infty$ in 
the Hilbert sense.  No conceptual change is introduced: the Sobolev 
setting is merely a technically convenient completion of the smooth 
configuration manifold.
\hfill$\diamond$
\end{remark}

\begin{lemma}\label{prop:SR-invariance}
Let $G = \Diff^s(T\mathbb{R}^3)$ acting on the configuration manifold 
$Q = \Diff^s(T\mathbb{R}^3)\times \mathcal V^s\times\mathcal A^s$.
 by 
$g\cdot(\psi,\Phi,A) = (\psi\circ g^{-1},\;\Phi,\;A)$ for $g\in G$,
and let this action lift to $TQ$ and $T^\ast Q$ by the canonical tangent
and cotangent lifts. 

Let $L:TQ\to\mathbb{R}$ be the Low Lagrangian, written in Lagrangian variables
$(\psi,\Phi,A;\dot\psi,\dot\Phi,\dot A)$, and assume $s>\tfrac32+1$ so
that all spaces are smooth Hilbert manifolds and the $G$-action is smooth,
free and proper. If $L$ is $G$--invariant when expressed in terms of the advected density
$F=\psi_*F_0$ and the electromagnetic potentials $(\Phi,A)$, then:
\begin{enumerate}
\item The Hamiltonian $H(q,v,p)=\langle p,v\rangle - L(q,v),\, (q,v,p)\in W$, is $G$-invariant as a map $H:W\to\mathbb{R}$.
\item The weak presymplectic form 
$\Omega=\operatorname{pr}_2^\ast\omega_{\mathrm{can}}$ on $W$ is $G$-invariant.  
Moreover, if $J:T^\ast Q\to\mathfrak g^\ast$ denotes the momentum map of
the cotangent-lifted action, then for every $\xi\in\mathfrak g$ one has
\[
\iota_{\xi_W}\Omega
=
\dd\big( \langle J\circ\operatorname{pr}_2,\xi\rangle \big),
\]
where $\xi_W$ is the fundamental vector field on $W$ generated by $\xi$.
In particular, the restriction of $\Omega$ to any momentum level set
$(J\circ\operatorname{pr}_2)^{-1}(\mu)\subset W$ is horizontal with respect to the
$G$-action and hence defines a basic weak two-form on the reduced space.
\end{enumerate}
Consequently, the system $(W,\Omega,H)$ admits a well-defined
symmetry reduction by $G$ in the category of weak Hilbert manifolds, in the
sense of Marsden--Weinstein–type reduction on the appropriate momentum
level sets.
\end{lemma}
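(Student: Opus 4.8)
The plan is to reduce the two assertions to standard properties of tangent and cotangent lifts, and to isolate the genuinely infinite-dimensional content in the final regularity claim. For assertion (1) I would first observe that $H$ splits as the difference of the tautological pairing $\langle p,v\rangle$ and $L$. Writing $\Phi_g:Q\to Q$ for the base map $\psi\mapsto\psi\circ g^{-1}$ (and the identity on the $(\Phi,A)$ factors), the lift to $W=TQ\oplus T^\ast Q$ acts by the tangent lift $T\Phi_g$ on the velocity slot and by its cotangent lift $\hat\Phi_g$ on the momentum slot. The defining property of the cotangent lift is that it is dual to $T\Phi_g$, so $\langle \hat\Phi_g p,\,T\Phi_g v\rangle=\langle p,v\rangle$ pointwise, and the pairing term is $G$-invariant with no computation. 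The invariance of $L$ is the hypothesis: the particle sector depends on $\psi$ only through the advected density $F=\psi_\ast F_0$ and the field sector only through $(\Phi,A)$, so relabeling leaves $L$ unchanged. Combining the two gives $H\circ\Phi_g^W=H$.

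For assertion (2), invariance of $\Omega$ follows from equivariance of $\operatorname{pr}_2$. The cotangent lift preserves the canonical one-form $\theta_{\mathrm{can}}$ and hence $\omega_{\mathrm{can}}=-\dd\theta_{\mathrm{can}}$; since $\operatorname{pr}_2\circ\Phi_g^W=\hat\Phi_g\circ\operatorname{pr}_2$, I would pull back to obtain $(\Phi_g^W)^\ast\Omega=\operatorname{pr}_2^\ast\hat\Phi_g^\ast\omega_{\mathrm{can}}=\operatorname{pr}_2^\ast\omega_{\mathrm{can}}=\Omega$. For the momentum identity, recall that the cotangent-lifted action carries the standard equivariant momentum map $J:T^\ast Q\to\mathfrak g^\ast$, $\langle J(\alpha_q),\xi\rangle=\langle\alpha_q,\xi_Q(q)\rangle$, satisfying $\iota_{\xi_{T^\ast Q}}\omega_{\mathrm{can}}=\dd\langle J,\xi\rangle$. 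Equivariance of $\operatorname{pr}_2$ makes $\xi_W$ and $\xi_{T^\ast Q}$ $\operatorname{pr}_2$-related, so the interior product commutes with pullback along $\operatorname{pr}_2$; applying this and then $\dd\circ\operatorname{pr}_2^\ast=\operatorname{pr}_2^\ast\circ\dd$ yields
\[
\iota_{\xi_W}\Omega=\operatorname{pr}_2^\ast\bigl(\iota_{\xi_{T^\ast Q}}\omega_{\mathrm{can}}\bigr)=\operatorname{pr}_2^\ast\,\dd\langle J,\xi\rangle=\dd\langle J\circ\operatorname{pr}_2,\xi\rangle,
\]
which is the asserted relation.

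To obtain horizontality I would restrict to a level set $(J\circ\operatorname{pr}_2)^{-1}(\mu)$ and take $\xi\in\mathfrak g_\mu$: the function $\langle J\circ\operatorname{pr}_2,\xi\rangle\equiv\langle\mu,\xi\rangle$ is then constant along the level set, so $\iota_{\xi_W}\Omega$ annihilates its tangent vectors and the restricted form is basic for the isotropy action, hence descends. The concluding reduction statement then assembles these pieces using the smooth, free, proper action: $H$ descends because it is $G$-invariant, the restricted $\Omega$ descends because it is basic, and the quotient $(J\circ\operatorname{pr}_2)^{-1}(\mu)/G_\mu$ inherits a reduced presymplectic system, which is precisely the Marsden--Weinstein-type reduction claimed. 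Equivariance of $J$ (automatic for cotangent lifts) guarantees that $G_\mu$ acts on the level set so that this quotient is meaningful.

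The main obstacle is entirely functional-analytic rather than algebraic. In the weak Hilbert category one must verify that $J$ is a smooth map into an admissible (smooth) dual $\mathfrak g^\ast$, that the level sets $(J\circ\operatorname{pr}_2)^{-1}(\mu)$ are weak submanifolds — which requires $J$ to have closed split range, or at least constant Banach rank, since non-reflexivity forbids automatic transversality — and that the quotient is again a smooth weak Hilbert manifold. For the last point I would lean on the Ebin--Marsden slice theorem for $\Diff^s(T\mathbb{R}^3)$ with $s>\tfrac32+1$, which supplies local slices and hence a manifold structure on the quotient; the constant-rank and splitting hypotheses are exactly the regularity conditions already invoked for the GNH algorithm, so the reduction inherits its well-posedness from them.
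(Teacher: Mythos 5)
Your proposal is correct and follows essentially the same route as the paper's proof: invariance of the pairing via duality of the tangent and cotangent lifts, invariance of $L$ by hypothesis, the identity $\iota_{\xi_W}\Omega = \dd\langle J\circ\operatorname{pr}_2,\xi\rangle$ via $\operatorname{pr}_2$-relatedness of fundamental vector fields, and horizontality on level sets from constancy of $\langle J\circ\operatorname{pr}_2,\xi\rangle$. Your version is in fact slightly more careful than the paper's on two points: you quotient the level set by the isotropy subgroup $G_\mu$ rather than by all of $G$ (the paper writes $\mathcal{C}_\mu/G$, which is only literally correct for $\mu=0$ or coadjoint-invariant $\mu$ --- precisely the case $\mu=0$ the paper actually uses), and you explicitly flag the functional-analytic prerequisites (smoothness of $J$, split/constant-rank structure of the level sets, the Ebin--Marsden slice theorem) that the paper delegates to citations.
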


\begin{proof}

By hypothesis, $L$ depends on $\psi$ only through the advected density
$F=\psi_*F_0$ and on $(\Phi,A)$ as fields on physical space. The action of
$G$ on $\psi$ is by relabeling of the reference phase space $(g\cdot\psi)(z) = \psi(g^{-1}z)$, with $z\in T\mathbb{R}^3$.

If $F_0$ is chosen to be $G$-invariant (as in the usual CHHM setting \cite{CendraHolmHoyleMarsden1998},
e.g.\ a reference density on phase space), then
\[
(g\cdot\psi)_*F_0 = \psi_* (g^{-1}_*F_0) = \psi_*F_0 = F,
\]
so the advected density is unchanged. Since the action leaves $(\Phi,A)$
inert, it follows that $L$ is $G$-invariant, that is, 
\[
L\bigl(g\cdot(q,v)\bigr) = L(q,v),\qquad (q,v)\in TQ,\ g\in G.
\]

The action of $G$ on $T^\ast Q$ is the canonical cotangent lift of the
action on $Q$ given by $g\cdot(q,p) = \bigl(g\cdot q,\ (T_q g^{-1})^\ast p\bigr)$. The canonical pairing satisfies $\langle (T_q g^{-1})^\ast p,\; T_q g\cdot v\rangle 
= \langle p, v\rangle$, so the term $\langle p,v\rangle$ is $G$-invariant. Since $L$ is
$G$-invariant, the Hamiltonian
$H(q,v,p)=\langle p,v\rangle-L(q,v)$ is also invariant, that is,  $H\bigl(g\cdot(q,v,p)\bigr) = H(q,v,p)$ $\forall\,g\in G$.

Next, note that on $T^\ast Q$ the cotangent-lifted $G$-action is symplectic, i.e.
\[
(g^\ast\omega_{\mathrm{can}})_{(q,p)} = \omega_{\mathrm{can}}{}_{(q,p)}
\qquad\forall\,g\in G,
\]
so $\omega_{\mathrm{can}}$ is $G$-invariant. The fundamental vector field
$\xi_{T^\ast Q}$ associated with $\xi\in\mathfrak g$ is Hamiltonian with
Hamiltonian function $\langle J,\xi\rangle$, where $J:T^\ast Q\to\mathfrak g^\ast$, with 
$\langle J(q,p),\xi\rangle = \langle p, \xi_Q(q)\rangle$
is the standard momentum map. Thus
\[
\iota_{\xi_{T^\ast Q}}\omega_{\mathrm{can}}
=
\dd\langle J,\xi\rangle.
\]

On the space $W$ we consider the product
action $g\cdot(q,v,p) = (g\cdot q,\;g\cdot v,\;g\cdot p)$, whose projection on the second factor is exactly the cotangent lift on
$T^\ast Q$. The presymplectic form is $\Omega=\operatorname{pr}_2^\ast\omega_{\mathrm{can}}$,
so for the fundamental vector field $\xi_W$ induced by $\xi\in\mathfrak g$
we have
\[
\iota_{\xi_W}\Omega
=
\iota_{\xi_W}\operatorname{pr}_2^\ast\omega_{\mathrm{can}}
=
\operatorname{pr}_2^\ast\bigl(\iota_{\xi_{T^\ast Q}}\omega_{\mathrm{can}}\bigr)
=
\operatorname{pr}_2^\ast\dd\langle J,\xi\rangle
=
\dd\bigl(\langle J\circ\operatorname{pr}_2,\xi\rangle\bigr).
\]
In particular, $\Omega$ is $G$-invariant (as a pullback of a
$G$--invariant form) and its contraction with any vertical vector 
$\xi_W$ is exact. Therefore, on each momentum level set
\[
\mathcal C_\mu := (J\circ\operatorname{pr}_2)^{-1}(\mu)\subset W,
\]
we have $\iota_{\xi_W}\Omega|_{\mathcal C_\mu}=0$, so the restriction
$\Omega|_{\mathcal C_\mu}$ is horizontal and $G$-invariant. That is,
$\Omega|_{\mathcal C_\mu}$ is a basic weak two-form and descends to a
well-defined presymplectic form on the reduced space 
$\mathcal C_\mu/G$ in the sense of 
Marsden--Weinstein reduction \cite{MarsdenWeinstein1974}, \cite{OrtegaRatiu2004}.
\end{proof}

For each value $\mu\in\mathfrak g^\ast$, consider the momentum level set $\mathcal C_\mu := (J\circ\operatorname{pr}_2)^{-1}(\mu)\subset W$. By Lemma~\ref{prop:SR-invariance}, the restriction of $\Omega$ to 
$\mathcal C_\mu$ is $G$-invariant and horizontal, so it is a basic weak
two-form. Since the $G=\Diff^s(T\mathbb{R}^3)$ action is free, proper, and smooth for $s>\tfrac32+1$
\cite{EbinMarsden1970,OrtegaRatiu2004}, the quotient
\[
W_{\mathrm{red}}^\mu := \mathcal C_\mu/G
\]
is a weak Hilbert manifold. Let $\pi_\mu : \mathcal C_\mu \to W_{\mathrm{red}}^\mu$ denote the orbit projection; it is a smooth weak submersion.

A choice of local section of the bundle $Q \to Q/G$ allows us to introduce
explicit reduced coordinates.  
Since the relabeling group is $G=\Diff^s(T\mathbb{R}^3)$, with
$s>\tfrac32+1$, its Lie algebra is $\mathfrak g^s = \mathfrak X^s(T\mathbb{R}^3)$, the Sobolev $H^s$ space of vector fields on phase space.
The dual space $(\mathfrak g^s)^\ast$ is identified, via the $L^2$ pairing,
\[
\langle \mu , u\rangle
  = \int_{T\mathbb{R}^3} \mu\cdot u \, dx\,dv,
  \qquad u\in\mathfrak g^s,
\]
with the space of $H^{s-1}$ one–form densities on $T\mathbb{R}^3$,  $(\mathfrak g^s)^\ast 
\simeq 
H^{s-1}\!\left(T^\ast(T\mathbb{R}^3)\otimes \mathrm{Dens}(T\mathbb{R}^3)\right)$.

For notational simplicity, we fix a momentum value $\mu$ (typically $\mu=0$
for the particle-relabeling symmetry) and denote the reduced space
$W_{\mathrm{red}}^\mu$ simply by $W_{\mathrm{red}}$, and the projection
$\pi_\mu$ by $\pi_W$.

Under these identifications, the orbit projection 
$\pi_W \equiv \pi_\mu : \mathcal C_\mu \to W_{\mathrm{red}}$ 
provides reduced coordinates $(u,F,A,\Phi;\mu,\alpha)$, where 
$u = \dot\psi \circ \psi^{-1} \in \mathfrak g^s$ is the Eulerian 
phase–space velocity obtained from the Lagrangian 
velocity $\dot\psi$ via right translation on $\Diff^s(T\mathbb{R}^3)$; 
$F = \psi_* F_0$ is the advected distribution function (push–forward of 
the reference density $F_0$); $\mu \in (\mathfrak g^s)^\ast$ is the reduced 
momentum obtained from the covariant momentum $P_\psi$ by
\[
\langle \mu , \delta\psi\,\psi^{-1} \rangle
  = \langle P_\psi , \delta\psi \rangle,
\]
so that, in coordinates, $\mu = P_\psi\circ T\psi$; and $(A,\Phi)$ are the 
electromagnetic potentials, and 
$\alpha=(P_A,P_\Phi)\in T^\ast(\mathcal A\times\mathcal V)$ are their
canonical momenta. These variables are inert under the $G$–action.

With these conventions, the projection $\pi_W$ takes the explicit form
\[
\pi_W(\psi,\Phi,A;\dot\psi,\dot\Phi,\dot A;P_\psi,P_\Phi,P_A)
=
\bigl( u , F , A , \Phi ; \mu , \alpha \bigr),
\]
where the reduced variables $(u,F,\mu)$ encode the geometric content of
particle–relabeling symmetry, and $(A,\Phi;\alpha)$ describe the
electromagnetic sector unaffected by the action of $G$.

\begin{proposition}\label{prop:Omega-red}
Under the hypotheses of Lemma~\ref{prop:SR-invariance},
the restriction of $\Omega$ to each momentum level set 
$\mathcal C_\mu$ is basic with respect to the $G$-action.
Consequently, for each $\mu\in\mathfrak g^\ast$ there exists a unique weak 
presymplectic form $\Omega_{\mathrm{red}}^\mu$ on $W_{\mathrm{red}}^\mu$
such that
\begin{equation}
\label{eq:Omega-red-def}
\pi_\mu^\ast \Omega_{\mathrm{red}}^\mu = \Omega|_{\mathcal C_\mu}.
\end{equation}
Fixing $\mu$ and writing $W_{\mathrm{red}}$ and 
$\Omega_{\mathrm{red}}$ for $W_{\mathrm{red}}^\mu$ and $\Omega_{\mathrm{red}}^\mu$,
respectively, we obtain in reduced coordinates $(u,F,A,\Phi;\mu,\alpha)$
the splitting
\[
\Omega_{\mathrm{red}} 
  = \Omega_{\mathrm{can}}
  \;+\;
    \Omega_{\mathrm{LP}},
\] where $\Omega_{\mathrm{can}}$ is the canonical contribution from the
electromagnetic potentials and 
$\Omega_{\mathrm{LP}}$ is the Lie--Poisson part:
\[
\Omega_{\mathrm{LP}}
\bigl(
(\delta u_1,\delta\mu_1),(\delta u_2,\delta\mu_2)
\bigr)
=
\langle \delta u_1, \delta \mu_2\rangle 
- \langle \delta u_2, \delta \mu_1\rangle
+ \langle \mu , [\delta u_1,\delta u_2] \rangle.
\]
\end{proposition}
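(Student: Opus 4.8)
The plan is to separate the \emph{soft} descent argument from the genuine computation on the relabeling factor. For the descent, everything needed is already contained in Lemma~\ref{prop:SR-invariance}. Since $\iota_{\xi_W}\Omega=\dd\big(\langle J\circ\operatorname{pr}_2,\xi\rangle\big)$ and $J\circ\operatorname{pr}_2$ is constant (equal to $\mu$) along $\mathcal C_\mu$, the one-form $\dd\langle J\circ\operatorname{pr}_2,\xi\rangle$ annihilates $T\mathcal C_\mu$; as the fundamental vector fields $\xi_W$ are tangent to $\mathcal C_\mu$, this gives $\iota_{\xi_W}(\Omega|_{\mathcal C_\mu})=0$, i.e.\ $\Omega|_{\mathcal C_\mu}$ is horizontal, and combined with the $G$-invariance of $\Omega$ it is basic. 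Because the action is free and proper for $s>\tfrac32+1$, the orbit projection $\pi_\mu$ is a surjective weak submersion whose fibres are the connected $G$-orbits; along such a submersion $\pi_\mu^\ast$ is injective on differential forms with image exactly the basic forms, so there is a \emph{unique} weak two-form $\Omega_{\mathrm{red}}^\mu$ on $W_{\mathrm{red}}^\mu$ with $\pi_\mu^\ast\Omega_{\mathrm{red}}^\mu=\Omega|_{\mathcal C_\mu}$, and it is automatically closed since $\Omega$ is.

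Next I would exploit the product structure $Q=\Diff^s(T\mathbb{R}^3)\times\mathcal V^s\times\mathcal A^s$. This splits $T^\ast Q$ as a direct product and the canonical form as $\omega_{\mathrm{can}}=\omega_{\mathrm{can}}^G\oplus\omega_{\mathrm{can}}^{\mathrm{EM}}$, whence $\Omega=\operatorname{pr}_2^\ast\omega_{\mathrm{can}}^G\oplus\operatorname{pr}_2^\ast\omega_{\mathrm{can}}^{\mathrm{EM}}$. The relabeling action is trivial on the electromagnetic factor and acts only on the $\Diff^s$ factor, so the two summands reduce independently. On the electromagnetic sector the quotient is the identity and the reduced form is just the canonical form $\Omega_{\mathrm{can}}$ in the conjugate pairs $(A,\Phi;\,\alpha=(P_A,P_\Phi))$; all of the group-theoretic content is carried by the $\Diff^s$ summand, which must reduce to $\Omega_{\mathrm{LP}}$.

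The heart of the argument is the right-trivialized cotangent reduction of the $\Diff^s(T\mathbb{R}^3)$ factor. Using the right trivialization $T^\ast G\cong G\times\mathfrak g^\ast$, $(\psi,P_\psi)\mapsto(\psi,\mu)$ with $\mu$ determined by $\langle\mu,\delta\psi\,\psi^{-1}\rangle=\langle P_\psi,\delta\psi\rangle$ exactly as in the reduced coordinates fixed above, the canonical one-form pulls back to $\theta_{\mathrm{can}}=\langle\mu,\vartheta\rangle$, where $\vartheta_\psi(\delta\psi)=\delta\psi\,\psi^{-1}$ is the right Maurer--Cartan form, so its value on a tangent vector is $\langle\mu,\delta u\rangle$ with $\delta u=\delta\psi\,\psi^{-1}\in\mathfrak g$. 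Writing $\Omega_{\mathrm{LP}}=-\dd\theta_{\mathrm{can}}$ and applying the Leibniz rule, the terms in which $\dd$ falls on $\mu$ produce the pairing piece $\langle\delta u_1,\delta\mu_2\rangle-\langle\delta u_2,\delta\mu_1\rangle$, while the term in which $\dd$ falls on $\vartheta$ produces, through the structure equation $\dd\vartheta=\tfrac12[\vartheta,\vartheta]$ for the right logarithmic derivative, exactly the bracket contribution $\langle\mu,[\delta u_1,\delta u_2]\rangle$. Since the resulting expression is manifestly independent of $\psi$, it is basic and descends; comparison with \eqref{eq:Omega-red-def} identifies it as the reduced form on the $G$-summand, and adding the two summands yields $\Omega_{\mathrm{red}}=\Omega_{\mathrm{can}}+\Omega_{\mathrm{LP}}$.

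The main obstacle is not conceptual but lies in the infinite-dimensional bookkeeping. One must ensure that the right trivialization and the Maurer--Cartan computation are legitimate in the weak Hilbert category: on $\Diff^s(T\mathbb{R}^3)$ right translations are smooth (Ebin--Marsden), so $\delta u=\delta\psi\,\psi^{-1}$ and the structure equation are well defined, whereas left translation and inversion lose derivatives and must be avoided; the constant-rank and free/proper hypotheses inherited from Lemma~\ref{prop:SR-invariance} are precisely what guarantee that the quotient is a genuine weak manifold and that $\pi_\mu^\ast$ has the required injectivity. The one delicate point of sign is the convention for the Lie bracket on $\mathfrak g^s=\mathfrak X^s(T\mathbb{R}^3)$, which differs from the Jacobi--Lie bracket of vector fields by a sign; this is exactly what fixes the $+\langle\mu,[\delta u_1,\delta u_2]\rangle$ sign in $\Omega_{\mathrm{LP}}$ and must be tracked consistently with the right-action convention used for relabeling. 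Finally, I would record the standing caveat that for $\mu\neq0$ strict Marsden--Weinstein reduction quotients by the isotropy group $G_\mu$; the full-$G$ quotient used here is the Poisson-reduction picture, relevant at $\mu=0$ for particle relabeling, in which $\mu$ survives as the Lie--Poisson coordinate carrying the coadjoint structure.
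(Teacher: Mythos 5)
Your proof is correct, and its first half (horizontality of $\Omega|_{\mathcal C_\mu}$ from constancy of $\langle J\circ\operatorname{pr}_2,\xi\rangle$ on the level set, basicness from invariance plus horizontality, existence and uniqueness of $\Omega_{\mathrm{red}}^\mu$ from the free and proper action and injectivity of $\pi_\mu^\ast$ on forms) coincides step by step with the paper's argument. Where you genuinely diverge is in establishing the splitting $\Omega_{\mathrm{red}}=\Omega_{\mathrm{can}}+\Omega_{\mathrm{LP}}$: the paper invokes the identifications $T\Diff^s/G\simeq\mathfrak g^s$ and $T^\ast\Diff^s/G\simeq(\mathfrak g^s)^\ast$ and then simply asserts that the reduced form on the $(u,\mu)$ sector is ``the standard Lie--Poisson $2$-form,'' whereas you derive that formula: right-trivializing $T^\ast\Diff^s$, writing $\theta_{\mathrm{can}}=\langle\mu,\vartheta\rangle$ with $\vartheta$ the right Maurer--Cartan form, and expanding $-\dd\theta_{\mathrm{can}}$ with the structure equation $\dd\vartheta=\tfrac12[\vartheta,\vartheta]$ to produce the two pairing terms and the $\langle\mu,[\delta u_1,\delta u_2]\rangle$ term. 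This buys a self-contained verification of precisely the claim the paper only cites, and it forces the sign bookkeeping (group bracket versus Jacobi--Lie bracket under a right action) into the open, which the paper suppresses. Your closing caveat is also well taken and in fact flags a small imprecision in the statement as printed: by equivariance of $J$ one has $g\cdot\mathcal C_\mu=\mathcal C_{\operatorname{Ad}^\ast_{g^{-1}}\mu}$, so for $\mu\neq 0$ the level set is preserved only by the isotropy subgroup $G_\mu$, and the full-$G$ quotient (and the tangency of all fundamental fields $\xi_W$ to $\mathcal C_\mu$) is literally correct only at $\mu=0$ --- which is the case the paper actually uses for the relabeling symmetry.
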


\begin{proof}
Lemma~\ref{prop:SR-invariance} shows that 
$\Omega|_{\mathcal C_\mu}$ is $G$–invariant and that $\forall\,\xi\in\mathfrak g$
\[
\iota_{\xi_W}\Omega = \dd\langle J\circ\operatorname{pr}_2 , \xi\rangle.
\]
Since $\mathcal C_\mu = (J\circ\operatorname{pr}_2)^{-1}(\mu)$ is a level
set of the momentum map, the function
$\langle J\circ\operatorname{pr}_2 , \xi\rangle$
is constant on $\mathcal C_\mu$. Therefore, $\dd\langle J\circ\operatorname{pr}_2 , \xi\rangle|_{\mathcal C_\mu} = 0$, and hence $\forall\,\xi\in\mathfrak g$
\[
\iota_{\xi_W}\Omega|_{\mathcal C_\mu} = 0,
\]
showing that $\Omega|_{\mathcal C_\mu}$ annihilates all vertical vectors
of the $G$–action. Combined with $G$–invariance, this proves that
$\Omega|_{\mathcal C_\mu}$ is a \emph{basic} weak two–form in the sense of
the Hilbert–Sobolev setting of Ebin–Marsden.

Since the action is smooth, free and proper,
$\pi_\mu:\mathcal C_\mu\to W_{\mathrm{red}}^\mu$ is a smooth principal
$G$–bundle and, by the standard result on basic forms
(\cite[Thm.~2.3.4]{OrtegaRatiu2004}), there exists a unique weak
two–form $\Omega_{\mathrm{red}}^\mu$ on $W_{\mathrm{red}}^\mu$ satisfying
\[
\pi_\mu^\ast\Omega_{\mathrm{red}}^\mu
  = \Omega|_{\mathcal C_\mu}.
\]

To compute its expression, we use the identifications
$T\Diff^s/G\simeq\mathfrak g^s$ and $T^\ast\Diff^s/G\simeq(\mathfrak g^s)^\ast$.
Under these identifications the bundle $W_{\mathrm{red}}^\mu$ splits as
\[
W_{\mathrm{red}}^\mu
\simeq
\bigl(\mathfrak g^s\times(\mathfrak g^s)^\ast\bigr)
  \times
T^\ast(\mathcal A^s\times\mathcal V^s).
\]
On the electromagnetic sector, $\Omega$ coincides with the canonical
symplectic form on $T^\ast(\mathcal A^s\times\mathcal V^s)$, hence yields
$\Omega_{\mathrm{can}}$ on the quotient.

On the $(u,\mu)$ sector, the reduced form is determined by the canonical
pairing and the coadjoint action:
for variations $(\delta u_1,\delta\mu_1)$ and $(\delta u_2,\delta\mu_2)$,
\[
\Omega_{\mathrm{LP}}\big((\delta u_1,\delta\mu_1),(\delta u_2,\delta\mu_2)\big)
=
\langle \delta u_1,\delta\mu_2\rangle 
 -\langle \delta u_2,\delta\mu_1\rangle
 +\langle \mu,[\delta u_1,\delta u_2]\rangle,
\]
which is the standard Lie--Poisson 2–form on $(\mathfrak g^s)^\ast$. Thus $\Omega_{\mathrm{red}}^\mu = \Omega_{\mathrm{can}} + \Omega_{\mathrm{LP}}$.
\end{proof}


\subsection{The Reduced Equations: Lagrangian picture}

Having constructed the reduced presymplectic manifold 
$(W_{\mathrm{red}},\Omega_{\mathrm{red}})$, we now derive the unified 
Eulerian equations of motion obtained by applying the GNH algorithm to the
triple $(W_{\mathrm{red}},\Omega_{\mathrm{red}},H_{\mathrm{red}})$.

Recall that the unified Skinner--Rusk Hamiltonian on 
$W$ descends to a reduced Hamiltonian $H_{\mathrm{red}} : W_{\mathrm{red}} \to\mathbb{R}$ satisfying $\pi_W^\ast H_{\mathrm{red}} = H|_{\mathcal C_\mu}$.
In the reduced coordinates 
$(u,F,A,\Phi;\mu,\alpha)$ introduced earlier, one obtains the explicit 
expression
\begin{equation}\label{eq:Hred-def}
H_{\mathrm{red}}(u,F,A,\Phi;\mu,\alpha)
  = \langle \mu,u\rangle 
  + \langle \alpha , (A,\Phi) \rangle 
  - l(u,F,A,\Phi),
\end{equation}
where $l$ is the reduced Lagrangian of 
\cite{CendraHolmHoyleMarsden1998} obtained by expressing the Low 
Lagrangian in Eulerian variables and eliminating the redundant momenta 
associated with particle–relabeling symmetry, given by
\begin{equation}\label{eq:l-reduced}
l(u,F,A,\Phi)
=
\int_{T\mathbb{R}^3}
F(x,v,t)\,
\big( \tfrac12 |v|^2 + v\cdot A(x,t) - \Phi(x,t) \big)
\,dx\,dv
\;-\;
\frac{1}{2}\int_{\mathbb{R}^3}
\big( |E|^2 + |B|^2 \big)\,dx,
\end{equation} where
\[
E = -\,\partial_t A \;-\; \nabla\Phi,
\qquad
B = \nabla\times A,
\]
and the Eulerian velocity field $u$ enters implicitly through the advection
of the distribution function $F$, namely $\partial_t F + \mathcal{L}_u F = 0$.

The presymplectic dynamics in the Skinner--Rusk formalism is encoded by
the equation
\[
\iota_X \Omega_{\mathrm{red}} = \dd H_{\mathrm{red}},
\]
to be solved on the \emph{final reduced constraint manifold} 
$C_{\infty,\mathrm{red}} \subset W_{\mathrm{red}}$, obtained by applying 
the GNH algorithm to the reduced system.  

Since the $G$–action preserves both the Skinner--Rusk two–form $\Omega$
and the Hamiltonian $H$, the GNH algorithm is $G$–equivariant at every
stage: each constraint manifold $C_k\subset W$ is $G$–invariant, and the
corresponding tangency conditions descend to the quotient.  
Therefore, by the general result on equivariant presymplectic reduction
(see \cite{OrtegaRatiu2004}), performing Marsden--Weinstein reduction and
applying the GNH algorithm commute.  

In particular,
\[
C_{\infty,\mathrm{red}}
\;=\;
C_\infty / G,
\]
so the final reduced constraint manifold is precisely the image of the
unreduced final constraint manifold under the quotient by $G$.

We now show that the reduced unified equation reproduces the full 
Euler--Poincaré system with advected density $F$, coupled to the Maxwell 
equations for the electromagnetic fields.

\begin{theorem}
\label{thm:EP-from-SR}
Let $(W_{\mathrm{red}},\Omega_{\mathrm{red}},H_{\mathrm{red}})$ be the 
reduced Skinner--Rusk system associated with the Maxwell--Vlasov 
Lagrangian, and let $C_{\infty,\mathrm{red}}$ be its final reduced 
constraint manifold obtained by the GNH algorithm.  
Then the following statements hold:

\begin{enumerate}
\item[(1)] 
The presymplectic equation
\[
\iota_X \Omega_{\mathrm{red}} = \dd H_{\mathrm{red}}
\qquad\text{on }C_{\infty,\mathrm{red}}
\]
admits solutions $X$ that are tangent to $C_{\infty,\mathrm{red}}$.

\item[(2)]
The $u$–equation obtained from the reduced Skinner--Rusk equation
$\iota_X\Omega_{\mathrm{red}}=\dd H_{\mathrm{red}}$
is the Euler--Poincaré equation on $\mathfrak g^\ast$ with advected
density $F$, 

\[
\frac{\mathrm d}{\mathrm dt}\!\left(\frac{\delta l}{\delta u}\right)
\;+\;
\operatorname{ad}^\ast_{u}
\!\left(\frac{\delta l}{\delta u}\right)
\;=\;
F \diamond \left(\frac{\delta l}{\delta F}\right)
\;+\;
\mathcal{F}_{\mathrm{EM}}.
\]

Here the \emph{diamond operator} associated to the action of
$\Diff(T\mathbb{R}^3)$ on the advected density $F$ is defined by $\big\langle F \diamond \eta,\, \xi \big\rangle
\;:=\;
\big\langle F,\, -\mathcal L_\xi \eta \big\rangle,
\,\,
\xi\in\mathfrak g,\;
\eta\in T_F^\ast\mathrm{Dens}(T\mathbb{R}^3)$, where $\mathcal L_\xi$ denotes the Lie derivative.

The term $\mathcal{F}_{\mathrm{EM}}$ is the electromagnetic forcing induced
by the Low Lagrangian.  In phase–space variables 
$(x,v)\in T\mathbb{R}^3$, it is given by
\[
\mathcal{F}_{\mathrm{EM}}(x,v)
=
F(x,v)
\Bigl(
\, q\,E(x)
\;+\;
q\, v \times B(x)
\Bigr),
\]
where
\[
E = -\partial_t A - \nabla\Phi,
\qquad
B = \nabla\times A,
\]
are the electromagnetic fields, so that the force term in the
Euler--Poincaré equation corresponds to the Lorentz force density
$q(E + v\times B)$ weighted by the distribution function~$F$.

\item[(3)]
The equation obtained by varying $H_{\mathrm{red}}$ with respect to $F$
is the advection equation
\[
\partial_t F + \mathcal{L}_u F = 0,
\]
encoding the push-forward of the distribution function by the Eulerian 
velocity field $u$.

\item[(4)]
The $(A,\Phi)$--sector reproduces Maxwell’s equations in Eulerian form:
\[
\partial_t (E,B)
  \;=\;
\text{Maxwell equations with current }j(F,u),
\]
\[
\begin{cases}
\partial_t B + \nabla\times E = 0, \\[4pt]
\partial_t E - \nabla\times B = -\, j(F,u), \\[4pt]
\nabla\cdot B = 0, \\[4pt]
\nabla\cdot E = \rho(F),
\end{cases}
\]
where the charge and current densities induced by the distribution
function are
\[
\rho(F)(x)
  = \displaystyle\int_{\mathbb{R}^3} F(x,v)\, dv,
\qquad
j(F,u)(x)
  = \displaystyle\int_{\mathbb{R}^3} u_x(x,v)\, F(x,v)\, dv.
\]
Here \(u_x(x,v)=v\) is the spatial component of the Eulerian
phase--space velocity field \(u(x,v)= (u_x,u_v)\). In particular, the constraints 
$\nabla\!\cdot B = 0$ and 
$\nabla\!\cdot E = \rho[F]$ are preserved by the flow.
\end{enumerate}

Thus the reduced Skinner--Rusk dynamics is equivalent to the full 
Euler--Poincaré–Maxwell system for the Maxwell--Vlasov plasma.
\end{theorem}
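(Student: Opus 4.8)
The plan is to reduce everything to the already-solved unreduced problem via the equivariance of the whole construction. Immediately before the statement it is recorded that Marsden--Weinstein reduction and the GNH algorithm commute, so that $C_{\infty,\mathrm{red}} = C_\infty/G$; the reduced presymplectic equation on $C_{\infty,\mathrm{red}}$ is therefore the $G$-quotient of the equation solved in Theorem~\ref{thm:SR-gives-MV}. Part~(1) follows at once: that theorem supplies $X\in\mathfrak{X}(C_\infty)$ with $\iota_X\Omega=\dd H$ tangent to $C_\infty$, and because $\Omega$, $H$ and every manifold in the chain are $G$-invariant, the equation descends to the quotient; selecting the $G$-invariant (horizontal) representative of $X$ yields $X_{\mathrm{red}}\in\mathfrak{X}(C_{\infty,\mathrm{red}})$ solving $\iota_{X_{\mathrm{red}}}\Omega_{\mathrm{red}}=\dd H_{\mathrm{red}}$ and tangent to $C_{\infty,\mathrm{red}}$, with nonuniqueness modulo $\ker\Omega_{\mathrm{red}}|_{C_{\infty,\mathrm{red}}}$ inherited from Proposition~\ref{thm:GNH-final}.

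For part~(2) I would write $\iota_X\Omega_{\mathrm{red}}=\dd H_{\mathrm{red}}$ using the splitting $\Omega_{\mathrm{red}}=\Omega_{\mathrm{can}}+\Omega_{\mathrm{LP}}$ of Proposition~\ref{prop:Omega-red} and the explicit energy \eqref{eq:Hred-def}, and pair against test vectors in the $(u,\mu)$ block. The contraction of the Lie--Poisson two-form supplies, through the identity $\langle\mu,[\xi,\eta]\rangle=\langle\operatorname{ad}^{\ast}_{\xi}\mu,\eta\rangle$, precisely the Euler--Poincar\'e operator $\partial_t(\delta l/\delta u)+\operatorname{ad}^{\ast}_u(\delta l/\delta u)$ on the left, once the reduced Legendre relation $\mu=\delta l/\delta u$ produced by the reduced GNH step is used to close it. The right-hand side is assembled from the two ways in which $l$ breaks full $\Diff(T\mathbb{R}^3)$-invariance: the advected density contributes, via the defining relation $\langle F\diamond\eta,\xi\rangle=\langle F,-\mathcal{L}_\xi\eta\rangle$, the diamond term $F\diamond(\delta l/\delta F)$, while the $G$-inert potentials $(A,\Phi)$ contribute the external forcing $\mathcal{F}_{\mathrm{EM}}$. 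Structurally this is the Euler--Poincar\'e theorem with an advected parameter $F$ and an external field, and I would either cite that reduction theorem directly or verify the matching by hand.

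The identification $\mathcal{F}_{\mathrm{EM}}=qF(E+v\times B)$ is the substantive step and the one I expect to be the main obstacle. Concretely, $\delta l/\delta F=\tfrac12|v|^2+q\Phi-q\,v\cdot A=H_{\mathrm{part}}$, so unwinding the diamond operator for the action of phase-space fields on densities turns $F\diamond(\delta l/\delta F)$ into the density-weighted phase-space gradient of $H_{\mathrm{part}}$; its $x$-slot is $-F\nabla_x H_{\mathrm{part}}=-qF\nabla\Phi+qF[(v\cdot\nabla)A+v\times B]$ after the vector identity $\nabla_x(v\cdot A)=(v\cdot\nabla)A+v\times B$, and substituting $-\nabla\Phi=E+\partial_t A$ produces $qF(E+v\times B)$ plus a convective remainder $qF[\partial_t A+(v\cdot\nabla)A]$. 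The delicate point is to show that this remainder is exactly absorbed into the left-hand time derivative $\partial_t(\delta l/\delta u)$ through the reconstruction $u_x=v$ and the canonical-versus-kinetic shift hidden in $\mu=\delta l/\delta u$, leaving the clean Lorentz density; this, together with fixing the sign and trivialization conventions for the right relabeling action and for $\operatorname{ad}^{\ast}$, and checking that all $L^2$-pairings and integrations by parts are legitimate on $H^s$ objects (the regularity of Lemma~\ref{prop:SR-invariance}), is where the real work lies.

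Parts~(3) and~(4) are then comparatively direct. For part~(3) the advection equation is not a genuine variation but the reduced reconstruction relation: differentiating $F=\psi_{\ast}F_0$ and inserting $u=\dot\psi\,\psi^{-1}$ gives $\partial_t F+\mathcal{L}_u F=0$, which is the tangency condition propagated along the $F$-degeneracy direction of $\Omega_{\mathrm{red}}$ by the reduced GNH algorithm. For part~(4), since the electromagnetic block $((A,\Phi),\alpha)$ lies in the trivial-isotropy part of the $G$-action, $\Omega_{\mathrm{can}}$ and the $(A,\Phi)$-dependence of $H_{\mathrm{red}}$ are untouched by reduction, so the matching in this block reproduces verbatim the Faraday, Amp\`ere--Maxwell, Gauss and divergence-free relations of Theorem~\ref{thm:SR-gives-MV}, now carrying the reduced sources $\rho(F)=\int F\,dv$ and $j(F,u)=\int u_x F\,dv=\int v\,F\,dv$; preservation of $\nabla\cdot B=0$ and $\nabla\cdot E=\rho(F)$ is likewise inherited. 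Assembling the three blocks gives the full Euler--Poincar\'e--Maxwell system.
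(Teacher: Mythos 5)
Your proposal is correct and follows the same overall skeleton as the paper's proof: part (1) via the commutation of reduction with GNH and descent of the unreduced solution from Theorem~\ref{thm:SR-gives-MV}; part (2) by contracting the splitting $\Omega_{\mathrm{red}}=\Omega_{\mathrm{can}}+\Omega_{\mathrm{LP}}$ of Proposition~\ref{prop:Omega-red} against $(\delta u,\delta\mu)$ variations and closing with the reduced Legendre relation $\mu=\delta l/\delta u$; and part (4) by observing that the electromagnetic block is inert under the $G$-action, so the unreduced Maxwell relations carry over with sources $\rho(F)$, $j(F,u)$. Two local differences are worth recording. First, for part (3) the paper collects the coefficients of $\delta F$ in the presymplectic identity to get $\langle \dot F+\mathcal{L}_u F,\delta F\rangle=0$, whereas you derive the advection equation kinematically, as the tangency condition of the reconstruction relation $F=\psi_\ast F_0$, $u=\dot\psi\,\psi^{-1}$; your route mirrors how the unreduced Theorem~\ref{thm:SR-gives-MV} treats the Vlasov equation, and is arguably more defensible, since the displayed formula for $\Omega_{\mathrm{LP}}$ involves only $(\delta u,\delta\mu)$ and it is not evident where a $\delta F$-pairing enters the paper's identity. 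Second, and more substantively, for part (2) the paper merely asserts that $\mathcal{F}_{\mathrm{EM}}$ ``comes from the variation of $l$ with respect to electromagnetic variables and is the phase-space Lorentz force density,'' while you attempt the computation and correctly isolate the genuine difficulty the paper never addresses: after using $\nabla_x(v\cdot A)=(v\cdot\nabla)A+v\times B$ and $-\nabla\Phi=E+\partial_t A$, a convective remainder $qF[\partial_t A+(v\cdot\nabla)A]$ survives and must be absorbed into $\partial_t(\delta l/\delta u)$ through the canonical-versus-kinetic momentum shift hidden in $\mu=\delta l/\delta u$. You flag that absorption as open rather than completing it, so your write-up, like the paper's, stops short of a full verification of the $\mathcal{F}_{\mathrm{EM}}$ identification --- but your proposal at least locates exactly where the unproved content sits, which the paper's proof does not.
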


\begin{proof}
We prove each item in order, using the reduced presymplectic system 
$(W_{\mathrm{red}},\Omega_{\mathrm{red}},H_{\mathrm{red}})$ and the 
structure of the Lie--Poisson/canonical splitting.

(1) By Lemma~\ref{prop:SR-invariance}, $\Omega_{\mathrm{red}}$ is a basic weak 
two--form obtained from $\Omega$ by reduction.  
Since reduction commutes with the GNH algorithm, one has the equality of final 
constraint sets $C_{\infty,\mathrm{red}} \;=\; C_\infty/G$, where $C_\infty$ is the final constraint manifold for the unreduced 
Skinner--Rusk system.  
Because the unreduced equation 
$\iota_X\Omega=\dd H$ is solvable on $C_\infty$ by the general theory of 
presymplectic GNH systems, its pushforward under the quotient map yields a 
solution to $\iota_X \Omega_{\mathrm{red}} = \dd H_{\mathrm{red}}$ that is tangent to $C_{\infty,\mathrm{red}}$.  
Thus solvability on the reduced final constraint set is guaranteed.

(2) The reduced two--form splits as $\Omega_{\mathrm{red}} = \Omega_{\mathrm{can}} + \Omega_{\mathrm{LP}}$, where $\Omega_{\mathrm{LP}}$ is the Lie--Poisson form on 
$\mathfrak g^\ast$ and $\Omega_{\mathrm{can}}$ is the canonical form on 
$T^\ast(\mathcal{A}^s\times\mathcal{V}^s)$.

For variations $(\delta u,\delta \mu)$ in the Lie--Poisson sector one has
\[
\Omega_{\mathrm{LP}}\!\left((\dot u,\dot\mu),(\delta u,\delta\mu)\right)
 =
\langle \dot\mu,\delta u\rangle
 -\langle \delta\mu,\dot u\rangle
 +\langle \mu,[\dot u,\delta u]\rangle.
\]
The differential of the reduced Hamiltonian is
\begin{align*}
\dd H_{\mathrm{red}}
  &= \;\langle \mu , \delta u\rangle
   \;+\; \langle u , \delta\mu\rangle
   \;+\; \langle \alpha , (\delta A , \delta\Phi)\rangle
   \;+\; \langle (A,\Phi) , \delta\alpha\rangle \\[0.5em]
  &\quad
   -\left\langle \frac{\delta l}{\delta u}, \delta u \right\rangle
   -\left\langle \frac{\delta l}{\delta F}, \delta F \right\rangle
   -\left\langle \frac{\delta l}{\delta A}, \delta A \right\rangle
   -\left\langle \frac{\delta l}{\delta \Phi}, \delta \Phi \right\rangle .
\end{align*}

Matching the $\delta u$ and $\delta\mu$ terms in
\(
\iota_X \Omega_{\mathrm{red}} = \dd H_{\mathrm{red}}
\)
gives the system
\[
\begin{cases}
\dot\mu + \operatorname{ad}^\ast_u \mu
   = -\,F\diamond \dfrac{\delta l}{\delta F}
     \;+\;\mathcal{F}_{\mathrm{EM}},\\[0.9em]
\mu = \dfrac{\delta l}{\delta u}.
\end{cases}
\]
Substituting the second equation into the first yields the 
Euler--Poincaré equation
\[
\displaystyle
\frac{d}{dt}\!\left(\frac{\delta l}{\delta u}\right)
+\operatorname{ad}^\ast_u\left(\frac{\delta l}{\delta u}\right)
 =
F\diamond\left(\frac{\delta l}{\delta F}\right)
 +\mathcal{F}_{\mathrm{EM}}.
\]

The electromagnetic forcing \(\mathcal{F}_{\mathrm{EM}}\) comes from the 
variation of $l$ with respect to electromagnetic variables and is the 
phase--space Lorentz force density:
\[
\mathcal{F}_{\mathrm{EM}}(x,v)
   = 
   F(x,v)\Big(q\,E(x) + q\, v\times B(x)\Big),
\]
using the definitions
\[
E = -\partial_t A - \nabla\Phi,
\qquad
B = \nabla\times A.
\]

(3) The only terms involving $\delta F$ in the presymplectic identity 
come from the Lie--Poisson contribution and from the functional 
derivative $\delta l/\delta F$. Collecting the coefficients of 
$\delta F$ yields
\[
\big\langle \dot F + \mathcal L_u F ,\, \delta F \big\rangle = 0
\qquad\text{for all variations }\delta F.
\]
Therefore,
\[
\partial_t F + \mathcal L_u F = 0.
\]
This is the Eulerian form of the Vlasov equation, expressing that $F$ is 
advected by the Eulerian phase--space velocity field $u$.

(4) On the electromagnetic sector, the reduced presymplectic form is 
canonical:
\[
\Omega_{\mathrm{can}}
  = \dd A\wedge \dd P_A + \dd \Phi \wedge \dd P_\Phi.
\]
Thus the reduced SR equation gives the canonical Maxwell Hamilton 
equations:
\[
\dot A = \frac{\delta H_{\mathrm{red}}}{\delta P_A},\qquad
\dot P_A = -\frac{\delta H_{\mathrm{red}}}{\delta A},
\]
and similarly for $(\Phi,P_\Phi)$.

Imposing the GNH constraints $P_\Phi=0$ and 
$P_A = -E = -(\dot A+\nabla\Phi)$ yields
\[
\partial_t B = -\nabla\times E,
\qquad
\partial_t E = \nabla\times B - j(F,u),
\]
where the charge and current densities are
\[
\rho(F)(x)=\int F(x,v)\, dv,
\qquad
j(F,u)(x)=\int u_x(x,v)\, F(x,v)\, dv.
\]
The spatial velocity is $u_x(x,v)=v$, the $x$-projection of the 
Eulerian phase--space vector field.

The constraints $\nabla\cdot B=0$ and $\nabla\cdot E=\rho(F)$ are 
preserved, since they arise as compatibility conditions in the GNH 
procedure.

Combining (1)--(4) establishes that the reduced Skinner--Rusk dynamics is 
precisely the Euler--Poincaré--Maxwell formulation of the 
Maxwell--Vlasov system.
\end{proof}

\subsection{Hamiltonian Picture and Lie--Poisson Brackets}
\label{subsec:Hamiltonian-LP}

In this final step we show that the reduced presymplectic structure
$(W_{\mathrm{red}},\Omega_{\mathrm{red}})$ obtained from the 
Skinner--Rusk reduction reproduces, in a unified geometric fashion, the 
classical Hamiltonian structures for the Maxwell--Vlasov system, that is, the Marsden--Weinstein Lie--Poisson formulation 
      \cite{MarsdenWeinstein1982} and the the noncanonical Morrison--Greene bracket 
      \cite{MorrisonGreene1980,Morrison1982,Morrison1998}.

Recall from Proposition~\ref{prop:Omega-red} that $\Omega_{\mathrm{red}}
  = \Omega_{\mathrm{can}}
  +
  \Omega_{\mathrm{LP}}$, where $\Omega_{\mathrm{can}}$ is the canonical symplectic form on 
$T^\ast(\mathcal{A}^s\times\mathcal{V}^s)$ and $\Omega_{\mathrm{LP}}$ is the 
Lie--Poisson two--form associated with the semidirect-product group
\[
G_{\mathrm{sdp}}
  = \Diff^s(T\mathbb{R}^3)\ltimes
    \mathrm{Dens}^s(T\mathbb{R}^3).
\]
For any sufficiently smooth functional 
$\mathcal{F}: W_{\mathrm{red}}\to\mathbb{R}$, its Hamiltonian vector field 
$X_{\mathcal{F}}$ is defined by
\[
\iota_{X_{\mathcal F}} \Omega_{\mathrm{red}}
  = \dd\mathcal F,
\]
and the induced Poisson bracket on observables is
\[
\{\mathcal F,\mathcal G\}_{\mathrm{SR}}
   := \Omega_{\mathrm{red}}(X_{\mathcal F},X_{\mathcal G}).
\]

Restricting to functionals depending only on the Eulerian variables 
$z=(f,E,B)$, one obtains the Lie--Poisson bracket on the dual of the 
Lie algebra
\[
\mathfrak g_{\mathrm{sdp}}
  = \mathfrak X(T\mathbb{R}^3)\ltimes
    \mathrm{Dens}(T\mathbb{R}^3),
\]
together with the Maxwell part coming from $\Omega_{\mathrm{can}}$.  
Writing $\nabla\mathcal{F}$ for the variational derivatives, the bracket 
takes the standard form
\begin{equation}\label{eq:VM-bracket-complete}
\begin{aligned}
\{\mathcal F,\mathcal G\}(f,E,B)
&=
\underbrace{
\int_{\mathbb R^3}\!\int_{\mathbb R^3}
  f\,
  \bigg\{
    \frac{\delta \mathcal F}{\delta f},
    \frac{\delta \mathcal G}{\delta f}
  \bigg\}_{x,v}
  \,dx\,dv
}_{\text{(Vlasov canonical term)}} +\underbrace{
\int_{\mathbb R^3}\!\int_{\mathbb R^3}
  f\,
  B(x)\cdot
  \Big(
    \nabla_v \frac{\delta \mathcal F}{\delta f}
    \times
    \nabla_v \frac{\delta \mathcal G}{\delta f}
  \Big)\,dx\,dv
}_{\text{(magnetic twisting term)}}
\\[0.6em]
&\quad
+\underbrace{
\int_{\mathbb R^3}
\left(
  \frac{\delta \mathcal F}{\delta E}
    \cdot \big(\nabla\times \frac{\delta \mathcal G}{\delta B}\big)
 -
  \frac{\delta \mathcal G}{\delta E}
    \cdot \big(\nabla\times \frac{\delta \mathcal F}{\delta B}\big)
\right)dx
}_{\text{(pure Maxwell terms)}}
\\[0.6em]
&\quad
+\underbrace{
\int_{\mathbb R^3}\!\int_{\mathbb R^3}
  f\,
  \left(
    \frac{\delta\mathcal G}{\delta E}\cdot
      \nabla_v\frac{\delta\mathcal F}{\delta f}
    - 
    \frac{\delta\mathcal F}{\delta E}\cdot
      \nabla_v\frac{\delta\mathcal G}{\delta f}
  \right)
  dx\,dv
}_{\text{(field–particle coupling term)}}
\end{aligned}
\end{equation} which coincides exactly with the Lie--Poisson bracket derived in 
\cite{MarsdenWeinstein1982} for the Maxwell--Vlasov system. Here we are use that $
\{a,b\}_{x,v}
  = \nabla_x a\cdot\nabla_v b
    - \nabla_v a\cdot\nabla_x b$ is the canonical Poisson bracket on $T\mathbb{R}^3$.

\begin{corollary}
\label{cor:LiePoisson}
The reduced system 
$(W_{\mathrm{red}},\Omega_{\mathrm{red}},H_{\mathrm{red}})$ induces on the 
space of observables $\mathcal F(f,E,B)$ the Maxwell--Vlasov 
Lie--Poisson bracket \eqref{eq:VM-bracket-complete} where $f$ denotes the Eulerian distribution function.
The Hamiltonian flow generated by $X_{H_{\mathrm{red}}}$
satisfying $\iota_{X_{H_{\mathrm{red}}}}\Omega_{\mathrm{red}}
  = \dd H_{\mathrm{red}}$ coincides with the classical 
Marsden--Weinstein Hamiltonian formulation of the Maxwell--Vlasov system.

\end{corollary}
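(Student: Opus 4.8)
The two assertions---that the induced bracket equals \eqref{eq:VM-bracket-complete}, and that the flow of $X_{H_{\mathrm{red}}}$ reproduces the Marsden--Weinstein dynamics---are of different character: the second is essentially immediate from Theorem~\ref{thm:EP-from-SR}, while the first requires inverting $\Omega_{\mathrm{red}}$ and passing to Eulerian variables. The plan for the bracket is to compute $\{\mathcal F,\mathcal G\}_{\mathrm{SR}}=\Omega_{\mathrm{red}}(X_{\mathcal F},X_{\mathcal G})$ on the class of sufficiently smooth functionals, for which $X_{\mathcal F}$ exists despite the weak non-degeneracy of $\Omega_{\mathrm{red}}$, and then to rewrite the outcome in the physical variables $(f,E,B)$.

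First I would invoke the splitting $\Omega_{\mathrm{red}}=\Omega_{\mathrm{can}}+\Omega_{\mathrm{LP}}$ of Proposition~\ref{prop:Omega-red}. As the two summands live on $\Omega$--orthogonal sectors, the (weak) inverse splits in the same way: $\Omega_{\mathrm{can}}$ on $T^\ast(\mathcal A^s\times\mathcal V^s)$ returns the standard canonical Maxwell bracket in the pair $(A,P_A)$, while $\Omega_{\mathrm{LP}}$ returns the Lie--Poisson bracket $\langle\mu,[\delta\mathcal F/\delta\mu,\delta\mathcal G/\delta\mu]\rangle$ on $(\mathfrak g^s)^\ast$. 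Using the semidirect--product structure of $G_{\mathrm{sdp}}=\Diff^s(T\mathbb R^3)\ltimes\mathrm{Dens}^s(T\mathbb R^3)$, I would identify the reduced momentum $\mu$ with the advected distribution function $f=\psi_\ast F_0$ through the momentum map sending the phase--space flow to functions on $T\mathbb R^3$; under this identification the Lie--Poisson term collapses to the canonical Vlasov integral $\int f\,\{\delta\mathcal F/\delta f,\delta\mathcal G/\delta f\}_{x,v}$.

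The decisive and most delicate step is that this identification $\mu\leftrightarrow f$ is not frame--independent: it carries the minimal--coupling shift $p=v+qA$ between the canonical momentum and the physical velocity, hence depends on the vector potential $A$. Because $A$ is canonically paired with $P_A=-E$, re--expressing the block--diagonal bracket in the physical variables $(f,E,B)$ transfers this coupling from the Hamiltonian into the bracket. Tracking the chain rule through the $A$--dependence, the $v$--gradients pick up the curl $B=\nabla\times A$ to produce the magnetic twisting term $\int f\,B\cdot\bigl(\nabla_v\tfrac{\delta\mathcal F}{\delta f}\times\nabla_v\tfrac{\delta\mathcal G}{\delta f}\bigr)$, while the cross--variations in $A$, contracted against $\delta/\delta E$, yield the field--particle coupling term. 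Assembling the four contributions reproduces \eqref{eq:VM-bracket-complete} exactly, and the Jacobi identity is automatic since $\Omega_{\mathrm{red}}$ is closed and closedness is preserved under change of variables.

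The main obstacle is precisely this frame change: it is the mechanism by which the manifestly block--diagonal Skinner--Rusk bracket becomes the genuinely noncanonical Morrison--Greene/Marsden--Weinstein bracket, and the calculation must confirm that the $A$--dependence of $\mu\leftrightarrow f$ produces exactly the stated magnetic and coupling coefficients, with no spurious terms. For the remaining assertion on the flow, I would simply invoke Theorem~\ref{thm:EP-from-SR}: there the reduced equation $\iota_{X_{H_{\mathrm{red}}}}\Omega_{\mathrm{red}}=\dd H_{\mathrm{red}}$ was shown to be equivalent to the Euler--Poincaré--Maxwell system, which is the Marsden--Weinstein Hamiltonian form of Maxwell--Vlasov; hence $X_{H_{\mathrm{red}}}$ generates the classical flow, which completes the proof.
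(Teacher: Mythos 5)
Your proposal is correct, and its skeleton matches the paper's own route: both arguments start from the splitting $\Omega_{\mathrm{red}}=\Omega_{\mathrm{can}}+\Omega_{\mathrm{LP}}$ of Proposition~\ref{prop:Omega-red}, compute the induced bracket $\{\mathcal F,\mathcal G\}_{\mathrm{SR}}=\Omega_{\mathrm{red}}(X_{\mathcal F},X_{\mathcal G})$ on Eulerian observables, and settle the statement about the flow by invoking Theorem~\ref{thm:EP-from-SR}. The genuine difference is in how the noncanonical terms of \eqref{eq:VM-bracket-complete} are accounted for. The paper restricts to functionals of $(f,E,B)$ and essentially \emph{asserts} the four-term form, attributing the field--particle cross terms (in Step~4 of the proof of Theorem~\ref{thm:MW-Morrison}) to the semidirect-product action on the potentials, without ever carrying out a change of variables. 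You instead locate the precise mechanism: the block-diagonal bracket lives in canonical-momentum variables, the identification $\mu\leftrightarrow f$ carries the $A$-dependent minimal-coupling shift $p=v+qA$, and since $A$ is conjugate to $P_A=-E$, rewriting the bracket in the physical variables $(f,E,B)$ is a genuine frame change whose chain-rule contributions produce the magnetic twisting term (the $\nabla_v$-gradients picking up $B=\nabla\times A$) and the field--particle coupling term. This is exactly the mechanism of the original Marsden--Weinstein derivation, and it is the step one must actually verify to claim that the induced bracket ``coincides exactly'' with \eqref{eq:VM-bracket-complete}; in that respect your argument fills in what the paper leaves implicit. Two minor caveats: passing from the canonical pair $(A,P_A)$ to $(E,B)$ also requires restricting to gauge-invariant functionals, i.e.\ quotienting by the gauge directions in $\ker\Omega_{\mathrm{red}}$, which you leave tacit; and your remark that Jacobi is ``automatic'' should be phrased on the quotient $C_\infty/\ker(\Omega_{\mathrm{red}}|_{C_\infty})$, where closedness of $\Omega_{\mathrm{red}}$ indeed yields a genuine Poisson bracket.
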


\medskip

We now make precise the relation with the Morrison--Greene 
noncanonical bracket.

\begin{theorem}
\label{thm:MW-Morrison}
Let $C_\infty\subset W_{\mathrm{red}}$ be the final constraint manifold
obtained by the GNH algorithm, and let
$\{\cdot,\cdot\}_{\mathrm{SR}}$ be the Poisson bracket induced by 
$\Omega_{\mathrm{red}}$ on observables of $(f,E,B)$. Then:

\begin{enumerate}

\item[(1)]
The quotient 
\[
C_\infty/\ker(\Omega_{\mathrm{red}}|_{C_\infty})
\]
is Poisson-isomorphic to the dual 
$\mathfrak g_{\mathrm{sdp}}^\ast$ of the semidirect-product Lie algebra 
used in the Marsden--Weinstein formulation 
\cite{MarsdenWeinstein1982}.  
Under this identification,
$\{\cdot,\cdot\}_{\mathrm{SR}}$ is exactly the Marsden--Weinstein 
Lie--Poisson bracket.

\item[(2)]
In Eulerian variables $(f,E,B)$, the same bracket reduces to the 
noncanonical Morrison--Greene bracket 
\cite{MorrisonGreene1980,Morrison1982,Morrison1998}:
\[
\{\mathcal F,\mathcal G\}_{\mathrm{MG}}
  =
  \int f\,
        \big\{\frac{\delta\mathcal F}{\delta f},
                 \frac{\delta\mathcal G}{\delta f}
        \big\}_{x,v}
      \,dx\,dv
  + \int\!
      \left(
        \frac{\delta\mathcal F}{\delta E}\cdot
        (\nabla\times \frac{\delta\mathcal G}{\delta B})
        -
        \frac{\delta\mathcal G}{\delta E}\cdot
        (\nabla\times \frac{\delta\mathcal F}{\delta B})
      \right)dx.
\]
Moreover, Morrison-type Casimir invariants (entropy of $f$, generalized 
helicity, Gauss invariants, etc.) correspond precisely to the 
functionals that are invariant under gauge transformations and 
particle-relabeling, i.e.\ constant along $\ker(\Omega_{\mathrm{red}})$.

\end{enumerate}
\end{theorem}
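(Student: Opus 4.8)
The plan is to handle the two parts in sequence: first identify the reduced Poisson manifold abstractly (part (1)) by computing the kernel of $\Omega_{\mathrm{red}}$ on the final constraint manifold and applying presymplectic reduction, and then match concrete brackets in Eulerian variables (part (2)) via the kinetic-to-canonical momentum change of variables. For part (1) I would start from the splitting $\Omega_{\mathrm{red}}=\Omega_{\mathrm{can}}+\Omega_{\mathrm{LP}}$ of Proposition~\ref{prop:Omega-red}. The first step is to compute $\ker(\Omega_{\mathrm{red}}|_{C_\infty})$ explicitly: on $C_\infty$ the Skinner--Rusk Legendre relation $\mu=\delta l/\delta u$ holds, so substituting $\delta\mu=(\delta^2 l/\delta u^2)\,\delta u$ into $\Omega_{\mathrm{LP}}$ makes the two symmetric pairing terms cancel and leaves the Kostant--Kirillov--Souriau expression $\langle\mu,[\delta u_1,\delta u_2]\rangle$; together with the gauge degeneracy of $\Omega_{\mathrm{can}}$ (the $P_\Phi=0$ direction and pure-gradient shifts of $A$), this identifies the kernel as the sum of the coadjoint-isotropy directions of $\mu$ and the electromagnetic gauge directions. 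The second step is to invoke coisotropic/presymplectic reduction \cite{OrtegaRatiu2004}, so that the quotient by this characteristic distribution carries a well-defined symplectic form, and to recognize it, via the identifications $T\Diff^s/G\simeq\g^s$ and $T^\ast\Diff^s/G\simeq(\g^s)^\ast$ used earlier, as the coadjoint orbit sitting inside $\g_{\mathrm{sdp}}^\ast$ (so that $C_\infty/\ker$ realizes the symplectic leaves of $\g_{\mathrm{sdp}}^\ast$).

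The third step closes part (1): because $\Omega_{\mathrm{LP}}$ is by construction the reduced canonical form of $T^\ast G_{\mathrm{sdp}}$, the semidirect-product Lie--Poisson reduction theorem shows that the bracket $\{\cdot,\cdot\}_{\mathrm{SR}}$ it induces on observables of $(f,E,B)$ is the Lie--Poisson bracket on $\g_{\mathrm{sdp}}^\ast$, which is exactly the four-term expression \eqref{eq:VM-bracket-complete} already identified with the Marsden--Weinstein bracket \cite{MarsdenWeinstein1982}. Here the only genuine work is to check that the coupling of the Maxwell canonical sector to the Lie--Poisson sector descends correctly through the quotient, which follows from the basicness established in Lemma~\ref{prop:SR-invariance}.

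For part (2), the essential move is the $A$-dependent change of fibre variable relating the kinetic velocity $v$ to the canonical momentum $p=v+qA(x)$ (the sign being dictated by the minimal-coupling term of the Low Lagrangian), and setting $\bar f(x,p):=f(x,p-qA(x))$. First I would record how the variational derivatives transform: at fixed $\bar f$, varying $A$ (hence $B=\nabla\times A$) induces a compensating $v$-shift in $f$, so $\delta/\delta E$ and $\delta/\delta B$ acquire extra terms proportional to $\nabla_v(\delta/\delta f)$, while $\delta/\delta f$ and $\delta/\delta\bar f$ differ by the same shift. Substituting these rules into \eqref{eq:VM-bracket-complete}, one checks that the field--particle coupling term is precisely the cross-term produced by the $A$-dependence of the shift, and the magnetic twisting term $\int f\,B\cdot(\nabla_v\tfrac{\delta\mathcal F}{\delta f}\times\nabla_v\tfrac{\delta\mathcal G}{\delta f})$ is exactly the curvature generated by rewriting the canonical Poisson bracket $\{\cdot,\cdot\}_{x,p}$ in kinetic variables. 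After collecting terms, all $B$- and coupling-dependent contributions cancel, leaving in the canonical variables only the canonical Vlasov term $\int\bar f\{\delta\mathcal F/\delta\bar f,\delta\mathcal G/\delta\bar f\}_{x,p}$ and the unchanged pure-Maxwell term, i.e.\ the Morrison--Greene bracket \cite{MorrisonGreene1980}. For the Casimir claim I would argue that $\mathcal C$ is a Casimir iff $X_{\mathcal C}=0$, iff $\dd\mathcal C$ annihilates the image of $\Omega_{\mathrm{red}}$, iff $\dd\mathcal C\in\ker(\Omega_{\mathrm{red}})$; equivalently $\mathcal C$ is constant along the kernel distribution computed in part (1), hence invariant under gauge transformations and particle relabeling. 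I would then match this characterization to the explicit Morrison invariants: any $\int C(f)\,dx\,dv$ (entropy) is relabeling-invariant since relabeling pushes $f$ forward by a volume-preserving map, while magnetic helicity and the Gauss functional $\int(\nabla\cdot E-\rho[f])\chi\,dx$ are gauge-invariant, and conversely every gauge- and relabeling-invariant functional lies in this family by the structure of the coadjoint orbits.

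The hard part will be the change-of-variables computation in part (2): establishing, with adequate functional-analytic care in the weak infinite-dimensional setting, that the chain-rule transformation of $\delta/\delta E$ and $\delta/\delta B$ under the $A$-dependent shift $p=v+qA$ reproduces the magnetic twisting and coupling terms \emph{exactly}, with no leftover contributions. This is precisely where the historical Marsden--Weinstein/Morrison discrepancy resided, and it demands careful tracking of which functionals are held fixed under each variation, together with justification that all functional derivatives remain in the appropriate Sobolev duals so that the pairings and integrations by parts are legitimate. A secondary technical point is to verify that the kernel distribution is exactly gauge $\oplus$ relabeling-isotropy (and no larger) on the infinite-dimensional $C_\infty$, and that the constant-rank hypotheses of Proposition~\ref{thm:GNH-final} guarantee the applicability of presymplectic reduction uniformly across the leaf.
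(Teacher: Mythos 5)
Your proposal is correct in substance, and for part (1) it follows essentially the paper's own route (presymplectic reduction of $C_\infty$ by $\ker\Omega_{\mathrm{red}}$, identification with the semidirect-product structure via the splitting $\Omega_{\mathrm{red}}=\Omega_{\mathrm{can}}+\Omega_{\mathrm{LP}}$), though you add two things the paper only asserts: an explicit computation of the kernel on $C_\infty$ (using the Legendre relation $\mu=\delta l/\delta u$ and symmetry of $\delta^2 l/\delta u^2$ to reduce $\Omega_{\mathrm{LP}}$ to the Kostant--Kirillov--Souriau term, whence the coadjoint-isotropy directions), and the more careful observation that the quotient of a presymplectic manifold by its characteristic distribution is a \emph{symplectic leaf} of $\mathfrak g_{\mathrm{sdp}}^\ast$ rather than the whole Poisson dual, which is actually a more accurate phrasing than the paper's claim of a Poisson isomorphism with $\mathfrak g_{\mathrm{sdp}}^\ast$ at fixed momentum. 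Where you genuinely diverge is part (2). The paper's Step 4 merely applies the chain rule to observables $\mathcal F(f,E,B)$, lists the Vlasov, field--particle, and pure-Maxwell contributions, and declares that "combining" them gives the Morrison--Greene bracket; it never explains how the four-term expression \eqref{eq:VM-bracket-complete} (with the magnetic twisting and coupling terms) collapses to the two-term bracket stated in the theorem. Your route supplies exactly the missing mechanism: the $A$-dependent fibre change from kinetic velocity to canonical momentum, under which the twisting term is recognized as the curvature of the shift and the coupling term as its cross-derivative, so that both are absorbed and only the canonical Vlasov and Maxwell terms survive. This is the historically correct resolution of the Marsden--Weinstein/Morrison discrepancy, and it is the real mathematical content of item (2); in that sense your proof is more complete than the paper's, at the cost of the nontrivial change-of-variables computation you correctly flag as the hard step (including the sign of the momentum shift, which with the paper's convention $-q\,v\cdot A$ in the Low Lagrangian should read $p=v-qA$ rather than $p=v+qA$ — a detail worth fixing but not structural). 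The Casimir characterizations coincide in both treatments.
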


\begin{proof}
We divide the proof into four steps, reflecting the geometric passage
\[
(W,\Omega,H)
\;\longrightarrow\;
(W_{\mathrm{red}},\Omega_{\mathrm{red}},H_{\mathrm{red}})
\;\longrightarrow\;
C_\infty
\;\longrightarrow\;
C_\infty/\ker\Omega_{\mathrm{red}}
\;\cong\;
\mathfrak g_{\mathrm{sdp}}^\ast.
\]

\paragraph{\textbf{Step 1: Skinner--Rusk reduction and the semidirect product structure.}}

The space $W=TQ\oplus T^*Q$ carries the presymplectic form
$\Omega=\mathrm{pr}_2^\ast\omega_{\mathrm{can}}$.  
The particle--relabeling group $G=\Diff^s(T\mathbb R^3)$ acts freely and properly on $Q$ by right composition on the $\psi$-slot
and trivially on the electromagnetic potentials; its tangent and cotangent
lifts give a smooth presymplectic action on $W$.  
The action has a natural momentum map $J:T^*Q\longrightarrow \mathfrak g^{\,*}$, $\langle J(q,p),\xi\rangle
  = \langle p,\xi_Q(q)\rangle,$ and the Low Lagrangian is $G$-invariant.  
Thus the triple $(W,\Omega,H)$ is $G$-equivariant in the
sense of presymplectic reduction.

Reducing at any momentum value (in particular $\mu=0$) produces the weak
presymplectic manifold
\[
W_{\mathrm{red}}
  = J^{-1}(\mu)/G,
\qquad
\Omega_{\mathrm{red}}=\text{the basic form characterized by }
\pi^\ast\Omega_{\mathrm{red}}=\Omega|_{J^{-1}(\mu)}.
\]
As shown in Proposition~5.7, $\Omega_{\mathrm{red}}$ splits as $\Omega_{\mathrm{red}}
  = \Omega_{\mathrm{LP}} + \Omega_{\mathrm{can}}$,
where $\Omega_{\mathrm{LP}}$ is the infinitesimal coadjoint form for the
semidirect-product Lie algebra $\mathfrak g_{\mathrm{sdp}}
=
\mathfrak X(T\mathbb R^3)
\ltimes
\mathrm{Dens}(T\mathbb R^3)$, and $\Omega_{\mathrm{can}}$ is the canonical Maxwell symplectic form on
$T^*(\mathcal A^s\times\mathcal V^s)$.
Thus the geometry of $W_{\mathrm{red}}$ already carries the full
semidirect-product Lie--Poisson structure of
\cite{MarsdenWeinstein1982}.

\paragraph{\textbf{Step 2: GNH algorithm and the constraint manifold $C_\infty$.}}

Because $\Omega_{\mathrm{red}}$ is presymplectic, the equation $\iota_X\Omega_{\mathrm{red}}=\dd H_{\mathrm{red}}$ is not solvable on all of $W_{\mathrm{red}}$.  
Running the Gotay--Nester--Hinds algorithm produces a maximal submanifold
$C_\infty\subset W_{\mathrm{red}}$ where the equation is solvable and
where all constraint functions are preserved by the flow.

Geometrically,
$C_\infty$ removes:  
(i) gauge degeneracies in the Maxwell variables  
(ii) particle--relabeling degeneracies of the advected density  
(iii) integrability and compatibility constraints (Gauss law, 
$\nabla\cdot B=0$, etc.). Crucially, $\ker(\Omega_{\mathrm{red}}|_{C_\infty})$ contains precisely the directions generated by gauge transformations
$(A,\Phi)\mapsto(A+\nabla\chi,\Phi-\partial_t\chi)$ and by
particle-relabeling symmetries of the density $F$.
Thus quotienting by this kernel removes all unphysical directions.

\paragraph{\textbf{Step 3: Identification with the Marsden--Weinstein Lie--Poisson structure.}}
Consider the quotient $\mathcal P
  = C_\infty \big/ 
    \ker\!\left(\Omega_{\mathrm{red}}|_{C_\infty}\right)$. By general presymplectic reduction 
\cite{MarsdenWeinstein1974,Marle1985,OrtegaRatiu2004},  
$\mathcal P$ inherits a genuine Poisson structure defined by
\[
\{\mathcal F,\mathcal G\}_{\mathrm{SR}}
  = \Omega_{\mathrm{red}}(X_{\mathcal F},X_{\mathcal G}).
\]

Because the kernel corresponds exactly to the isotropy of the semidirect
action, the quotient is naturally isomorphic to the dual of the 
semidirect-product Lie algebra, $\mathcal P
\;\cong\;
\mathfrak g_{\mathrm{sdp}}^{\,*}$. Using the intrinsic pairing 
$\langle\,\cdot\,,\cdot\,\rangle:
   \mathfrak g_{\mathrm{sdp}}^{\,*}\times\mathfrak g_{\mathrm{sdp}}
   \to\mathbb{R}$,
a direct computation shows that the bracket decomposes as
\[
\{\mathcal F,\mathcal G\}_{\mathrm{SR}}
=
\left\langle 
    z,\,
    \big[
      \nabla\mathcal F(z)\!,
      \nabla\mathcal G(z)
    \big]
\right\rangle
\;+\;
\{\mathcal F,\mathcal G\}_{\mathrm{Max}}
\]
where the second term is the \emph{canonical Maxwell bracket}
arising from the electromagnetic canonical two–form:
\[
\{\mathcal F,\mathcal G\}_{\mathrm{Max}}
  =
  \int_{\mathbb{R}^3}
      \bigg(
          \frac{\delta\mathcal F}{\delta E}\cdot 
            \nabla\times\frac{\delta\mathcal G}{\delta B}
          \;-\;
          \frac{\delta\mathcal G}{\delta E}\cdot 
            \nabla\times\frac{\delta\mathcal F}{\delta B}
      \bigg)\,dx.
\]

Combining the Lie–Poisson part generated by the semidirect product
with the canonical Maxwell contribution yields precisely the
Marsden–Weinstein Maxwell–Vlasov bracket 
\cite{MarsdenWeinstein1982,MorrisonGreene1980,Morrison1982}.
This establishes item~(1).

\paragraph{\textbf{Step 4: Passage to Eulerian coordinates and the Morrison--Greene bracket.}}

Let \( z = (f,E,B) \) denote the Eulerian (physical) variables.  
For any observable \( \mathcal F = \mathcal F(f,E,B) \), the chain rule for 
variational derivatives gives
\[
\delta\mathcal F
  = \int\!\frac{\delta\mathcal F}{\delta f}(x,v)\,\delta f(x,v)\,dx\,dv
    \;+\;
    \int\!\frac{\delta\mathcal F}{\delta E}(x)\cdot\delta E(x)\,dx
    \;+\;
    \int\!\frac{\delta\mathcal F}{\delta B}(x)\cdot\delta B(x)\,dx .
\]

Substituting these variations into the Lie--Poisson bracket obtained in 
Step~3 (i.e.\ the bracket induced by \(\Omega_{\mathrm{LP}}+\Omega_{\mathrm{can}}\)), 
one computes each contribution explicitly:

\medskip
\noindent\emph{(1) Vlasov (particle) contribution.}
The Lie--Poisson part associated with the semidirect product
\(
\mathfrak g_{\mathrm{sdp}}
  = \mathfrak X(T\mathbb{R}^3)\ltimes\mathrm{Dens}(T\mathbb{R}^3)
\)
yields the Vlasov term
\[
\int f(x,v)\,
   \Big\{
     \frac{\delta\mathcal F}{\delta f},
     \frac{\delta\mathcal G}{\delta f}
   \Big\}_{x,v}\,dx\,dv,
\]
where $\{h,k\}_{x,v}
  = \nabla_x h\cdot \nabla_v k
    \;-\;
    \nabla_v h\cdot \nabla_x k$ is the canonical Poisson bracket on \(T\mathbb{R}^3\).

\medskip
\noindent\emph{(2) Field--particle cross terms.}
The mixed terms coupling \(f\) with \((E,B)\) arise from the 
semidirect-product action of phase--space diffeomorphisms on the 
electromagnetic potentials.  
These produce the standard Morrison field--particle contributions, which 
combine with the Vlasov part to encode the Lorentz-force coupling.

\medskip
\noindent\emph{(3) Pure Maxwell contribution.}
The electromagnetic part of the presymplectic form, 
\(\Omega_{\mathrm{can}}\), yields the canonical Maxwell bracket:
\[
\int\!\left(
    \frac{\delta\mathcal F}{\delta E}\cdot
        \big(\nabla\times\frac{\delta\mathcal G}{\delta B}\big)
    \;-\;
    \frac{\delta\mathcal G}{\delta E}\cdot
        \big(\nabla\times\frac{\delta\mathcal F}{\delta B}\big)
\right)\,dx.
\]

Combining (1)--(3) reproduces exactly the full 
Morrison--Greene bracket for the Vlasov--Maxwell system, 
showing item~(2).

Finally, a functional $\mathcal C$ is a Casimir iff
$X_{\mathcal C}\in\ker(\Omega_{\mathrm{red}})$, i.e., iff it is invariant
under gauge and particle-relabeling directions.  
These are precisely entropy-like functionals $\int \Phi(f)\,dx\,dv$; magnetic helicity and its generalizations; and Gauss constraints $\nabla\cdot B=0$ and $\nabla\cdot E=\rho(f)$. This matches exactly the Morrison–Greene Casimirs.
\end{proof}

\begin{remark}
The analysis above shows that the Skinner--Rusk formulation does not
merely coexist with the classical Hamiltonian descriptions of the
Maxwell--Vlasov system: it \emph{organizes and unifies} them within a
single geometric framework.  
Both the Marsden--Weinstein Lie--Poisson structure and the
Morrison--Greene noncanonical bracket arise as natural quotients of the
same underlying presymplectic object $(W_{\mathrm{red}},\Omega_{\mathrm{red}})$:
the former is obtained by passing to the coadjoint variables of the
semidirect-product group, while the latter results from expressing the
same bracket in Eulerian phase--space coordinates.

In this perspective, the degeneracies of $\Omega_{\mathrm{red}}$ encode
precisely the gauge and particle--relabeling symmetries, and the Casimir
invariants of the Morrison bracket correspond to functions that are
constant along the null leaves of $\Omega_{\mathrm{red}}$.  
Consequently, the familiar energy--Casimir method of plasma stability
acquires a clean geometric interpretation: it amounts to performing
constrained variations on $C_\infty$ along directions transverse to
$\ker(\Omega_{\mathrm{red}})$.

This unified viewpoint clarifies why seemingly different Hamiltonian
formulations of Maxwell--Vlasov share the same invariants and the same
stability theory: they are all shadows of a single presymplectic
structure before quotienting.\hfill$\diamond$
\end{remark}

\section{Symmetry Breaking and Reduced Dynamics Near Equilibria}

In this section we analyse how the Skinner--Rusk (SR) and 
Gotay--Nester--Hinds (GNH) formalisms adapt to the study of dynamics in 
a neighbourhood of a given Maxwell--Vlasov equilibrium.  
A distinctive feature of many physically relevant equilibria is the 
presence of \emph{partial symmetry breaking}: the full 
particle--relabeling symmetry group $G = \Diff^s(T\mathbb{R}^3)$ is no longer an exact symmetry of the equilibrium, which is fixed instead  
by a proper isotropy subgroup.  This leads naturally to a refined 
reduction scheme based on the residual symmetry and on a translated SR 
bundle around the equilibrium.



Let $(F_0,E_0,B_0)$ be a stationary solution of the Maxwell--Vlasov 
system in Eulerian variables, and let $(\psi_0,A_0,\Phi_0) \in Q$ be one of its Lagrangian representatives in the configuration space
$
Q = \Diff^s(T\mathbb{R}^3)\times \mathcal V^s\times \mathcal A^s$. We denote by $G = \Diff^s(T\mathbb{R}^3)$ the full particle--relabeling group, 
acting on $Q$ by
\[
g\cdot(\psi,\Phi,A) = (\psi\circ g^{-1},\,\Phi,\,A).
\] The \emph{isotropy subgroup} of the equilibrium 
representative $(\psi_0,A_0,\Phi_0)$ is
\[
G_0 := \Iso(\psi_0,A_0,\Phi_0)
  = \{\,g\in G \mid g\cdot(\psi_0,A_0,\Phi_0) = (\psi_0,A_0,\Phi_0)\,\}.
\]
Equivalently, on Eulerian variables $z_0 = (F_0,E_0,B_0)$,
\[
G_0 = \{\, g\in G \mid g\cdot z_0 = z_0 \,\}.
\]

\begin{definition}
An equilibrium $(F_0,E_0,B_0)$ is said to \emph{partially break} the 
$G$--symmetry if its isotropy subgroup $G_0$ is a proper subgroup of $G$, that is, $G_0 \subsetneq G$. We call $(G,G_0)$ the \emph{symmetry breaking pair} associated with the 
equilibrium.
\end{definition}

We now rewrite the unified Skinner--Rusk system in variables adapted to 
the equilibrium.  Let $(F,E,B) = (F_0,E_0,B_0) + (\widetilde F,\widetilde E,\widetilde B)$ and, in the configuration space, $(\psi,A,\Phi)
  = (\psi_0,A_0,\Phi_0) + (\widetilde\psi,\widetilde A,\widetilde\Phi)$ in an appropriate affine sense (e.g.\ via exponential charts). 
This induces a decomposition of the bundle $W = TQ \oplus T^*Q$ into the equilibrium point $w_0 = (q_0,v_0,p_0)$ and perturbations 
$(\widetilde q,\widetilde v,\widetilde p)$ around it.

\begin{definition}
The \emph{translated $W$ bundle} near the equilibrium $w_0$ is
\[
W_{F_0} := \{\, w - w_0 \mid w\in W \,\}
  \cong T_{w_0}W ,
\]
equipped with the pullback presymplectic form
\[
\Omega_{F_0} := \bigl(\tau_{w_0}\bigr)^*\Omega,
\]
where $\tau_{w_0}$ denotes translation by $w_0$.
The translated Hamiltonian is defined by
\[
H_{F_0}(\widetilde w)
  := H(w_0 + \widetilde w) - H(w_0).
\]
\end{definition}

\begin{proposition}
\label{prop:H-expansion}
Let $w_0\in W$ be an equilibrium of the SR--Maxwell--Vlasov system, and 
let $H_{F_0}$ be the translated Hamiltonian
\[
H_{F_0}(\widetilde w)
  := H(w_0+\widetilde w)-H(w_0).
\]
Then $H_{F_0}$ admits a Taylor expansion of the form
\[
H_{F_0}(\widetilde w)
  = H^{(2)}(\widetilde w)
    + H^{(3)}(\widetilde w)
    + \cdots,
\]
where $H^{(2)}$ is a quadratic form on $T_{w_0}W$ and 
$H^{(k)}$ is homogeneous of degree $k$ in $\widetilde w$ for $k\ge 3$.
Moreover, the isotropy subgroup $G_0$ of $w_0$ acts on the translated 
system $(W_{F_0},\Omega_{F_0},H_{F_0})$ by presymplectic automorphisms.
\end{proposition}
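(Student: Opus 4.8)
The plan is to prove the two assertions separately: the Taylor expansion follows from smoothness of the Skinner--Rusk Hamiltonian together with the defining property of an equilibrium, while the equivariance claim is obtained by transporting the $G$-invariance of Lemma~\ref{prop:SR-invariance} through the translation chart $\tau_{w_0}$. For the \textbf{Taylor expansion}, I would first record that $H_{F_0}(0)=H(w_0)-H(w_0)=0$, so there is no constant term. The key point is that the linear term vanishes: since $w_0$ is an equilibrium of $(W,\Omega,H)$, the SR field solving $\iota_X\Omega=\dd H$ generates no physical motion there, i.e.\ $\iota_{X(w_0)}\Omega=0$; because $w_0\in C_\infty$ satisfies $\dd H(w_0)=\iota_{X(w_0)}\Omega$ as one-forms on $T_{w_0}W$, we conclude $\dd H(w_0)=0$, so $w_0$ is a critical point of $H$ on the final constraint manifold. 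Given the Sobolev regularity $s>\tfrac32+1$ assumed throughout, the Low Lagrangian and hence $H=\langle p,v\rangle-L$ is a smooth map on the weak Hilbert manifold $W$, so Taylor's theorem in the Fréchet sense applies, yielding the homogeneous components $H^{(k)}(\widetilde w)=\tfrac1{k!}D^kH(w_0)[\widetilde w,\dots,\widetilde w]$ with $H^{(2)}$ the Hessian quadratic form on $T_{w_0}W\cong W_{F_0}$ and $H^{(k)}$ homogeneous of degree $k$. Because the push-forward $\psi\mapsto\psi_*F_0$ and the exponential charts depend nonlinearly on their arguments, the expansion genuinely carries all orders $k\ge2$.

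For the \textbf{$G_0$-action}, I would exploit that every $g\in G_0$ fixes $w_0$. Writing $\Psi_g\colon w\mapsto g\cdot w$ for the ambient action, I define the induced action on $W_{F_0}$ by conjugation with the translation, $\widetilde\Psi_g:=\tau_{w_0}^{-1}\circ\Psi_g\circ\tau_{w_0}$, so that $\widetilde\Psi_g(\widetilde w)=\Psi_g(w_0+\widetilde w)-w_0$. The fixed-point property $g\cdot w_0=w_0$ gives $\widetilde\Psi_g(0)=0$, so $\widetilde\Psi_g$ preserves the origin of $W_{F_0}\cong T_{w_0}W$ and is well defined. Invariance of the form is then a formal consequence of functoriality of pullback together with the conjugation identity $\tau_{w_0}\circ\widetilde\Psi_g=\Psi_g\circ\tau_{w_0}$: one computes $\widetilde\Psi_g^{\,*}\Omega_{F_0}=\widetilde\Psi_g^{\,*}\tau_{w_0}^{*}\Omega=\tau_{w_0}^{*}\Psi_g^{*}\Omega=\tau_{w_0}^{*}\Omega=\Omega_{F_0}$, using $\Psi_g^{*}\Omega=\Omega$ from Lemma~\ref{prop:SR-invariance}. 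Applying the same identity to $H$, and using $H_{F_0}=H\circ\tau_{w_0}-H(w_0)$, gives $H_{F_0}\circ\widetilde\Psi_g=H\circ\Psi_g\circ\tau_{w_0}-H(w_0)=H\circ\tau_{w_0}-H(w_0)=H_{F_0}$. Thus $G_0$ acts by presymplectic automorphisms of $(W_{F_0},\Omega_{F_0})$ that in addition preserve $H_{F_0}$, so it is a symmetry of the whole translated triple.

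The algebra of pullbacks above is routine; the genuine work is analytic. I expect the \textbf{main obstacle} to lie in making the translated bundle and its charts rigorous in the weak Hilbert setting: the ``translation'' $\tau_{w_0}$ is only affine within an exponential chart on $\Diff^s(T\RR^3)$, so one must check that $\widetilde\Psi_g$ is a smooth self-map of $W_{F_0}$ and that the nonlinear $G$-action linearizes compatibly with the chart---equivalently, that the conjugated action agrees to first order with the tangent action $T_{w_0}\Psi_g$ on $T_{w_0}W$. A secondary subtlety is that in infinite dimensions Taylor's theorem produces the $H^{(k)}$ as genuine continuous homogeneous polynomials but the series need only be asymptotic rather than convergent; the statement should therefore be read as asserting the existence of the graded components $H^{(k)}$, which is precisely what the subsequent linearized GNH analysis requires.
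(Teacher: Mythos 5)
Your proposal is correct and follows essentially the same route as the paper: a Taylor expansion of the smooth Hamiltonian at the critical point $w_0$ (the paper simply asserts $\dd H(w_0)=0$, which you additionally justify via $X(w_0)\in\ker\Omega_{w_0}$ and the presymplectic equation), followed by $G_0$-invariance of $\Omega_{F_0}$ and $H_{F_0}$ obtained by conjugating the ambient $G$-action with the translation $\tau_{w_0}$ and using the fixed-point property $g\cdot w_0=w_0$ together with Lemma~\ref{prop:SR-invariance}. The paper's only further step is the explicit remark that invariance of both $\Omega_{F_0}$ and $H_{F_0}$ implies $G_0$ maps solutions of $\iota_X\Omega_{F_0}=\dd H_{F_0}$ to solutions, which is immediate from what you establish.
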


\begin{proof}
Since $w_0$ is an equilibrium, one has $\dd H(w_0)=0$.  
Hence the Taylor expansion of $H$ at $w_0$ reads
\[
H(w_0+\widetilde w)
  = H(w_0)
    + \frac12 \dd^{\,2}H(w_0)[\widetilde w,\widetilde w]
    + \sum_{k\ge 3} \frac{1}{k!}\dd^{\,k}H(w_0)[\widetilde w^{\otimes k}].
\]
Subtracting the constant value $H(w_0)$ yields the stated formula, with
\[
H^{(2)}(\widetilde w)
   := \frac12\,\dd^{\,2}H(w_0)[\widetilde w,\widetilde w],
\qquad
H^{(k)}(\widetilde w)
   := \frac{1}{k!}\dd^{\,k}H(w_0)[\widetilde w^{\otimes k}],\ k\ge3.
\]

By construction (see Lemma~\ref{prop:SR-invariance}), the action of $G$ on 
$W$ preserves $\Omega$, that is, $(g^\ast\Omega)(w)=\Omega(w)$ for all $g\in G$.

Since the translation $\tau_{w_0} : \widetilde w\mapsto w_0+\widetilde w$ 
is $G_0$–equivariant when restricted to the isotropy subgroup
($g\cdot w_0=w_0$ for $g\in G_0$), we have
\[
(g^\ast \Omega_{F_0})(\widetilde w)
  = (g^\ast \tau_{w_0}^\ast \Omega)(\widetilde w)
  = (\tau_{w_0}^\ast g^\ast \Omega)(\widetilde w)
  = \tau_{w_0}^\ast \Omega(\widetilde w)
  = \Omega_{F_0}(\widetilde w).
\]
Thus $\Omega_{F_0}$ is $G_0$–invariant.

Since $H$ is $G$--invariant and $g\cdot w_0=w_0$ for $g\in G_0$,
\[
H_{F_0}(g\cdot \widetilde w)
  = H(g\cdot(w_0+\widetilde w)) - H(w_0)
  = H(w_0+g\cdot\widetilde w) - H(w_0)
  = H(w_0+\widetilde w) - H(w_0)
  = H_{F_0}(\widetilde w).
\]

Because both $\Omega_{F_0}$ and $H_{F_0}$ are $G_0$--invariant, if 
$X_{H_{F_0}}$ is a solution of the presymplectic equation
\[
\iota_{X_{H_{F_0}}}\Omega_{F_0} = \dd H_{F_0},
\]
then for every $g\in G_0$ the pushforward $g_\ast X_{H_{F_0}}$ is also a 
solution:
\[
\iota_{g_\ast X_{H_{F_0}}}\Omega_{F_0}
  = \dd H_{F_0}.
\]
Thus $G_0$ acts on $(W_{F_0},\Omega_{F_0},H_{F_0})$ by presymplectic 
automorphisms, i.e.\ it maps solutions of the presymplectic dynamics to 
solutions.
\end{proof}

The first nontrivial contribution to the dynamics comes from the 
quadratic part $H^{(2)}$. The corresponding vector field $X^{(1)}$ on 
$W_{F_0}$ is defined by
\[
\iota_{X^{(1)}}\Omega_{F_0} = \dd H^{(2)} .
\]


We now make explicit the relation between the linear presymplectic 
dynamics and the usual linearized Maxwell--Vlasov equations around 
$(F_0,E_0,B_0)$. For concreteness we work in the non-relativistic 
setting with Hamiltonian density
\[
h(x,v;A,\Phi)
  = \frac{|v|^2}{2}
    + q\,\Phi(x)
    - q\,v\cdot A(x),
\]
so that the single-particle characteristics are generated by the 
canonical Poisson bracket $\{\cdot,\cdot\}_{x,v}$ on $T\mathbb{R}^3$.

Let $(\widetilde f,\widetilde E,\widetilde B)$ be Eulerian perturbations
of $(F_0,E_0,B_0)$, and denote
\[
E_0 = -\partial_t A_0 - \nabla\Phi_0,
\qquad
B_0 = \nabla\times A_0,
\]
with analogous definitions for the perturbed fields 
$(\widetilde E,\widetilde B)$ associated with 
$(\widetilde A,\widetilde\Phi)$.

Next, we provide the presymplectic formulation of the linearized Maxwell--Vlasov system.

\begin{theorem}
\label{thm:linearized}
The presymplectic system $(W_{F_0},\Omega_{F_0},H^{(2)})$, together with 
the GNH algorithm, yields a final linear constraint subspace 
$C_\infty^{(1)}\subset W_{F_0}$ on which the induced dynamics of 
$X^{(1)}$ is equivalent to the linearized Maxwell--Vlasov equations
around $(F_0,E_0,B_0)$:
\begin{align}
\partial_t \widetilde f
  &\;+\; \big\{\widetilde f,h_0\big\}_{x,v}
  \;+\; \big\{F_0,\widetilde h\big\}_{x,v}
   \;=\; 0, \label{eq:lin-Vlasov}\\[0.5em]
\partial_t \widetilde B
  &\;+\; \nabla\times \widetilde E
   \;=\; 0, \label{eq:lin-Maxwell-B}\\[0.3em]
\partial_t \widetilde E
  &\;-\; \nabla\times \widetilde B
   \;=\; -\,\widetilde j, \label{eq:lin-Maxwell-E}
\end{align}
together with the linearized constraints
\begin{equation}\label{eq:lin-constraints}
\nabla\cdot \widetilde B = 0,
\qquad
\nabla\cdot \widetilde E = \widetilde\rho,
\end{equation}
where
\[
h_0(x,v) := h(x,v;A_0,\Phi_0),
\qquad
\widetilde h(x,v) := q\,\widetilde\Phi(x) - q\,v\cdot \widetilde A(x),
\]
and the perturbed charge and current densities are
\[
\widetilde\rho(x)
  = \int_{\mathbb{R}^3} \widetilde f(x,v)\,dv,
\qquad
\widetilde j(x)
  = \int_{\mathbb{R}^3} v\,\widetilde f(x,v)\,dv.
\]
\end{theorem}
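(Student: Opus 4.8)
The plan is to exploit the rigidity of the canonical presymplectic form under affine translation together with the fact that the entire construction of Theorem~\ref{thm:SR-gives-MV} linearizes term by term. My first observation would be that $\Omega=\operatorname{pr}_2^\ast\omega_{\mathrm{can}}$ is, in the affine coordinates $(\widetilde q,\widetilde v,\widetilde p)$ on $T_{w_0}W$, a constant-coefficient closed two-form, so its pullback under the translation $\tau_{w_0}$ is unchanged: $\Omega_{F_0}=\Omega$. Hence $(W_{F_0},\Omega_{F_0})$ is simply the model space $T_{w_0}W$ carrying the same pairing $\Omega(Y_1,Y_2)=\langle\delta q_1,\delta p_2\rangle-\langle\delta q_2,\delta p_1\rangle$, and $X^{(1)}$ is the \emph{linear} vector field determined by $\iota_{X^{(1)}}\Omega=\dd H^{(2)}$.

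Next I would compute $\dd H^{(2)}$. Since $w_0$ is an equilibrium, $\dd H(w_0)=0$ and $H^{(2)}(\widetilde w)=\tfrac12\,\dd^{\,2}H(w_0)[\widetilde w,\widetilde w]$, so $\dd H^{(2)}(\widetilde w)=\dd^{\,2}H(w_0)[\widetilde w,\cdot]$ is linear in $\widetilde w$. Differentiating $H=\langle p,v\rangle-L$ twice at $w_0$ produces the bilinear cross term $\langle\delta p,\widetilde v\rangle+\langle\widetilde p,\delta v\rangle$ together with the (negative) Hessian $D^2 L(q_0,v_0)$. The crucial structural point is that the degeneracies of the Low Lagrangian identified in Proposition~\ref{prop:Low-degenerate} are linear in nature—$L$ depends on $\dot A+\nabla\Phi$ only through a quadratic form and is independent of $\dot v$—so the second variation inherits exactly the same null directions in $\dot\Phi$ and $\dot v$. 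Consequently the linearized Legendre map has the same degeneracy pattern as the nonlinear one, and the primary linear constraint space $C_0^{(1)}$ is cut out by $\widetilde P_\Phi=0$ and $\widetilde P_v=0$ together with the linearized relations determining $\widetilde P_\psi$ and $\widetilde P_A$ through the nondegenerate block of $D^2 L$.

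Solving $\iota_{X^{(1)}}\Omega=\dd H^{(2)}$ componentwise—matching the coefficients of $\delta p$, $\delta v$ and $\delta q$ exactly as in the derivation of \eqref{eq:SR-q-evolution}--\eqref{eq:SR-Euler-Lagrange}—then yields the linearized SR equations $\dot{\widetilde q}=\widetilde v$, the linearized momentum relation $\widetilde p=D^2_{vq}L\,\widetilde q+D^2_{vv}L\,\widetilde v$, and the linearized Euler--Lagrange equation $\dot{\widetilde p}=D^2_{qq}L\,\widetilde q+D^2_{qv}L\,\widetilde v$. I would then run the GNH algorithm on this linear system, arguing that it is precisely the linearization at $w_0$ of the nonlinear chain $C_0\supset C_1\supset C_2=C_\infty$ of Theorem~\ref{thm:SR-gives-MV}: under the constant Banach rank hypothesis of Proposition~\ref{thm:GNH-final}, $w_0$ is a regular point of each $C_k$, so $C_k^{(1)}:=T_{w_0}C_k$ is a closed linear subspace and the linearized tangency conditions cut out exactly $T_{w_0}C_{k+1}$. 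The first iteration reproduces the linearized reconstruction relation and the gauge relation $\widetilde E=-\dot{\widetilde A}-\nabla\widetilde\Phi$; the second reproduces the linearized Gauss law $\nabla\cdot\widetilde E=\widetilde\rho$, the magnetic constraint $\nabla\cdot\widetilde B=0$, Faraday's law \eqref{eq:lin-Maxwell-B}, and Ampère--Maxwell's law \eqref{eq:lin-Maxwell-E}, terminating at $C_\infty^{(1)}=C_2^{(1)}$.

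Finally I would identify the linearized Vlasov equation \eqref{eq:lin-Vlasov}. Linearizing the push-forward $F=\psi_*F_0$ and the transport identity $\partial_t F+\{F,h\}_{x,v}=0$ about $(F_0,A_0,\Phi_0)$, and writing $h=h_0+\widetilde h$ with $h_0=h(\cdot;A_0,\Phi_0)$ and $\widetilde h=q\widetilde\Phi-q\,v\cdot\widetilde A$, the equilibrium identity $\partial_t F_0+\{F_0,h_0\}_{x,v}=0$ cancels the zeroth-order terms and the first-order part is exactly $\partial_t\widetilde f+\{\widetilde f,h_0\}_{x,v}+\{F_0,\widetilde h\}_{x,v}=0$. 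The perturbed sources $\widetilde\rho$ and $\widetilde j$ in \eqref{eq:lin-Maxwell-E} and \eqref{eq:lin-constraints} are the linearizations of the moment integrals $\rho[F]$, $J[F]$ and follow immediately. The main obstacle I anticipate is \emph{justifying that linearization commutes with the GNH algorithm} in the weak-Banach setting: one must verify that $w_0$ is a regular, constant-rank point of every constraint manifold $C_k$, so that the linearized constraints genuinely define $T_{w_0}C_k$ and the linearized chain neither stabilizes prematurely nor acquires spurious constraints. This is where the regularity hypotheses behind Propositions~\ref{thm:GNH-final} and~\ref{prop:C0-MaxwellVlasov} must be invoked with care, since in infinite dimensions the tangent spaces of weak submanifolds need not be complemented and the interchange of differentiation with the GNH tangency conditions is not automatic.
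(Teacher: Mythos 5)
Your proposal is correct, but it follows a genuinely different route from the paper's proof. The paper never returns to the unreduced Lagrangian variables: its proof of Theorem~\ref{thm:linearized} works entirely in the reduced Eulerian picture, invoking Proposition~\ref{prop:H-expansion} to identify $H^{(2)}$ with the second variation of the \emph{reduced} Hamiltonian $H_{\mathrm{red}}$, and then reading off the equations from the splitting $\Omega_{\mathrm{red}}=\Omega_{\mathrm{LP}}+\Omega_{\mathrm{can}}$ of Proposition~\ref{prop:Omega-red}: the Lie--Poisson block paired with $\delta H^{(2)}/\delta f = h_0+\widetilde h$ gives \eqref{eq:lin-Vlasov} directly, the canonical block gives \eqref{eq:lin-Maxwell-B}--\eqref{eq:lin-Maxwell-E}, and the GNH step supplies \eqref{eq:lin-constraints}; in effect it is a linearization of Theorem~\ref{thm:EP-from-SR}. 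You instead linearize the \emph{unreduced} SR system on $T_{w_0}W$ — translation invariance of $\Omega$, the linearized Legendre/Euler--Lagrange equations, the chain $C_k^{(1)}=T_{w_0}C_k$ — and only pass to Eulerian variables at the end by linearizing the push-forward and the transport identity, i.e.\ you linearize Theorem~\ref{thm:SR-gives-MV} rather than Theorem~\ref{thm:EP-from-SR}. Each route has a distinct payoff. Yours is arguably more faithful to the literal statement, since $(W_{F_0},\Omega_{F_0},H^{(2)})$ is defined in Section~5 on the unreduced bundle, and it makes explicit the one step the paper leaves silent: that linearization commutes with the GNH algorithm, which you correctly reduce to the constant-rank/regularity hypotheses of Proposition~\ref{thm:GNH-final} so that the tangent spaces $T_{w_0}C_k$ really are the linear constraint subspaces. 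The paper's route is shorter and lands immediately in the Eulerian variables $(\widetilde f,\widetilde E,\widetilde B)$ in which the theorem is phrased, at the cost of implicitly identifying the translated unreduced system with the translated reduced one and leaving the internal structure of $C_\infty^{(1)}$ unexamined. Your derivation of \eqref{eq:lin-Vlasov} from the equilibrium identity and the splitting $h=h_0+\widetilde h$, and of $\widetilde\rho$, $\widetilde j$ as linearized moments, agrees with the paper's conclusions.
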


\begin{proof}
The construction parallels the nonlinear SR/GNH analysis but restricted 
to the quadratic Hamiltonian $H^{(2)}$.  
By Proposition~\ref{prop:H-expansion}, $H^{(2)}$ is precisely the second 
variation of the reduced Hamiltonian $H_{\mathrm{red}}$ at the 
equilibrium; its functional derivatives reproduce the linearization of 
the Euler--Poincaré--Maxwell equations of Theorem~\ref{thm:EP-from-SR}.

In Eulerian variables, the variation of the Hamiltonian with respect to 
$f$ yields the linearized Vlasov equation.  
Writing 
\(
\delta H^{(2)}/\delta f = h_0 + \widetilde h
\)
and using the Lie--Poisson part of $\Omega_{F_0}$, one finds
\[
\partial_t \widetilde f
  + \{ \widetilde f,h_0\}_{x,v}
  + \{F_0,\widetilde h\}_{x,v} = 0,
\]
which is precisely~\eqref{eq:lin-Vlasov}.

On the electromagnetic sector, the canonical part of $\Omega_{F_0}$ and 
the quadratic Hamiltonian yield the linear Maxwell system
\eqref{eq:lin-Maxwell-B}--\eqref{eq:lin-Maxwell-E}, with sources given 
by the first moments of $\widetilde f$.  
The GNH algorithm applied to $(\Omega_{F_0},H^{(2)})$ produces the 
linearized Gauss constraints~\eqref{eq:lin-constraints} and ensures that 
they are preserved by the linear flow.  

Thus the induced dynamics of $X^{(1)}$ on $C_\infty^{(1)}$ coincides with 
the standard linearized Maxwell--Vlasov system around 
$(F_0,E_0,B_0)$.
\end{proof}

The presence of an isotropy subgroup $G_0$ at equilibrium induces a 
natural splitting of the Lie algebra
\[
\mathfrak g = \mathfrak g_0 \oplus \mathfrak g_0^\perp,
\]
where $\mathfrak g_0$ is the Lie algebra of $G_0$ and $\mathfrak g_0^\perp$ 
is a chosen complement.

\begin{proposition}[Goldstone-type directions]
\label{prop:Goldstone}
Let $z_0=(F_0,E_0,B_0)$ be an equilibrium with isotropy subgroup 
$G_0\subset G=\Diff(T\mathbb{R}^3)$ and Lie algebra 
$\mathfrak g_0\subset\mathfrak g$.  
Then the linearization of the translated Skinner--Rusk system satisfies:

\begin{itemize}
\item Directions generated by $\mathfrak g_0$ correspond to 
\emph{unbroken symmetry directions}: their infinitesimal action 
lies in the presymplectic kernel 
$\ker(\Omega_{F_0}|_{C_\infty^{(1)}})$ and they annihilate 
$\dd H^{(2)}$.  
Hence they produce neutral modes associated with gauge and 
relabeling invariances.

\item Directions generated by a complement 
$\mathfrak g_0^\perp$ correspond to \emph{broken symmetry directions}.  
Their infinitesimal action moves $z_0$ along a family of physically 
equivalent equilibria, so they are annihilated by $\dd H^{(2)}$ even 
though they need not lie in $\ker\Omega_{F_0}$.  
These yield Goldstone-type neutral modes associated with continuous 
families of nearby equilibria.
\end{itemize}

Thus the tangent space decomposes as
\[
T_{z_0}W_{F_0}
  \;\supset\;
  \mathfrak g_0 
  \;\oplus\;
  \mathfrak g_0^\perp,
\]
where the first summand produces presymplectic degeneracies and the 
second produces dynamical degeneracies, giving a geometric version of 
Goldstone's theorem.
\end{proposition}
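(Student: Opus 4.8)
The plan is to reduce everything to two structural inputs from earlier: the Noether-type momentum relation $\iota_{\xi_W}\Omega=\dd\langle J\circ\operatorname{pr}_2,\xi\rangle$ of Lemma~\ref{prop:SR-invariance}, and the $G_0$-equivariance of the translated system from Proposition~\ref{prop:H-expansion}. The whole dichotomy between the two kinds of neutral direction will then follow from a single question: \emph{is the momentum $\hat J_\xi:=\langle J\circ\operatorname{pr}_2,\xi\rangle$ constant along $C_\infty^{(1)}$ or not?} I would first record that, because the equilibrium retains only the residual symmetry $G_0$, the GNH algorithm is $G_0$-equivariant and the final linear constraint set $C_\infty^{(1)}$ is $G_0$-invariant; consequently the unbroken momenta $J_0:=J|_{\mathfrak g_0}$ are conserved and locally constant on $C_\infty^{(1)}$, whereas the broken components $J|_{\mathfrak g_0^\perp}$ are unconstrained and vary.

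For the unbroken directions $\xi\in\mathfrak g_0$, the $G_0$-invariance of $C_\infty^{(1)}$ makes $\xi_W$ tangent to it, and constancy of $\hat J_\xi$ gives $\dd\hat J_\xi|_{C_\infty^{(1)}}=0$; the momentum relation then yields $\iota_{\xi_W}\Omega_{F_0}|_{C_\infty^{(1)}}=0$, so $\xi_W\in\ker(\Omega_{F_0}|_{C_\infty^{(1)}})$. Differentiating the $G_0$-invariance $H^{(2)}(g\cdot\widetilde w)=H^{(2)}(\widetilde w)$ of Proposition~\ref{prop:H-expansion} along $\xi$ gives $\dd H^{(2)}(\xi_W)=0$, so these directions also annihilate the second variation. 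This proves item~(1): unbroken generators are pure presymplectic (gauge/relabeling) degeneracies.

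For the broken directions $\xi\in\mathfrak g_0^\perp$ I would argue in the translated chart, where $W_{F_0}\cong T_{w_0}W$ is a vector space so that $\dd H_{F_0}$ is an honest covector-valued map vanishing at the origin. Since $H$ is $G$-invariant (Lemma~\ref{prop:SR-invariance}), $\dd H(g\cdot w_0)\circ Tg=\dd H(w_0)=0$ with $Tg$ invertible, so every point of the orbit $t\mapsto\exp(t\xi)\cdot w_0$ is again an equilibrium, i.e.\ a family of physically equivalent stationary states, giving item~(2a). Writing this orbit as $\widetilde\gamma(t)$ in the translated chart, with $\widetilde\gamma(0)=0$ and $\dot{\widetilde\gamma}(0)=\xi_W(w_0)$, and differentiating $\dd H_{F_0}(\widetilde\gamma(t))=0$ at $t=0$ gives $\dd^{\,2}H(w_0)[\xi_W(w_0),\cdot]=0$, i.e.\ $\xi_W(w_0)$ lies in the null space of the Hessian $H^{(2)}$; this is item~(2b). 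Finally, because $\hat J_\xi$ is \emph{not} constant on $C_\infty^{(1)}$ for $\xi\in\mathfrak g_0^\perp$, the momentum relation forces $\iota_{\xi_W}\Omega_{F_0}|_{C_\infty^{(1)}}\neq0$ in general, so broken generators need not lie in $\ker\Omega_{F_0}$: they are genuine phase-space directions carrying zero-frequency dynamics, exactly the Goldstone modes.

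Assembling the two cases, the algebra splitting $\mathfrak g=\mathfrak g_0\oplus\mathfrak g_0^\perp$ pushes forward under $\xi\mapsto\xi_W(w_0)$ to the asserted decomposition of $T_{w_0}W_{F_0}$, with the first summand producing presymplectic-kernel degeneracies and the second producing dynamical (Goldstone) degeneracies; both annihilate $H^{(2)}$, but only the first sits in $\ker\Omega_{F_0}$, which is the presymplectic form of Goldstone's theorem. The main obstacle I expect is infinite-dimensional bookkeeping rather than the geometry: one must choose the complement $\mathfrak g_0^\perp$ closed and verify that $\xi\mapsto\xi_W(w_0)$ is injective on it (so that broken directions are genuinely nonzero and transverse to the kernel), confirm that $G\cdot w_0$ is a weak submanifold of equilibria so the $t$-differentiation is legitimate, and, most delicately, justify that the broken components of $J$ really fail to be constant on $C_\infty^{(1)}$, which is the precise point separating the two mechanisms and which hinges on $C_\infty^{(1)}$ being only a $\mathfrak g_0$-momentum level set rather than a full $\mathfrak g$-level set.
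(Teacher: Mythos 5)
Your proposal takes a genuinely different route from the paper, and the comparison is instructive, but the unbroken half of your dichotomy rests on a premise you have not established. The paper's own proof of the $\mathfrak g_0$ case is essentially pointwise and almost immediate: since $G_0$ fixes $z_0$, the fundamental vector field satisfies $\xi_{W_{F_0}}(z_0)=\tfrac{d}{d\epsilon}\big|_{\epsilon=0}\exp(\epsilon\xi)\cdot z_0=0$, so both $\iota_{\xi_{W_{F_0}}}\Omega_{F_0}=0$ and $\dd H^{(2)}\cdot\xi_{W_{F_0}}=0$ hold trivially at the equilibrium. You instead aim for the stronger statement that $\xi_W$ lies in $\ker\bigl(\Omega_{F_0}|_{C_\infty^{(1)}}\bigr)$ as a vector field along the constraint set, and you derive it from the Noether relation $\iota_{\xi_W}\Omega=\dd\langle J\circ\operatorname{pr}_2,\xi\rangle$ of Lemma~\ref{prop:SR-invariance} together with the claim that $\hat J_\xi$ is constant on $C_\infty^{(1)}$ for $\xi\in\mathfrak g_0$. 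That constancy claim is the gap: it does not follow from $G_0$-invariance of $C_\infty^{(1)}$ (the whole space $W_{F_0}$ is $G_0$-invariant, yet $\hat J_\xi$ is certainly not constant on it), nor from conservation of $\hat J_\xi$ along trajectories (a constraint set is a union of many trajectories lying on different momentum levels). What you would actually need is that the linearized GNH algorithm generates the $\mathfrak g_0$-momentum constraints as part of its hierarchy --- the analogue of Gauss' law arising as a secondary constraint --- so that $C_\infty^{(1)}$ sits inside a single momentum level set. That is plausible in this gauge-theoretic setting, and it is exactly the delicate point you flag at the end, but nothing in Lemma~\ref{prop:SR-invariance} or Proposition~\ref{prop:H-expansion} supplies it; as written, the pivot of your entire argument (constant versus non-constant momentum on $C_\infty^{(1)}$) is asserted rather than proved.

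By contrast, your treatment of the broken directions is correct and in fact more careful than the paper's. Both you and the paper use the orbit of equilibria $\epsilon\mapsto\exp(\epsilon\eta)\cdot z_0$; but the paper infers degeneracy of the Hessian from $\dd H(z_0)\cdot\eta_{W}=0$, which holds for \emph{every} direction at a critical point and so proves nothing by itself, whereas you differentiate the equilibrium condition $\dd H_{F_0}(\widetilde\gamma(t))=0$ along the curve of equilibria, which is the legitimate way to obtain $\dd^{\,2}H(w_0)[\xi_W(w_0),\cdot\,]=0$, using the linear structure of the translated chart. If you repair the unbroken case --- either by proving the momentum-constraint property of $C_\infty^{(1)}$, or by falling back on the paper's observation that $\mathfrak g_0$-generators vanish at $z_0$ --- your argument becomes a complete proof, and in the broken sector a sharper one than the paper's.
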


\begin{proof}
The kernel of $\Omega_{F_0}$ at the linear level consists of all 
infinitesimal symmetry directions of the unreduced Skinner--Rusk 
system, including particle-relabeling and gauge transformations.  
The fundamental vector field associated with $\xi\in\mathfrak g$ acts as
\[
\xi_{W_{F_0}}(z_0)=\frac{d}{d\epsilon}\Big|_{\epsilon=0}
   \exp(\epsilon\xi)\cdot z_0 .
\]

If $\xi\in\mathfrak g_0$ then $\exp(\epsilon\xi)\cdot z_0=z_0$ for all 
$\epsilon$.  Hence
\[
\xi_{W_{F_0}}(z_0)=0,
\qquad
\iota_{\xi_{W_{F_0}}}\Omega_{F_0}=0,
\qquad
\dd H^{(2)}(z_0)\cdot\xi_{W_{F_0}}=0.
\]
Thus $\mathfrak g_0$ generates directions that lie simultaneously in the 
kernels of $\Omega_{F_0}$ and of $\dd H^{(2)}$.  
These correspond to the unbroken gauge/relabeling symmetries, producing 
neutral modes of purely geometric origin.

If instead $\eta\in\mathfrak g_0^\perp$, then $z_0$ is not fixed by the 
flow $\exp(\epsilon\eta)$: the curve 
$\epsilon\mapsto\exp(\epsilon\eta)\cdot z_0$ lies entirely within the 
set of equilibria (because symmetry orbits preserve the Hamiltonian).  
Therefore
\[
\dd H(z_0)\cdot \eta_{W_{F_0}}=0
\qquad\Rightarrow\qquad
\dd H^{(2)}(z_0)\cdot\eta_{W_{F_0}}=0.
\]
Thus $\eta_{W_{F_0}}$ belongs to $\ker\dd H^{(2)}$ but typically not to 
$\ker\Omega_{F_0}$, yielding dynamical degeneracies associated with 
continuous symmetry-broken families.  

This reproduces the geometric mechanism underlying Goldstone modes:  
broken symmetries generate directions where the quadratic Hamiltonian is 
degenerate, producing neutrally stable linearized motions.
\end{proof}

The full linearized dynamics is defined on the final constraint subspace $C_\infty^{(1)} \subset W_{F_0}$, obtained by applying the GNH algorithm to $(\Omega_{F_0},H^{(2)})$.  
On this subspace, the restriction of $\Omega_{F_0}$ may still be 
degenerate.

\begin{definition}
The \emph{effective linearized phase space} around the equilibrium is 
the quotient
\[
W^{\mathrm{eff}}_{F_0}
  := C_\infty^{(1)} \big/ 
     \Bigl(\ker \Omega_{F_0}\big|_{C_\infty^{(1)}}\Bigr),
\]
endowed with the induced symplectic form $\Omega^{\mathrm{eff}}_{F_0}$.
\end{definition}

\begin{theorem}[Effective Hamiltonian linearization]
\label{thm:effective-linear}
The induced vector field
\[
X^{(1)}_{\mathrm{eff}} \in \mathfrak X(W^{\mathrm{eff}}_{F_0})
\]
defined by
\[
\iota_{X^{(1)}_{\mathrm{eff}}}\Omega^{\mathrm{eff}}_{F_0} 
  = \dd H^{(2)}_{\mathrm{eff}}
\]
provides a Hamiltonian description of the physically relevant 
perturbations around $(F_0,E_0,B_0)$, free of redundancies due to gauge 
and relabeling symmetries. Here $H^{(2)}_{\mathrm{eff}}$ is the quadratic 
Hamiltonian induced by $H^{(2)}$ on $W^{\mathrm{eff}}_{F_0}$.
\end{theorem}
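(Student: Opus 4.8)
The plan is to realize the statement as an instance of null-foliation (presymplectic) reduction applied to the linearized system $(W_{F_0},\Omega_{F_0},H^{(2)})$ restricted to its GNH-final subspace $C_\infty^{(1)}$, in close parallel with the reduction producing $\Omega_{\mathrm{red}}$ in Proposition~\ref{prop:Omega-red}. First I would record that $\Omega_{F_0}|_{C_\infty^{(1)}}$ is closed: since $\Omega_{F_0}=\tau_{w_0}^\ast\Omega$ and $\Omega$ is closed by Lemma~\ref{prop:Omega-presymplectic}, pullback and restriction preserve $\dd\Omega_{F_0}=0$. The characteristic distribution of a closed two-form, here $\ker(\Omega_{F_0}|_{C_\infty^{(1)}})$, is then automatically integrable, and the restricted form is invariant (hence basic) along its leaves. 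Granting that this kernel is a closed subspace of locally constant Banach rank, the leaf space $W^{\mathrm{eff}}_{F_0}=C_\infty^{(1)}/\ker(\Omega_{F_0}|_{C_\infty^{(1)}})$ is a weak Hilbert manifold with smooth projection $\pi^{\mathrm{eff}}$, and $\Omega_{F_0}|_{C_\infty^{(1)}}$ descends to a unique closed two-form $\Omega^{\mathrm{eff}}_{F_0}$ with $(\pi^{\mathrm{eff}})^\ast\Omega^{\mathrm{eff}}_{F_0}=\Omega_{F_0}|_{C_\infty^{(1)}}$. By construction its kernel is $T\pi^{\mathrm{eff}}\bigl(\ker(\Omega_{F_0}|_{C_\infty^{(1)}})\bigr)=0$, so $\Omega^{\mathrm{eff}}_{F_0}$ is weakly non-degenerate, i.e.\ genuinely symplectic in the weak sense.

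Next I would descend the Hamiltonian and solve the reduced equation. The essential point is that the GNH compatibility condition defining $C_\infty^{(1)}$ --- solvability of $\iota_{X^{(1)}}\Omega_{F_0}=\dd H^{(2)}$ by a field tangent to $C_\infty^{(1)}$ --- is equivalent, exactly as in Proposition~\ref{thm:GNH-final}, to the requirement that $\dd H^{(2)}$ annihilate $\ker(\Omega_{F_0}|_{C_\infty^{(1)}})$. Hence $H^{(2)}$ (a genuine quadratic form by Proposition~\ref{prop:H-expansion}) is constant along the null leaves and descends to a quadratic functional $H^{(2)}_{\mathrm{eff}}$ on $W^{\mathrm{eff}}_{F_0}$ characterized by $(\pi^{\mathrm{eff}})^\ast\dd H^{(2)}_{\mathrm{eff}}=\dd H^{(2)}|_{C_\infty^{(1)}}$. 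Because $\Omega^{\mathrm{eff}}_{F_0}$ is now weakly non-degenerate, the equation $\iota_{X^{(1)}_{\mathrm{eff}}}\Omega^{\mathrm{eff}}_{F_0}=\dd H^{(2)}_{\mathrm{eff}}$ admits at most one solution; existence follows by pushing any GNH solution $X^{(1)}$ on $C_\infty^{(1)}$ forward under $T\pi^{\mathrm{eff}}$. That this push-forward is well defined is precisely the last clause of Proposition~\ref{thm:GNH-final}: two solutions differ by a section of $\ker(\Omega_{F_0}|_{C_\infty^{(1)}})$, hence share the same image in $W^{\mathrm{eff}}_{F_0}$.

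To establish the physical content --- that $W^{\mathrm{eff}}_{F_0}$ is free of gauge and relabeling redundancy --- I would identify the quotiented kernel with the neutral directions of Proposition~\ref{prop:Goldstone}: the unbroken isotropy directions $\mathfrak g_0$, together with the electromagnetic gauge directions $(A,\Phi)\mapsto(A+\nabla\chi,\Phi-\partial_t\chi)$, span $\ker(\Omega_{F_0}|_{C_\infty^{(1)}})$ and are removed in the quotient, while the genuine dynamical perturbations survive; the broken-symmetry directions $\mathfrak g_0^\perp$, which annihilate $\dd H^{(2)}$ but not $\Omega_{F_0}$, descend to the nontrivial neutrally stable Goldstone modes of $X^{(1)}_{\mathrm{eff}}$. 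The main obstacle I anticipate lies in the regularity granting the manifold structure of the quotient: in the weak Hilbert setting one must verify that $\ker(\Omega_{F_0}|_{C_\infty^{(1)}})$ is a closed, complemented subspace of locally constant Banach rank, since otherwise the leaf space need not be a manifold and the descended form need not be even weakly non-degenerate --- this is the non-reflexivity obstruction flagged after Lemma~\ref{prop:Omega-presymplectic}. I would dispatch it by imposing the same constant-Banach-rank hypotheses already used to run the linearized GNH algorithm (Proposition~\ref{thm:GNH-final}) and by using the explicit description of the kernel from Proposition~\ref{prop:Goldstone} to exhibit the required closed complement.
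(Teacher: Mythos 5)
Your proposal is correct and follows essentially the same route as the paper's own proof: quotient $C_\infty^{(1)}$ by $\ker\bigl(\Omega_{F_0}|_{C_\infty^{(1)}}\bigr)$, observe that the descended form is non-degenerate so Hamiltonian vector fields are unique, obtain $X^{(1)}_{\mathrm{eff}}$ by projecting a GNH solution (well-defined since solutions differ by kernel sections), and identify the removed kernel with the gauge and relabeling directions. The paper's proof is a two-paragraph sketch of exactly this argument; your extra steps --- closedness and integrability of the characteristic distribution, descent of $H^{(2)}$ via the GNH solvability condition, and the constant-rank, complemented-kernel regularity needed for the quotient to be a manifold --- fill in details the paper itself only supplies afterwards, in Proposition~\ref{prop:symplectic-regularity}.
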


\begin{proof}
By construction, the quotient by 
$\ker(\Omega_{F_0}|_{C_\infty^{(1)}})$ removes precisely the directions 
in which the presymplectic form vanishes.  
On the quotient, $\Omega^{\mathrm{eff}}_{F_0}$ is non-degenerate, so each 
sufficiently smooth functional admits a unique Hamiltonian vector field.  

The dynamics of $X^{(1)}$ projects to $X^{(1)}_{\mathrm{eff}}$ because 
kernel directions correspond to gauge and symmetry transformations that 
do not change physical observables.  
Thus the projected system is a genuine symplectic Hamiltonian system 
encoding the physical linear perturbations.
\end{proof}



The construction of the effective linearized phase space
$W^{\mathrm{eff}}_{F_0}$ requires a regularity condition ensuring that
the presymplectic form $\Omega_{F_0}$ has constant rank when restricted
to the final constraint submanifold $C_\infty^{(1)}$.  This ensures that
the quotient by $\ker(\Omega_{F_0}|_{C_\infty^{(1)}})$ is a smooth
manifold and that the induced two–form is symplectic.  Analogous
conditions appear in the presymplectic reduction theory developed in
\cite{ColomboMartindeDiegoZuccalli2010,ColomboDeDiego2014}.

\begin{definition}
We say that the translated presymplectic system
$(W_{F_0},\Omega_{F_0},H_{F_0})$ is \emph{regular at the equilibrium}
$z_0$ if the following two conditions hold:
\begin{enumerate}
\item[(i)] The distribution $K=\ker\bigl(\Omega_{F_0}|_{C_\infty^{(1)}}\bigr)$ has constant rank in a neighbourhood of $z_0$.
\item[(ii)] The tangent bundle of the constraint manifold splits as $T C_\infty^{(1)}
 \;=\;
K \;\oplus\; \mathcal{H}$, for some smooth subbundle $\mathcal{H}$.
\end{enumerate}
\end{definition}

Condition~(i) ensures that the quotient
$C_\infty^{(1)}/K$ is a smooth manifold; condition~(ii) guarantees that
the pushforward of $\Omega_{F_0}$ along the quotient map is nondegenerate
on the image of $\mathcal{H}$.  Together these imply that the effective
phase space is a genuine symplectic manifold.

\begin{proposition}
\label{prop:symplectic-regularity}
If $(W_{F_0},\Omega_{F_0},H_{F_0})$ is regular at $z_0$ in the sense
above, then:
\begin{enumerate}
\item The effective phase space $W^{\mathrm{eff}}_{F_0}
  = C_\infty^{(1)} \big/ 
    \ker(\Omega_{F_0}|_{C_\infty^{(1)}})$ is a smooth symplectic manifold.

\item The quadratic Hamiltonian $H^{(2)}$ induces a smooth function
$H^{(2)}_{\mathrm{eff}}$ on $W^{\mathrm{eff}}_{F_0}$.

\item The effective linearized vector field $X^{(1)}_{\mathrm{eff}}$ of
Theorem~\ref{thm:effective-linear} is well-defined and smooth.
\end{enumerate}
\end{proposition}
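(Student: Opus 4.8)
The plan is to treat Proposition~\ref{prop:symplectic-regularity} as an instance of presymplectic reduction along the kernel foliation of $\Omega_{F_0}|_{C_\infty^{(1)}}$, carried out in the weak Hilbert category, and to deduce the three assertions in turn from the two regularity hypotheses. The starting observation is that $K=\ker(\Omega_{F_0}|_{C_\infty^{(1)}})$ is an \emph{integrable} distribution: since $\dd(\Omega_{F_0}|_{C_\infty^{(1)}})=0$, for $X,Y\in\Gamma(K)$ one has $\iota_{[X,Y]}\Omega_{F_0}=\mathcal L_X\iota_Y\Omega_{F_0}-\iota_Y\mathcal L_X\Omega_{F_0}=-\iota_Y\,\dd\iota_X\Omega_{F_0}=0$ on $C_\infty^{(1)}$, so $[X,Y]\in\Gamma(K)$. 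Combined with hypothesis~(i), that $K$ has constant rank, and hypothesis~(ii), that $TC_\infty^{(1)}=K\oplus\mathcal H$, the distribution $K$ defines a regular smooth foliation whose local leaf space is modelled on $\mathcal H$. This is precisely the situation handled by the weak-submanifold foliation theory of \cite{ColomboMartindeDiegoZuccalli2010,ColomboDeDiego2014}, which provides submersion charts straightening $K$ and endows the quotient $W^{\mathrm{eff}}_{F_0}=C_\infty^{(1)}/K$ with a smooth Hilbert-manifold structure for which $\pi:C_\infty^{(1)}\to W^{\mathrm{eff}}_{F_0}$ is a smooth surjective submersion whose tangent map restricts to an isomorphism on each horizontal fibre $\mathcal H_z$.

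For item~(1) it remains to descend the two-form. It is \emph{basic} with respect to the foliation: it annihilates $K$ by definition of the kernel, and it is invariant along $K$ since $\mathcal L_X(\Omega_{F_0}|_{C_\infty^{(1)}})=\dd\iota_X(\Omega_{F_0}|_{C_\infty^{(1)}})+\iota_X\,\dd(\Omega_{F_0}|_{C_\infty^{(1)}})=0$ for $X\in\Gamma(K)$. Hence there is a unique closed two-form $\Omega^{\mathrm{eff}}_{F_0}$ on the quotient with $\pi^\ast\Omega^{\mathrm{eff}}_{F_0}=\Omega_{F_0}|_{C_\infty^{(1)}}$. Weak nondegeneracy follows from the splitting: if $\bar v\in T_{\bar z}W^{\mathrm{eff}}_{F_0}$ satisfies $\iota_{\bar v}\Omega^{\mathrm{eff}}_{F_0}=0$ and $v\in\mathcal H_z$ is its horizontal lift, then $\iota_v\Omega_{F_0}$ vanishes on $\mathcal H$ (by pullback) and on $K$ (automatically), hence on all of $TC_\infty^{(1)}$, so $v\in K\cap\mathcal H=\{0\}$ and $\bar v=0$. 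Thus $(W^{\mathrm{eff}}_{F_0},\Omega^{\mathrm{eff}}_{F_0})$ is a symplectic Hilbert manifold.

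For item~(2) I would show that $H^{(2)}$ is constant along the leaves of $K$. By the GNH construction there exists on $C_\infty^{(1)}$ a field $X^{(1)}$ solving $\iota_{X^{(1)}}\Omega_{F_0}=\dd H^{(2)}$; for any $Y\in\Gamma(K)$ this gives $\dd H^{(2)}(Y)=\Omega_{F_0}(X^{(1)},Y)=-\iota_Y\Omega_{F_0}(X^{(1)})=0$. Hence $\dd H^{(2)}$ annihilates $K$, so $H^{(2)}$ is locally constant along the connected leaves and descends to a smooth $H^{(2)}_{\mathrm{eff}}$ with $\pi^\ast H^{(2)}_{\mathrm{eff}}=H^{(2)}|_{C_\infty^{(1)}}$, smoothness being inherited through the submersion charts. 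Item~(3) then follows by projecting $X^{(1)}$: since $\mathcal L_{X^{(1)}}(\Omega_{F_0}|_{C_\infty^{(1)}})=0$, the flow of $X^{(1)}$ preserves $K$, so $X^{(1)}$ is foliate and descends to a unique smooth $X^{(1)}_{\mathrm{eff}}\in\mathfrak X(W^{\mathrm{eff}}_{F_0})$, the indeterminacy of $X^{(1)}$ modulo $K$ collapsing under $\pi$. Pulling back $\iota_{X^{(1)}_{\mathrm{eff}}}\Omega^{\mathrm{eff}}_{F_0}=\dd H^{(2)}_{\mathrm{eff}}$ along $\pi$ recovers the defining relation on the nose, and weak nondegeneracy of $\Omega^{\mathrm{eff}}_{F_0}$ makes $X^{(1)}_{\mathrm{eff}}$ the unique such field, proving Theorem~\ref{thm:effective-linear} as a corollary.

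I expect the main obstacle to be the manifold claim in item~(1) rather than the algebraic descent: in the weak Hilbert category, integrability of $K$ together with constant rank does not by itself yield a Hausdorff smooth quotient, and one must genuinely invoke the slice and foliation results of \cite{ColomboMartindeDiegoZuccalli2010,ColomboDeDiego2014} to produce straightening charts and to guarantee that the horizontal model $\mathcal H$ is complemented, so that $\pi$ is a submersion in the Banach sense. Hypothesis~(ii) is exactly what supplies the complemented horizontal model; without it the quotient could fail to be smooth even when $K$ has constant rank, and the descended form could fail to be nondegenerate in the strong enough sense required for the effective dynamics.
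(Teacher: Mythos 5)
Your proof is correct and follows essentially the same route as the paper's: quotient $C_\infty^{(1)}$ by the constant-rank kernel foliation $K$, use the splitting $TC_\infty^{(1)} = K \oplus \mathcal{H}$ to get nondegeneracy of the descended form, and push $H^{(2)}$ and $X^{(1)}$ down to the quotient. Your version is in fact more complete than the paper's sketch, since you supply the steps it only asserts --- in particular that integrability of $K$ follows from closedness of $\Omega_{F_0}$ (not from the constant-rank hypothesis alone, as the paper's phrasing suggests), that basicness of the form and constancy of $H^{(2)}$ along leaves follow from Cartan calculus and the GNH solution field, and that $X^{(1)}$ is foliate and hence descends.
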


\begin{proof}
Regularity condition (i) implies that the kernel distribution $K$ is an
integrable, constant-rank subbundle, so the quotient
$C_\infty^{(1)}/K$ is a smooth manifold.  
Condition (ii) implies that the restriction of $\Omega_{F_0}$ to the
horizontal subbundle $\mathcal{H}$ is nondegenerate.  
This form descends to the quotient, yielding the symplectic form
$\Omega^{\mathrm{eff}}_{F_0}$.
Compatibility with the Hamiltonian follows from the construction of
$H^{(2)}_{\mathrm{eff}}$ and the identity
\[
\iota_{X^{(1)}_{\mathrm{eff}}}
  \Omega^{\mathrm{eff}}_{F_0}
   = \dd H^{(2)}_{\mathrm{eff}}.
\]
\end{proof}

\begin{remark}
In applications, regularity holds for a large class of
Maxwell--Vlasov equilibria of physical relevance (e.g.\ isotropic,
axisymmetric, or spatially homogeneous equilibria).  
Failure of regularity corresponds to singular equilibria such as
magnetic nulls, degenerate current sheets, or equilibria lying on
singular strata of the coadjoint representation.
\hfill$\diamond$
\end{remark}

In the next section, we will use the effective phase space 
$W^{\mathrm{eff}}_{F_0}$ and the quadratic Hamiltonian 
$H^{(2)}_{\mathrm{eff}}$ as the natural setting for energy--Casimir 
stability analysis, relating the second variation of energy--Casimir 
functionals to the spectrum of the linearized Maxwell--Vlasov operator.

\section{Energy--Casimir Methods and Stability in the Reduced SR Framework}

In this section we show how the presymplectic reduction of the
Skinner--Rusk formalism provides a natural geometric setting for
energy--Casimir methods and stability analysis of Maxwell--Vlasov
equilibria. The key point is that, once the Weinstein--Morrison
Lie--Poisson bracket is recovered from the reduced data, the usual
energy--Casimir constructions can be interpreted as constrained
variations on the final presymplectic constraint manifold, modulo the
symmetries and degeneracies described in Sections~5.

Throughout, we fix a stationary solution $(F_0,E_0,B_0)$ with isotropy
subgroup $G_0\subset G = \Diff(T\mathbb{R}^3)$ and consider the translated
bundle $(W_{F_0},\Omega_{F_0},H_{F_0})$ and its linearized structure as
introduced in Section~5.


Let $\mathcal{P}$ denote the Poisson manifold carrying the
Maxwell–Vlasov Lie–Poisson bracket obtained in Section~\ref{subsec:Hamiltonian-LP}, with coordinates $(F,E,B)$.
Equilibria are commonly characterized as critical points of an
energy--Casimir functional
\[
\mathcal{E}_C(F,E,B)
  := H_{\mathrm{MV}}(F,E,B) + C(F,E,B),
\]
where $H_{\mathrm{MV}}$ is the Maxwell--Vlasov Hamiltonian and $C$ is a
Casimir functional.

\begin{definition}
A smooth functional $C$ on $\mathcal{P}$ is called a
\emph{Casimir of the Maxwell–Vlasov Lie–Poisson bracket} if it Poisson commutes with all
observables:
\[
\{C,\mathcal{F}\}_{\mathrm{MV}} = 0
\qquad \text{for all functionals }\mathcal{F}.
\]
\end{definition}

In the SR/GNH picture, Casimirs admit the following interpretation.

\begin{proposition}
\label{prop:Casimirs-null}
Let $C$ be a functional on $\mathcal{P}$ and let
$\Omega_{\mathrm{red}}$ be the reduced presymplectic form on
$C_\infty$ as in Theorem~\ref{thm:MW-Morrison}. Then $C$ is a
Casimir of the Maxwell–Vlasov Lie–Poisson bracket if and only if its pullback to $C_\infty$
is constant along $\ker\bigl(\Omega_{\mathrm{red}}|_{C_\infty}\bigr)$;
equivalently,
\[
\dd C(z)\cdot v = 0
\qquad
\text{for all }z\in C_\infty \text{ and all }v\in\ker \Omega_{\mathrm{red}}(z).
\]
\end{proposition}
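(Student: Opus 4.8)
The plan is to run everything through the explicit bracket representation of Theorem~\ref{thm:MW-Morrison}: on $C_\infty$ the Maxwell--Vlasov bracket is realized as $\{\mathcal F,\mathcal G\}_{\mathrm{MV}} = \Omega_{\mathrm{red}}(X_{\mathcal F},X_{\mathcal G})$, where $X_{\mathcal F}$ is any vector field with $\iota_{X_{\mathcal F}}\Omega_{\mathrm{red}}=\dd\mathcal F$, the value being insensitive to the choice of $X_{\mathcal F}$ modulo $\ker\Omega_{\mathrm{red}}$ because a kernel ambiguity $k$ contributes $\Omega_{\mathrm{red}}(k,X_{\mathcal G})=0$. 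The whole content of the proposition is to collapse the universally quantified Casimir condition ``$\{C,\mathcal F\}_{\mathrm{MV}}=0$ for all $\mathcal F$'' into the single pointwise requirement that $\dd C$ vanish on $\ker\Omega_{\mathrm{red}}$, and the engine for this is the antisymmetric rewriting $\{C,\mathcal F\}_{\mathrm{MV}}=\dd C(X_{\mathcal F})=-\,\dd\mathcal F(X_C)$.

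First I would record that the admissible observables are the functionals of the Eulerian data $(f,E,B)$, i.e.\ the basic (gauge- and relabeling-invariant) functions on $C_\infty$; consequently $\dd\mathcal F$ is constrained to the annihilator of the symmetry directions and the reduced Hamiltonian vector fields $X_{\mathcal F}$ sweep out precisely the characteristic distribution of the reduced structure rather than all of $T C_\infty$. With this, the Casimir condition $\dd C(X_{\mathcal F})=0$ for all $\mathcal F$ is exactly the requirement that $\dd C$ annihilate every reduced Hamiltonian vector field. Passing to the Lie--Poisson coordinates $(u,\mu,A,\Phi;\alpha)$ of Proposition~\ref{prop:Omega-red}, this translates into the statement that the variational derivative $\nabla C(z)$ lie in the coadjoint isotropy, so that $\langle z,[\nabla C,\nabla\mathcal F]\rangle$ and the canonical Maxwell pairing both vanish for all $\mathcal F$; invoking the structural description of the degenerate directions of $\Omega_{\mathrm{red}}$ supplied by Theorem~\ref{thm:MW-Morrison} then identifies this isotropy/annihilation condition with constancy of the pullback of $C$ along $\ker(\Omega_{\mathrm{red}}|_{C_\infty})$. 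Reversing each implication gives the converse, since kernel-annihilation places $\dd C$ in the image of $\Omega_{\mathrm{red}}^\flat$, guarantees existence of $X_C$ (unique modulo the kernel), and feeds back through the same pairing.

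The step I expect to be the main obstacle is the weak-non-degeneracy bookkeeping that legitimately converts the quantifier ``for all $\mathcal F$'' into a pointwise kernel condition, namely the closed-range identity $\mathrm{Im}(\Omega_{\mathrm{red}}^\flat)=\mathrm{Ann}(\ker\Omega_{\mathrm{red}})$ on $C_\infty$. As emphasized in Lemma~\ref{prop:Omega-presymplectic}, in the Banach/Hilbert setting weak non-degeneracy yields only injectivity of $\Omega^\flat$ and not surjectivity onto the annihilator, so this identity is not automatic and must be extracted from the constant-rank/regularity hypotheses on the GNH constraint manifold. A second, equally essential, subtlety is to pin down $\ker(\Omega_{\mathrm{red}}|_{C_\infty})$ sharply enough—distinguishing the gauge/relabeling fibers from the residual coadjoint-isotropy directions—that ``constant along the null leaves'' coincides verbatim with the classical characterization of Lie--Poisson Casimirs as coadjoint invariants, so that the equivalence is a genuine characterization rather than a vacuous one; this identification is where the real geometric work lies and where I would concentrate the rigorous argument.
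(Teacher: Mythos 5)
Your first paragraph is essentially the paper's entire proof: the paper represents the bracket as $\{C,\mathcal F\}=\Omega_{\mathrm{red}}(X_C,X_{\mathcal F})$, converts the quantifier over $\mathcal F$ into the condition that $X_C$ lie in $\ker\Omega_{\mathrm{red}}$, and then declares this equivalent (``i.e.'') to constancy of $C$ along the null directions --- exactly the mechanism you set up via $\{C,\mathcal F\}=\dd C(X_{\mathcal F})=-\,\dd\mathcal F(X_C)$. So in approach you and the paper coincide.

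The difference is that the two obstacles you flag are genuine, and they are precisely what the paper's three-sentence proof suppresses. First, converting ``$\{C,\mathcal F\}=0$ for all $\mathcal F$'' into a pointwise kernel condition does require the closed-range identity $\mathrm{Im}\bigl(\Omega_{\mathrm{red}}^\flat\bigr)=\mathrm{Ann}\bigl(\ker\Omega_{\mathrm{red}}\bigr)$ together with the fact that differentials of admissible observables fill out that annihilator; as you note, Lemma~\ref{prop:Omega-presymplectic} only gives injectivity of $\Omega^\flat$, so this must come from the constant-rank hypotheses, which the paper never invokes at this point. Second, and more seriously, the paper's ``i.e.'' conflates two inequivalent conditions: $X_C\in\ker\Omega_{\mathrm{red}}$ (which, combined with the defining relation $\iota_{X_C}\Omega_{\mathrm{red}}=\dd C$, would force $\dd C=0$ on all of $TC_\infty$, i.e.\ $C$ locally constant) and $\dd C|_{\ker\Omega_{\mathrm{red}}}=0$ (which, since Theorem~\ref{thm:MW-Morrison} defines $\mathcal P$ as the quotient $C_\infty/\ker(\Omega_{\mathrm{red}}|_{C_\infty})$, is automatically satisfied by the pullback of \emph{every} functional on $\mathcal P$, Casimir or not). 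Read literally, the stated equivalence would thus make every observable a Casimir, which is false for a nontrivial Lie--Poisson bracket. Your insistence that the null directions be pinned down so that ``constant along the kernel'' matches the classical coadjoint-invariance characterization --- rather than being vacuous --- is exactly the repair the statement needs, and the coadjoint-isotropy reformulation in your second paragraph ($\nabla C(z)$ in the isotropy algebra, so that $\langle z,[\nabla C,\nabla\mathcal F]\rangle=0$ for all $\mathcal F$) is the correct target. In short: your proposal is not missing anything the paper supplies; rather, you have correctly located the two places where the paper's own argument is incomplete, and a rigorous proof would have to carry out the work you describe but do not complete --- in particular, showing that the relevant degenerate directions on $C_\infty$ comprise the preimages of the coadjoint-isotropy directions and not merely the gauge/relabeling fibers.
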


\begin{proof}
The Hamiltonian vector field $X_C$ is defined on the quotient of
$C_\infty$ by $\ker\Omega_{\mathrm{red}}$, and the Poisson bracket is
given by $\{C,\mathcal{F}\} = \Omega_{\mathrm{red}}(X_C,X_{\mathcal{F}})$.
Thus $\{C,\mathcal{F}\}=0$ for all $\mathcal{F}$ if and only if
$X_C$ lies in the kernel of $\Omega_{\mathrm{red}}$, i.e. if and only if
$C$ is constant along presymplectic null directions.
\end{proof}

\begin{remark}
In concrete plasma models, Casimirs include the usual entropy-type
functionals $\int \Phi(F)\,dx\,dv$ and electromagnetic invariants.
Proposition~\ref{prop:Casimirs-null} shows that these are exactly the
functionals that are insensitive to particle-relabeling and gauge
transformations, in agreement with the geometric picture of
Sections~5.\hfill$\diamond$
\end{remark}


We now reinterpret Maxwell--Vlasov equilibria as constrained critical
points of energy--Casimir functionals on the presymplectic manifold
$C_\infty$.

\begin{theorem}
\label{thm:EC-critical}
Let $(F_0,E_0,B_0)$ be a stationary solution of the Maxwell--Vlasov
system, and let $\mathcal{E}_C = H_{\mathrm{MV}} + C$ be an
energy--Casimir functional. The following are equivalent:
\begin{enumerate}
\item $(F_0,E_0,B_0)$ is an equilibrium for the Maxwell--Vlasov flow.

\item The corresponding point $z_0\in C_\infty$ is a critical point of
$\mathcal{E}_C$ restricted to $C_\infty$:
\[
\dd \mathcal{E}_C(z_0)\cdot \delta z = 0
\qquad\text{for all } \delta z\in T_{z_0}C_\infty.
\]

\item The Hamiltonian vector field $X_{\mathcal{E}_C}$ vanishes at $z_0$
modulo $\ker\Omega_{\mathrm{red}}$, i.e. $X_{\mathcal{E}_C}(z_0)$ is a
pure gauge/relabeling direction.
\end{enumerate}
\end{theorem}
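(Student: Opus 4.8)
The plan is to route the entire argument through the defining relation $\iota_{X_{\mathcal E_C}}\Omega_{\mathrm{red}}=\dd\mathcal E_C$, together with the observation that the Casimir contribution lives in the presymplectic kernel, so that the bare energy $H_{\mathrm{MV}}$ and the shaped functional $\mathcal E_C$ generate the same vector field modulo $\ker\Omega_{\mathrm{red}}$. Since the assignment $\mathcal F\mapsto X_{\mathcal F}$ is $\RR$-linear (both $\iota_{(\cdot)}\Omega_{\mathrm{red}}$ and $\dd$ are linear), I would first decompose $X_{\mathcal E_C}=X_{H_{\mathrm{MV}}}+X_C$. By Proposition~\ref{prop:Casimirs-null}, $C$ being a Casimir is equivalent to $\dd C$ annihilating $\ker(\Omega_{\mathrm{red}}|_{C_\infty})$, which forces $X_C\in\ker(\Omega_{\mathrm{red}}|_{C_\infty})$. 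Hence $X_{\mathcal E_C}\equiv X_{H_{\mathrm{MV}}}\pmod{\ker\Omega_{\mathrm{red}}}$, and this single congruence is what ties all three statements together. Rather than proving a cyclic chain, I would establish (2)$\Leftrightarrow$(3) and (1)$\Leftrightarrow$(3) separately.

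For (2)$\Leftrightarrow$(3) I would argue purely symplectically on $C_\infty$. Evaluating the defining relation on an arbitrary variation $\delta z\in T_{z_0}C_\infty$ gives
\[
\dd\mathcal E_C(z_0)\cdot\delta z=\Omega_{\mathrm{red}}\bigl(X_{\mathcal E_C}(z_0),\delta z\bigr).
\]
Thus $z_0$ is a critical point of $\mathcal E_C|_{C_\infty}$ exactly when $X_{\mathcal E_C}(z_0)$ is $\Omega_{\mathrm{red}}$-orthogonal to all of $T_{z_0}C_\infty$. Because the GNH construction guarantees that the solution $X_{\mathcal E_C}(z_0)$ is itself tangent to $C_\infty$, this orthogonality is precisely membership in $\ker(\Omega_{\mathrm{red}}|_{C_\infty})$, which is statement~(3) (a ``pure gauge/relabeling direction'' in the language of Section~5).

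For (1)$\Leftrightarrow$(3) I would invoke the congruence above. The Maxwell--Vlasov flow is the Lie--Poisson flow generated by $X_{H_{\mathrm{MV}}}$, and by Theorem~\ref{thm:MW-Morrison} its genuine physical phase space is the quotient $C_\infty/\ker(\Omega_{\mathrm{red}}|_{C_\infty})\cong\mathfrak g_{\mathrm{sdp}}^\ast$. The point $z_0$ therefore represents an equilibrium if and only if the projected vector field vanishes there, i.e.\ $X_{H_{\mathrm{MV}}}(z_0)\in\ker\Omega_{\mathrm{red}}$; by the congruence this holds iff $X_{\mathcal E_C}(z_0)\in\ker\Omega_{\mathrm{red}}$, which is~(3). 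Chaining the two equivalences closes the loop.

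The step I expect to be the main obstacle is the correct interpretation of \emph{equilibrium} in the presymplectic category. On $C_\infty$ the dynamics is defined only up to $\ker\Omega_{\mathrm{red}}$, so $z_0$ being stationary cannot mean $X_{H_{\mathrm{MV}}}(z_0)=0$ on the nose; it must mean that the reduced field vanishes, i.e.\ $X_{H_{\mathrm{MV}}}(z_0)\in\ker\Omega_{\mathrm{red}}$. This is exactly the slack the Casimir exploits: adding $C$ shifts $X_{H_{\mathrm{MV}}}$ by a kernel vector, so one may arrange the honest vanishing of $X_{\mathcal E_C}$ on the effective phase space $W^{\mathrm{eff}}_{F_0}$. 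I would make this precise by carrying out every step on $C_\infty$ and pushing to the quotient only at the end, appealing to the regularity hypothesis (constant rank of $\Omega_{\mathrm{red}}|_{C_\infty}$) so that ``modulo the kernel'' refers to a genuine smooth subbundle rather than a pointwise-varying obstruction.
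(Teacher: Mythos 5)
Your proposal is correct and takes essentially the same route as the paper's own proof: the paper likewise obtains (2)$\Leftrightarrow$(3) from the presymplectic identity $\iota_{X_{\mathcal{E}_C}}\Omega_{\mathrm{red}}=\dd\mathcal{E}_C$ on $C_\infty$ together with the null directions, and links (1) to (3) via the fact that Casimirs do not contribute to the Hamiltonian vector field (your congruence $X_{\mathcal{E}_C}\equiv X_{H_{\mathrm{MV}}} \bmod \ker\Omega_{\mathrm{red}}$, which is exactly the paper's use of Proposition~\ref{prop:Casimirs-null}) plus projection to the Poisson quotient $\mathcal{P}$. The only differences are organizational---you prove two biconditionals where the paper closes the cycle (1)$\Rightarrow$(3)$\Leftrightarrow$(2)$\Rightarrow$(1)---and your explicit attention to tangency of $X_{\mathcal{E}_C}$ to $C_\infty$ and to the meaning of ``equilibrium'' modulo the kernel spells out points the paper leaves implicit.
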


\begin{proof}
(1) $\Rightarrow$ (3) follows from the fact that equilibria are
exactly the fixed points (up to gauge) of the Hamiltonian flow
generated by $H_{\mathrm{WM}}$, and Casimirs do not contribute to the
Hamiltonian vector field. The equivalence between (2) and (3) is a
standard consequence of the presymplectic identity
$i_{X_{\mathcal{E}_C}}\Omega_{\mathrm{red}} = \dd \mathcal{E}_C$ on
$C_\infty$, together with the presence of null directions.
Finally, (2) $\Rightarrow$ (1) follows by projecting to the Poisson
manifold $\mathcal{P}$ and using the Maxwell–Vlasov Lie–Poisson bracket obtained in Section~\ref{subsec:Hamiltonian-LP}.
\end{proof}


We consider now the second variation of $\mathcal{E}_C$ at an
equilibrium and derive a formal stability criterion.

\begin{definition}
Let $z_0\in C_\infty$ be an equilibrium as in
Theorem~\ref{thm:EC-critical}. The \emph{second variation} of
$\mathcal{E}_C$ at $z_0$ is the symmetric bilinear form
\[
\delta^2\mathcal{E}_C(z_0)(\delta z_1,\delta z_2)
  := \left.\frac{\partial^2}{\partial s\,\partial t}\right|_{s=t=0}
      \mathcal{E}_C\bigl(z_0 + s\delta z_1 + t\delta z_2\bigr),
\]
defined for $\delta z_1,\delta z_2\in T_{z_0}C_\infty$.
\end{definition}

Because of gauge and relabeling symmetries, $\delta^2\mathcal{E}_C(z_0)$
is necessarily degenerate along $\ker\Omega_{\mathrm{red}}$.

\begin{theorem}
\label{thm:EC-stability}
Assume that the quadratic form $\delta^2\mathcal{E}_C(z_0) : T_{z_0}C_\infty \times T_{z_0}C_\infty
  \to \mathbb{R}$ is positive definite on a complement of
$\ker\Omega_{\mathrm{red}}(z_0)$ in $T_{z_0}C_\infty$, i.e. there is a
subspace $V\subset T_{z_0}C_\infty$ such that
\[
T_{z_0}C_\infty = V \oplus \ker\Omega_{\mathrm{red}}(z_0)
\quad\text{and}\quad
\delta^2\mathcal{E}_C(z_0)(\delta z,\delta z) > 0
\ \forall\,0\neq\delta z\in V.
\]
Then $(F_0,E_0,B_0)$ is \emph{formally stable} in the sense of
energy--Casimir: any perturbation that preserves the Casimirs and
remains on $C_\infty$ cannot decrease $\mathcal{E}_C$ to second order.
\end{theorem}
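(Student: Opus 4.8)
The plan is to run the classical energy--Casimir (Lyapunov) argument, organized around the presymplectic kernel rather than a genuine symplectic leaf. First I would invoke Theorem~\ref{thm:EC-critical}, which guarantees that the equilibrium point $z_0\in C_\infty$ is a critical point of $\mathcal{E}_C$ restricted to $C_\infty$, so that $\dd\mathcal{E}_C(z_0)\cdot\delta z = 0$ for every $\delta z\in T_{z_0}C_\infty$. Since $\mathcal{E}_C = H_{\mathrm{MV}} + C$ is a constant of motion (the Hamiltonian $H_{\mathrm{MV}}$ is conserved and $C$ is a Casimir of the reduced bracket), the vanishing of the first variation forces the Taylor expansion of this conserved functional along any admissible perturbation to begin at second order:
\[
\mathcal{E}_C(z_0 + \delta z) - \mathcal{E}_C(z_0)
  = \tfrac12\,\delta^2\mathcal{E}_C(z_0)(\delta z,\delta z)
    + O(\|\delta z\|^3),
\qquad \delta z\in T_{z_0}C_\infty .
\]
Thus the quadratic form $\delta^2\mathcal{E}_C(z_0)$ controls the leading behaviour of a conserved quantity in a neighbourhood of $z_0$.

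The second step is to exploit the splitting $T_{z_0}C_\infty = V \oplus \ker\Omega_{\mathrm{red}}(z_0)$ together with the degeneracy of $\delta^2\mathcal{E}_C(z_0)$ along the kernel noted in the discussion preceding the statement. Writing $\delta z = v + k$ with $v\in V$ and $k\in\ker\Omega_{\mathrm{red}}(z_0)$, degeneracy along the kernel gives $\delta^2\mathcal{E}_C(z_0)(k,\cdot)=0$, so the mixed and pure-kernel contributions drop out and
\[
\delta^2\mathcal{E}_C(z_0)(\delta z,\delta z)
  = \delta^2\mathcal{E}_C(z_0)(v,v) \;\ge\; 0,
\]
with equality precisely when $v=0$, i.e.\ when $\delta z\in\ker\Omega_{\mathrm{red}}(z_0)$. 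The kernel directions are generated by gauge transformations and particle--relabeling and hence leave the Eulerian state $(F,E,B)$ unchanged; by Proposition~\ref{prop:Casimirs-null} they are exactly the null directions along which every Casimir is constant. Consequently any perturbation that preserves the Casimirs and remains on $C_\infty$ produces a nonnegative second-order change in $\mathcal{E}_C$, and a strictly positive one as soon as the perturbation has a nonzero component in $V$.

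Finally I would assemble these facts into the formal stability conclusion. Because $\mathcal{E}_C$ is conserved and, by the above, has a nonnegative-definite second variation on $C_\infty$ whose only null directions are the unphysical gauge/relabeling modes, $\mathcal{E}_C$ serves as a Lyapunov function on the effective phase space obtained by quotienting out $\ker\Omega_{\mathrm{red}}$, as in Section~5: to second order it cannot decrease under admissible perturbations, so the equilibrium cannot drift away within a Casimir level set. The anticipated main obstacle is precisely the interplay between degeneracy and stability in infinite dimensions: positive definiteness of $\delta^2\mathcal{E}_C$ on $V$ does not by itself yield coercivity with respect to a norm, so the argument establishes only \emph{formal} (second-order) stability rather than genuine nonlinear stability. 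Upgrading to the latter would require an Arnold-type convexity estimate bounding $\delta^2\mathcal{E}_C$ below by a fixed norm uniformly in a neighbourhood, which is not assumed here; the deliberate restriction of the claim to the second-order statement is exactly what makes the theorem provable at this level of generality, with the kernel degeneracy absorbed harmlessly into the physically trivial symmetry directions.
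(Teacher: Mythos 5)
Your proposal is correct and follows essentially the same route as the paper's proof: criticality of $\mathcal{E}_C$ at $z_0$ (Theorem~\ref{thm:EC-critical}), the splitting $T_{z_0}C_\infty = V \oplus \ker\Omega_{\mathrm{red}}(z_0)$, degeneracy of the Hessian along the gauge/relabeling kernel so that the second variation depends only on the $V$-component, and positive definiteness on $V$ combined with conservation of $\mathcal{E}_C$ to conclude formal (second-order) stability modulo the unphysical directions. Your closing remark on the gap between formal and genuine nonlinear stability (absence of a coercivity estimate) is a point the paper glosses over, and is a welcome clarification rather than a deviation.
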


\begin{proof}
Since $z_0$ is an equilibrium of the reduced Maxwell--Vlasov flow, it is a
critical point of the constrained functional $\mathcal{E}_C$ on the
presymplectic manifold $(C_\infty,\Omega_{\mathrm{red}})$.  
Thus the first variation satisfies
\[
\dd\mathcal{E}_C(z_0)\cdot \delta z = 0
\qquad\text{for every }\delta z\in T_{z_0}C_\infty.
\]

Let $\mathcal{K}:=\ker\bigl(\Omega_{\mathrm{red}}|_{C_\infty}\bigr)
\subset T_{z_0}C_\infty$ be the characteristic distribution of the presymplectic form.
By construction (Section~5), $\mathcal{K}$ consists precisely of
infinitesimal gauge and relabeling transformations; variations along
$\mathcal{K}$ do not affect any gauge-invariant or observable quantity.

We choose a complementary subspace $T_{z_0}C_\infty = V \oplus \mathcal{K}$, where $V$ is transversal to the characteristic foliation.  This is the
standard decomposition for presymplectic stability analysis. For any admissible perturbation $\delta z$, write uniquely $\delta z = v + k,
\,
v\in V,\; k\in\mathcal{K}$. Because every Casimir $C$ satisfies
\[
\dd C(z)\cdot k = 0
\qquad\text{for all }k\in\mathcal{K},
\]
and because the energy functional is $G$-invariant under gauge and
relabeling symmetries, we have
\[
\dd\mathcal{E}_C(z_0)\cdot k = 0,
\qquad
\dd^2\mathcal{E}_C(z_0)[k,k'] = 0,
\]
for all $k,k'\in\mathcal{K}$.  
Thus the second variation in any direction containing a component in
$\mathcal{K}$ depends only on its projection to~$V$:
\[
\dd^2\mathcal{E}_C(z_0)[v+k,v+k]
   = \dd^2\mathcal{E}_C(z_0)[v,v].
\]

By hypothesis, the reduced Hessian satisfies
\[
\dd^2\mathcal{E}_C(z_0)[v,v] > 0
\qquad\text{for all nonzero }v\in V.
\]
Hence $\mathcal{E}_C$ attains a strict local minimum on $z_0$ modulo the
characteristic directions $\mathcal K$.

By the implicit function theorem on presymplectic manifolds (see
\cite{GotayNesterHinds1978}, \cite{MarsdenRatiu1999}), this implies that
for every sufficiently small perturbation the value of $\mathcal E_C$ is
bounded below by its value at $z_0$ up to a gauge or relabeling
transformation.  
Denote by $X_H$ the Hamiltonian vector field on $(C_\infty,\Omega_{\mathrm{red}})$
defined by $\iota_{X_H}\Omega_{\mathrm{red}}=\dd H_{\mathrm{red}}$.
Then
\[
\mathcal{E}_C(z_0) 
  \le \mathcal{E}_C(\exp(t X_H)(z))
  + o(\|z-z_0\|_{V})
\qquad
\text{for all $z$ near $z_0$}.
\]

Since the dynamics on $C_\infty$ preserves the presymplectic form and
hence the characteristic distribution $\mathcal K$, the flow
$\phi_t$ of the reduced Hamiltonian vector field $X_H$ satisfies $\phi_t(z) = g_t \cdot z(t)$, with $g_t$ a gauge/relabeling transformation and $z(t)$ evolving on the
quotient leaf space.  
Because $\mathcal{E}_C$ is constant on $\mathcal K$ and positive
definite on $V$, $\mathcal{E}_C$ is a Lyapunov function for the
equivalence class of $z_0$ modulo $\mathcal K$.

Thus the trajectory cannot drift away from the equilibrium leaf, and its
projection to the reduced symplectic space is bounded for all times for
which the solution exists smoothly.

This establishes formal energy--Casimir stability of $z_0$ in the sense
of \cite{Morrison1998,MorrisonGreene1980}.
\end{proof}

\begin{remark}
In many examples, the choice of Casimirs $C$ amounts to imposing
constraints such as fixed total charge, entropy, or magnetic flux.
Theorem~\ref{thm:EC-stability} states that if $\mathcal{E}_C$ has a
strict local minimum on $C_\infty$ modulo the degeneracies generated by
$G_0$ and gauge transformations, then the corresponding equilibrium is
formally stable.\hfill$\diamond$
\end{remark}


The linearized effective phase space $W^{\mathrm{eff}}_{F_0}$ of
Section~5 provides an appropriate arena for spectral analysis and for
relating the second variation to the spectrum of the linearized
operator.

\begin{proposition}
\label{prop:Heff-quadratic}
Let $H^{(2)}_{\mathrm{eff}}$ be the quadratic Hamiltonian induced by
$H^{(2)}$ on the effective symplectic space $W^{\mathrm{eff}}_{F_0}$.
Then:
\begin{enumerate}
\item $H^{(2)}_{\mathrm{eff}}$ coincides with the restriction of
$\delta^2\mathcal{E}_C(z_0)$ to $W^{\mathrm{eff}}_{F_0}$, up to
Casimir terms that vanish to second order.
\item The Hamiltonian flow generated by $H^{(2)}_{\mathrm{eff}}$ is the
projection of the linearized Maxwell--Vlasov dynamics onto the space
of physical perturbations.
\end{enumerate}
\end{proposition}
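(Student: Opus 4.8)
The plan is to prove the two assertions separately, exploiting the identification of $H^{(2)}$ with a second variation that was already extracted in the linearized analysis. For item (1) I would first recall, from Proposition~\ref{prop:H-expansion} and the opening of the proof of Theorem~\ref{thm:linearized}, that $H^{(2)}$ is exactly the second variation of the reduced Skinner--Rusk Hamiltonian $H_{\mathrm{red}}$ at the equilibrium $z_0$. The key structural input is that on the final constraint manifold the Legendre relations $p=\delta L/\delta v$ hold, so the generalized energy $H=\langle p,v\rangle-L$ collapses to the physical Maxwell--Vlasov energy; hence $H_{\mathrm{red}}|_{C_\infty}$ agrees with $H_{\mathrm{MV}}$ in Eulerian variables, and therefore $H^{(2)}$ coincides with the second variation of $H_{\mathrm{MV}}$ restricted to $C_\infty$. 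Writing $\mathcal{E}_C=H_{\mathrm{MV}}+C$ then gives $\delta^2\mathcal{E}_C(z_0)=\delta^2 H_{\mathrm{MV}}(z_0)+\delta^2 C(z_0)$, so the only possible discrepancy between $H^{(2)}_{\mathrm{eff}}$ and $\delta^2\mathcal{E}_C|_{W^{\mathrm{eff}}_{F_0}}$ is the Casimir contribution $\delta^2 C$.

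The remaining step is to show that $\delta^2 C$ contributes trivially on the effective phase space. By Proposition~\ref{prop:Casimirs-null}, $\dd C$ annihilates $\ker\Omega_{\mathrm{red}}$; differentiating this relation once more at the constrained critical point $z_0$ (Theorem~\ref{thm:EC-critical}, which guarantees that the Hessian is a well-defined symmetric bilinear form) yields $\delta^2 C(k,\cdot)=0$ for every kernel direction $k\in\ker(\Omega_{F_0}|_{C_\infty^{(1)}})$. Hence $\delta^2 C$ descends to a quadratic form on the quotient $W^{\mathrm{eff}}_{F_0}=C_\infty^{(1)}/\ker(\Omega_{F_0}|_{C_\infty^{(1)}})$. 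Its Hamiltonian vector field with respect to $\Omega^{\mathrm{eff}}_{F_0}$ is the projection of the Casimir field $X_C$; since the proof of Proposition~\ref{prop:Casimirs-null} shows $X_C\in\ker\Omega_{\mathrm{red}}$, this projection vanishes, so the differential of the descended form is zero on the nondegenerate space $(W^{\mathrm{eff}}_{F_0},\Omega^{\mathrm{eff}}_{F_0})$. A homogeneous quadratic form with vanishing differential is identically zero; this is precisely the sense in which the Casimir terms ``vanish to second order,'' and I would conclude that $H^{(2)}_{\mathrm{eff}}=\delta^2\mathcal{E}_C(z_0)|_{W^{\mathrm{eff}}_{F_0}}$ modulo such dynamically inert terms, establishing item (1).

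For item (2) I would chain the two linearization results directly. Theorem~\ref{thm:linearized} identifies the presymplectic dynamics of $X^{(1)}$ on $C_\infty^{(1)}$, defined by $\iota_{X^{(1)}}\Omega_{F_0}=\dd H^{(2)}$, with the full linearized Maxwell--Vlasov system around $(F_0,E_0,B_0)$. Theorem~\ref{thm:effective-linear}, together with the constant-rank regularity of Proposition~\ref{prop:symplectic-regularity}, guarantees that $X^{(1)}$ descends to the unique field $X^{(1)}_{\mathrm{eff}}$ on $W^{\mathrm{eff}}_{F_0}$ satisfying $\iota_{X^{(1)}_{\mathrm{eff}}}\Omega^{\mathrm{eff}}_{F_0}=\dd H^{(2)}_{\mathrm{eff}}$, because the quotiented directions are exactly the gauge and relabeling modes, together with the Goldstone directions of Proposition~\ref{prop:Goldstone}, none of which carry physical information. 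Since $H^{(2)}_{\mathrm{eff}}$ is by construction the Hamiltonian of $X^{(1)}_{\mathrm{eff}}$, its Hamiltonian flow is exactly the projection of the linearized Maxwell--Vlasov flow onto the space of physical perturbations, which is item (2).

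I expect the genuinely delicate step to be the second paragraph. Verifying that $\delta^2 C$ annihilates the kernel directions requires the critical-point condition of Theorem~\ref{thm:EC-critical} (so that the Hessian is symmetric and basepoint-independent) and the constant-rank hypothesis of Proposition~\ref{prop:symplectic-regularity} (so that the quotient is a smooth symplectic manifold and the descended form is well defined). A secondary subtlety, which I would check carefully, is that $H_{\mathrm{red}}$ restricts to $H_{\mathrm{MV}}$ \emph{at second order} including the field--particle cross terms of the bracket~\eqref{eq:VM-bracket-complete}; this rests on the Eulerian reduction of Theorem~\ref{thm:EP-from-SR}, and one must confirm that the Legendre collapse on $C_\infty$ introduces no spurious quadratic coupling that would spoil the identification of the two Hessians.
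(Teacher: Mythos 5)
Your overall skeleton matches the paper's (very terse) proof: item (1) via ``$H^{(2)}$ is the Hessian of the translated Hamiltonian, and the discrepancy with $\delta^2\mathcal{E}_C$ is a Casimir contribution,'' and item (2) by chaining Theorem~\ref{thm:linearized} with Theorem~\ref{thm:effective-linear}. However, your second paragraph --- the step you yourself flag as delicate --- contains a genuine error. Differentiating the identity $\dd C(z)\cdot k(z)=0$ along $C_\infty$ gives $\delta^2 C(z_0)(v,k) = -\,\dd C(z_0)\cdot D_v k$, not zero: the kernel distribution is not constant, so the second term survives unless $\dd C(z_0)$ itself vanishes on $T_{z_0}C_\infty$. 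Theorem~\ref{thm:EC-critical} gives criticality of the \emph{sum} $\mathcal{E}_C=H_{\mathrm{MV}}+C$, not of $C$ alone; in the generic energy--Casimir situation $\dd C(z_0)=-\dd H_{\mathrm{MV}}(z_0)\neq 0$, and the step fails. The same omission invalidates the claim that the Hamiltonian vector field of the descended quadratic form is ``the projection of $X_C$'': for a second variation, the associated linear Hamiltonian vector field is the linearization of $X_C$ \emph{plus} the derivative of the Poisson/presymplectic structure contracted with $\dd C(z_0)$ (in Poisson terms, $\Pi(z_0)\,\delta^2 C(z_0) = -\,D\Pi(z_0)[\,\cdot\,]\,\dd C(z_0)$, which is nonzero when $\dd C(z_0)\neq 0$ even though $X_C\equiv 0$). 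Consequently your conclusion that $\delta^2 C$ vanishes identically on $W^{\mathrm{eff}}_{F_0}$ is false in general: on $\mathfrak{so}(3)^\ast$ with $C=\tfrac12|z|^2$, the Casimir Hessian restricted to a coadjoint orbit is the round metric, not zero. Indeed, if your claim were true, adding a Casimir could never improve the definiteness of $\delta^2\mathcal{E}_C$ on the physical perturbation space, and the energy--Casimir method of Theorem~\ref{thm:EC-stability} (as well as condition (iii) of Theorem~\ref{thm:controlled-sb}) would be vacuous.

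What makes the paper's statement cohere is a hypothesis you did not invoke: Proposition~\ref{prop:H-expansion} defines $H^{(2)}$ under the standing assumption $\dd H(w_0)=0$, i.e.\ the equilibrium is a critical point of the reduced SR energy itself. Under that assumption, criticality of $\mathcal{E}_C$ forces $\dd C(z_0)=0$ on $T_{z_0}C_\infty$, the offending terms $\dd C(z_0)\cdot D_v k$ disappear, and the restriction of $\delta^2 C$ to the effective space does vanish --- this is the precise sense of ``Casimir terms that vanish to second order,'' and it is how the paper's one-line appeal to ``Casimir contributions do not alter the Hamiltonian vector field'' should be read. Your proof would be repaired by stating and using this hypothesis explicitly rather than citing Theorem~\ref{thm:EC-critical}. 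A secondary slip in your item (2): by Proposition~\ref{prop:Goldstone}, the Goldstone (broken-symmetry) directions are annihilated by $\dd H^{(2)}$ but need \emph{not} lie in $\ker\Omega_{F_0}$, so they are not quotiented out in $W^{\mathrm{eff}}_{F_0}$; the quotient removes only the gauge and relabeling kernel directions, and the Goldstone modes survive as physical neutral modes of the effective dynamics.
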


\begin{proof}
(1) follows from the definition of $H_{F_0}$ as a translated Hamiltonian
and from the fact that Casimir contributions do not alter the
Hamiltonian vector field. Restriction to $W^{\mathrm{eff}}_{F_0}$ simply
removes gauge and symmetry directions. (2) is immediate from the
construction of $W^{\mathrm{eff}}_{F_0}$ as the quotient of
$C_\infty^{(1)}$ by $\ker\Omega_{F_0}$.
\end{proof}

The combination of Theorems~\ref{thm:EC-stability} and
\ref{thm:effective-linear}, together with
Proposition~\ref{prop:Heff-quadratic}, shows that the SR/GNH
framework not only recovers the the classical Hamiltonian formulations of the Maxwell–Vlasov system, but also organizes the energy--Casimir stability analysis in
a presymplectic language that is directly adapted to symmetry
breaking, constraints and gauge invariances.

\section{Controlled Symmetry Breaking and Hamiltonian Control
in the Skinner--Rusk Maxwell--Vlasov Framework}

In this section we extend the presymplectic SR--GNH formulation of the
Maxwell--Vlasov system to include \emph{affine Hamiltonian controls}
representing external antenna fields or radio frequency (RF) perturbations.  
Such drivings, classically studied in the oscillation--center and
ponderomotive theories of Kaufman, Littlejohn, Cary--Brizard, and
Similon et al.~\cite{Kaufman1986,Littlejohn1983,CaryBrizard2009,Similon1985},
fit naturally into the geometric structure developed in the previous section.  The SR framework provides a unified, coordinate-free setting
for control-induced symmetry breaking near equilibria, presymplectic
Hamiltonian response to antenna forcing, and controlled stabilization
via affine Hamiltonian feedback.

The key observation is that the reduced system of
Sections~5--6 provides a natural presymplectic arena for formulating
\emph{controlled Lie--Poisson dynamics} on the particle--field phase
space, in which external fields can be used to act on Goldstone-type
neutral modes or to move an equilibrium onto a favorable Casimir leaf.



Let $(W_{\mathrm{red}},\Omega_{\mathrm{red}},H_{\mathrm{red}})$ denote
the reduced Skinner--Rusk Maxwell--Vlasov system, with reduced
coordinates $z = (f,E,B)$. We introduce an \emph{affine Hamiltonian control} term of the form
\begin{equation}\label{eq:affine-control}
H_u(z)
  := H_{\mathrm{red}}(z)
     + \sum_{a=1}^m u_a(t)\, B_a(z),
\end{equation}
where $u(t)=(u_1(t),\dots,u_m(t))\in U\simeq\mathbb{R}^m$ encodes the
antenna driving amplitudes, and each $B_a \colon W_{\mathrm{red}}\to\mathbb{R}$ is a smooth functional describing the coupling of the plasma to the
$a$th antenna mode.  

In a potential formulation of the electromagnetic sector, in which
$(A,\Phi)$ are retained as variables (subject to gauge constraints), a
typical example is
\[
B_a(z)
   = \int_{\mathbb{R}^3} J_{\mathrm{ext},a}(x)\cdot A(x)\,dx,
\]
where $J_{\mathrm{ext},a}$ is the externally prescribed current profile
associated with the $a$th antenna, and $A$ is the plasma vector
potential.  This term represents the standard current--potential
coupling.  More general couplings may also include electrostatic terms
$\int \rho_{\mathrm{ext},a}(x)\,\Phi(x)\,dx$ or RF wave envelopes in
oscillation--center variables.

The associated Hamiltonian vector field $X_{H_u}$ satisfies the
presymplectic equation
\[
\iota_{X_{H_u}}\Omega_{\mathrm{red}} = \dd H_u ,
\]
so the controlled evolution remains compatible with the SR--GNH
constraint structure and reproduces the Maxwell equations with external
sources in the usual way.

\begin{remark}
In many plasma--antenna models
(e.g.\ Kaufman~\cite{Kaufman1986}, Similon et al.~\cite{Similon1985}),
each control channel $u_a(t)$ represents the time-dependent amplitude
of a prescribed antenna mode (e.g.\ an RF drive).  The corresponding
Hamiltonian contribution $B_a(z)$ is then obtained by evaluating the
interaction energy of the antenna with the plasma fields: in a
potential description, this leads to current--potential couplings such
as
\[
B_a(z)
  = \int_{\mathbb{R}^3}
        J_{\mathrm{ext},a}(x)\cdot A(x)\,dx.
\]
In oscillation--center theory, $u_a(t)$ may be a rapidly oscillating
amplitude, and the functionals $B_a$ encode the wave--particle
interaction phase.\hfill$\diamond$
\end{remark}


Let $z_0=(f_0,E_0,B_0)$ be a Maxwell--Vlasov equilibrium with isotropy
subgroup $G_0\subset\Diff(T\mathbb{R}^3)$ and corresponding Lie algebra
$\mathfrak{g}_0$.  In the translated SR picture of Section~6, the
tangent space of the translated manifold at the corresponding point
admits a natural decomposition of the form
\[
T_{z_0}W_{F_0}
  \;\cong\;
  \mathfrak{g}_0
  \;\oplus\;
  \mathfrak{g}_0^\perp
  \;\oplus\;
  C_\infty^{(1)},
\]
where $\mathfrak{g}_0$ generates the unbroken particle--relabeling
symmetries, $\mathfrak{g}_0^\perp$ generates the broken symmetry
directions (Goldstone modes), and $C_\infty^{(1)}$ is the linear GNH
constraint subspace introduced in Section~6.  After quotienting by the
presymplectic kernel, the physical degrees of freedom lie in the
effective phase space
\[
W^{\mathrm{eff}}_{F_0} := 
C_\infty^{(1)}\big/
\ker\bigl(\Omega_{F_0}\big|_{C_\infty^{(1)}}\bigr),
\]
endowed with the induced symplectic form
$\Omega^{\mathrm{eff}}_{F_0}$.

Figure~\ref{fig:control-diagram} illustrates schematically how the
external antenna vector fields associated with $B_a$ act on the reduced
manifold: they deform the constraint hierarchy produced by the GNH
algorithm and modify the foliation of the reduced Poisson manifold
into Casimir leaves.

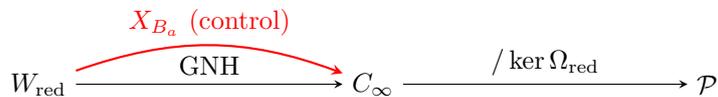
\begin{figure}[h!]
\centering
\begin{tikzpicture}[>=stealth,scale=1.1]
\node (W) at (0,0) {$W_{\mathrm{red}}$};
\node (C) at (4,0) {$C_\infty$};
\node (P) at (8,0) {$\mathcal{P}$};
\draw[->] (W) -- node[above] {$\text{GNH}$} (C);
\draw[->] (C) -- node[above] {$/\ker\Omega_{\mathrm{red}}$} (P);
\draw[->,red,thick,bend left=20] (W) to node[above]
{$X_{B_a}$ (control)} (C);
\end{tikzpicture}
\caption{Conceptual diagram: antenna control vector fields
$X_{B_a}$ act on the SR manifold before and after GNH reduction,
modifying symmetry directions and Casimir leaves.}
\label{fig:control-diagram}
\end{figure}

\begin{definition}
A control channel $u_a(t)$ is said to \emph{break the residual isotropy}
at $z_0$ if
\[
X_{B_a}(z_0)\notin T_{z_0}(G_0\cdot z_0),
\]
where $G_0\cdot z_0$ denotes the $G_0$-orbit of the equilibrium.
\end{definition}

Such controls ``lift'' neutral modes arising from broken symmetry
directions into controlled directions, allowing stabilization strategies
analogous to those for Hamiltonian systems with symmetry.



We now state the main result of this section, combining the
symmetry-breaking analysis of Section~5 with the energy--Casimir
machinery of Section~6 and the affine control structure introduced
above.

\begin{theorem}
\label{thm:controlled-sb}
Let $z_0$ be a Maxwell--Vlasov equilibrium with isotropy group $G_0$,
and let $W^{\mathrm{eff}}_{F_0}$ be the effective symplectic
linearized phase space around $z_0$.  Consider the controlled Hamiltonian
$H_u$ defined in~\eqref{eq:affine-control}.  Suppose that:

\begin{enumerate}
\item[(i)] For every broken symmetry direction
$Y\in\mathfrak{g}_0^\perp$ there exists a control component
$B_a$ such that the corresponding Hamiltonian vector field satisfies
\[
\Omega^{\mathrm{eff}}_{F_0}\bigl(X_{B_a}(z_0),Y\bigr)\neq 0.
\]

\item[(ii)] There exists a constant control $u^{*}\in U$ such that
\[
\dd\mathcal{E}_C(z_0) + 
\sum_{a=1}^m u^{*}_a\,\dd B_a(z_0) = 0
\quad\text{on } T_{z_0}C_\infty,
\]
i.e. $u^{*}$ shifts $z_0$ to a critical point of the controlled
energy--Casimir functional
\[
\mathcal{E}_{C,u}(z)
  := H_{\mathrm{red}}(z)+C(z)
      +\sum_a u^{*}_a B_a(z).
\]

\item[(iii)] The quadratic form
\[
\delta^2\mathcal{E}_{C,u}(z_0)
=
\delta^2\mathcal{E}_C(z_0)
\;+\;
\sum_{a=1}^m u^{*}_a\,
\delta^2 B_a(z_0)
\]
is positive definite on $W^{\mathrm{eff}}_{F_0}$.
\end{enumerate}

Then:

\begin{enumerate}
\item[(a)] $z_0$ is a nondegenerate critical point of $H_{u^{*}}$
on the constraint manifold $C_\infty$.

\item[(b)] $z_0$ is formally energy--Casimir stable under the controlled
Hamiltonian flow generated by $H_{u^{*}}$.

\item[(c)] The controlled linearized dynamics on
$W^{\mathrm{eff}}_{F_0}$ is symplectic and stable:
\[
i_{X^{(1)}_{u^{*}}}\,\Omega^{\mathrm{eff}}_{F_0}
=
\dd\Bigl(H^{(2)}_{\mathrm{eff}}
      +\sum_a u^{*}_a B_a\Bigr),
\]
where $H^{(2)}_{\mathrm{eff}}$ is the quadratic Hamiltonian induced by
$H_{\mathrm{red}}$ on $W^{\mathrm{eff}}_{F_0}$.
\end{enumerate}
\end{theorem}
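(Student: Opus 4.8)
The plan is to treat the constant control $u^{*}$ as producing an \emph{autonomous} controlled presymplectic system $(W_{\mathrm{red}},\Omega_{\mathrm{red}},H_{u^{*}})$, so that the energy--Casimir and effective--linearization machinery of Sections~5--6 applies verbatim, with the shifted functional $\mathcal{E}_{C,u}$ playing the role previously taken by $\mathcal{E}_C$. The three conclusions then follow from Theorems~\ref{thm:EC-critical}, \ref{thm:EC-stability} and \ref{thm:effective-linear} respectively, with hypotheses (ii) and (iii) supplying the critical--point and positive--definiteness data, and hypothesis (i) playing the structural role of guaranteeing that the control actually engages the broken directions.

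For (a), I would first recognize hypothesis (ii) as exactly the critical--point condition $\dd\mathcal{E}_{C,u}(z_0)=0$ on $T_{z_0}C_\infty$, so that by Theorem~\ref{thm:EC-critical} applied to $\mathcal{E}_{C,u}$ the point $z_0$ is an equilibrium of the controlled flow and $X_{H_{u^{*}}}(z_0)$ is a pure gauge/relabeling direction. Nondegeneracy then reduces to showing that the controlled Hessian is nondegenerate transverse to $\ker\Omega_{F_0}$. By Proposition~\ref{prop:Goldstone} this kernel is spanned by the unbroken directions $\mathfrak g_0$, which are quotiented out in $W^{\mathrm{eff}}_{F_0}$, whereas the Goldstone directions $\mathfrak g_0^\perp$ survive there and form the neutral subspace of the \emph{uncontrolled} second variation. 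The role of hypothesis (i) is precisely to lift this residual degeneracy: since $\Omega^{\mathrm{eff}}_{F_0}(X_{B_a}(z_0),Y)\neq 0$ for each broken $Y$, the control engages every Goldstone mode symplectically, which is what makes the positive--definiteness assumed in (iii) attainable on the full effective space. Positive--definiteness from (iii) then immediately yields nondegeneracy, establishing (a).

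For (b), I would apply Theorem~\ref{thm:EC-stability} to $\mathcal{E}_{C,u}$ in place of $\mathcal{E}_C$. The key observation is that for \emph{constant} control $H_{u^{*}}$ is time--independent, and $\mathcal{E}_{C,u}=H_{u^{*}}+C$ differs from $H_{u^{*}}$ only by a Casimir; hence $\mathcal{E}_{C,u}$ is conserved along the controlled presymplectic flow. Hypotheses (ii) and (iii) furnish exactly the critical--point and positive--definiteness conditions required by Theorem~\ref{thm:EC-stability} on a complement $V$ of $\ker\Omega_{\mathrm{red}}(z_0)$, and the Lyapunov argument of that theorem gives formal energy--Casimir stability of $z_0$ under the flow of $H_{u^{*}}$, modulo the unbroken gauge/relabeling directions. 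For (c), I would pass to the effective symplectic space via Theorem~\ref{thm:effective-linear} and Proposition~\ref{prop:symplectic-regularity}, now for the quadratic part of $H_{u^{*}}$; since $\Omega^{\mathrm{eff}}_{F_0}$ is nondegenerate, the controlled linear field $X^{(1)}_{u^{*}}$ defined by the stated presymplectic equation exists and is genuinely symplectic. Linear stability then follows because, as in Proposition~\ref{prop:Heff-quadratic}, the conserved quadratic Hamiltonian coincides up to second--order Casimir terms with $\delta^2\mathcal{E}_{C,u}(z_0)$, which by (iii) is positive definite; a positive--definite conserved quadratic form has bounded ellipsoidal level sets, forcing the spectrum of the linearized operator onto the imaginary axis and the flow to remain bounded.

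The step I expect to be the main obstacle is the symmetry bookkeeping underlying (a): one must carefully separate the presymplectic kernel $\ker\Omega_{F_0}$ (the unbroken $\mathfrak g_0$, together with gauge directions, all eliminated by the quotient) from the Goldstone subspace $\mathfrak g_0^\perp$ (which persists in $W^{\mathrm{eff}}_{F_0}$), and then verify that hypothesis (i) couples the control to the latter in a manner compatible with the constraint--tangency conditions defining $C_\infty^{(1)}$. In particular, one must check that the nonzero pairing $\Omega^{\mathrm{eff}}_{F_0}(X_{B_a}(z_0),Y)$ does not merely shift the equilibrium via (ii) but genuinely removes the null direction $Y$ from the controlled quadratic form, so that (iii) is not vacuous on $\mathfrak g_0^\perp$. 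A secondary difficulty is the simultaneous feasibility of (ii) and (iii): solving the linear shift equation in (ii) while preserving the positivity in (iii) constrains the admissible control set $U$, and establishing that (i) suffices to guarantee such a $u^{*}$ exists—rather than merely assuming it—would require a more quantitative controllability argument.
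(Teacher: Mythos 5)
Your proposal is correct and follows essentially the same route as the paper's own (much terser) proof: both reduce the theorem to Theorems~\ref{thm:EC-critical}, \ref{thm:EC-stability} and \ref{thm:effective-linear} applied to the controlled functional $\mathcal{E}_{C,u}$, with (ii) supplying the critical-point condition, (iii) the positive-definiteness on $W^{\mathrm{eff}}_{F_0}$, and (i) the structural guarantee that the controls engage the Goldstone directions so that no uncontrolled neutral modes survive the quotient. In fact your write-up is more detailed than the paper's proof---in particular the conservation argument for (b) and the spectral/level-set argument for (c) are only implicit there---and your closing caveats (separating $\ker\Omega_{F_0}$ from $\mathfrak g_0^\perp$, and the joint feasibility of (ii)--(iii)) flag genuine points the paper also leaves unaddressed.
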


\begin{proof}
The proof follows the structure of
Theorems~\ref{thm:effective-linear} and~\ref{thm:EC-stability}.
Condition~(i) guarantees that the control vector fields span the
presymplectic null directions associated with broken symmetries at
$z_0$, so that no uncontrolled neutral directions remain in
$W^{\mathrm{eff}}_{F_0}$.  Condition~(ii) ensures that $z_0$ is a
critical point of the controlled energy--Casimir functional
$\mathcal{E}_{C,u}$ when restricted to $C_\infty$.  Condition~(iii)
then implies that this critical point is a strict local minimum modulo
gauge and relabeling degeneracies.  The standard energy--Casimir
argument on the effective phase space then yields (b), while (a) and
(c) follow from the presymplectic reduction and the definition of
$W^{\mathrm{eff}}_{F_0}$.
\end{proof}

\subsection{Controlled Lie--Poisson Reduction for Plasma--Antenna Coupling}
\label{sec:controlled-LP}

In this subsection we develop a geometric theory of \emph{controlled
Lie--Poisson reduction} for the Maxwell--Vlasov system, interpreting an
external antenna as an \emph{affine cotangent-lift control} acting on
the Skinner--Rusk (SR) presymplectic structure.  
The construction produces a new class of \emph{controlled semidirect-product
Lie--Poisson equations} compatible with gauge symmetry,
particle–relabeling symmetry, and with the hierarchy of constraints
generated by the GNH algorithm.

Throughout, $G=\Diff(T\mathbb{R}^3)$ denotes the particle–relabeling
group acting on the phase space $W=TQ\oplus T^*Q$ with 
$Q=G\times\mathcal{A}\times\mathcal{V}$. Reduction of the uncontrolled SR system by $G$ yields the
Marsden--Weinstein Lie--Poisson structure on the semidirect-product dual
$\g^*_{\mathrm{sdp}}$.  
We now extend this picture to include external RF or antenna fields.


Let $U\subset\mathbb{R}^m$ be the control space.  
An antenna driving field is modeled as an affine deformation of the
cotangent–lift action:
\begin{equation}\label{eq:affine-cotangent-lift}
\mathcal{A}_u : T^*Q \to T^*Q,
\qquad
\mathcal{A}_u(\alpha_q) = \Phi^\ast(\alpha_q) + B(q)\,u,
\end{equation}
where $\Phi\in\Diff(Q)$ is induced by the uncontrolled dynamics, and $B : Q \rightarrow \operatorname{Lin}(\mathbb{R}^m, T_q^*Q)$ is a $G$-equivariant control one–form encoding the antenna–plasma
coupling. Here \(
\operatorname{Lin}(\mathbb{R}^m, T_q^*Q)
\)
denotes the vector space of linear maps from the control space 
\(\mathbb{R}^m\) into the cotangent space \(T_q^*Q\).  
Equivalently, each
\(
B(q)\in \operatorname{Lin}(\mathbb{R}^m, T_q^*Q)
\)
may be viewed as an $m$--tuple of covectors at $q$, or as a bundle map
\(
B \colon Q \to T^*Q \otimes (\mathbb{R}^m)^*.
\)
Thus, for a control input \(u(t)\in\mathbb{R}^m\), the composite
\(B(q)u(t)\in T_q^*Q\) represents the affine cotangent--lift contribution
to the controlled dynamics.

\begin{definition}
A \emph{controlled system} is a quadruple $(W,\Omega,H,B)$ consisting
of a presymplectic manifold $(W,\Omega)$, an energy function $H$, and a
$G$-equivariant control one-form $B$, such that the controlled dynamics
is defined by
\begin{equation}\label{eq:controlled-SR}
i_{X_{H,u}}\Omega
   = \dd H + B^\top(\,\cdot\,)\,u(t).
\end{equation}
\end{definition}

$G$-equivariance of $B$ implies
\[
g^\ast B = B
\qquad\text{and}\qquad
g_\ast X_{H,u} = X_{H,u}\circ g ,
\]
so the control preserves (or intentionally breaks) particle–relabeling
symmetry in a controlled manner.


The controlled SR equation~\eqref{eq:controlled-SR} restricts to the
GNH final manifold $C_\infty$ and descends to a controlled presymplectic
system on the quotient $C_\infty/G$.
We now compute the reduced Lie–Poisson system.

Let $z=(f,E,B)\in\g^*_{\mathrm{sdp}}$ denote Eulerian variables.
The momentum map $\mathbf{J}:C_\infty \longrightarrow \g^*$
satisfies $\mathbf{J}(\gamma)=z$ for $\gamma\in C_\infty$. Define the \emph{reduced control vector fields}
\begin{equation}\label{eq:control-generators}
F_a(z)
  := \mathbf{J}_\ast\Big(X_{B_a}\big|_{C_\infty}\Big),
\qquad a=1,\dots,m,
\end{equation}
where each $B_a : W_{\mathrm{red}}\to\mathbb{R}$ is the coupling
functional associated with the $a$-th control channel.  
Because $B$ is $G$-equivariant, each $F_a$ is tangent to coadjoint
orbits.

\begin{lemma}
For each control channel $a$ there exists a map
$\xi_a : \g^*_{\mathrm{sdp}}\to \g_{\mathrm{sdp}}$ such that
\[
F_a(z)=\text{ad}^{*}_{\xi_a(z)}\, z.
\]
\end{lemma}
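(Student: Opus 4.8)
The plan is to exhibit $F_a$ as a collective (Lie--Poisson) Hamiltonian vector field and then read off $\xi_a$ from the universal shape of such fields. First I would recall, from Theorem~\ref{thm:MW-Morrison}, that the equivariant momentum map $\mathbf{J}:C_\infty\to\g^{*}_{\mathrm{sdp}}$ realizes the Poisson quotient $C_\infty/\ker(\Omega_{\mathrm{red}}|_{C_\infty})\cong\g^{*}_{\mathrm{sdp}}$ and is a Poisson map onto $\g^{*}_{\mathrm{sdp}}$ equipped with its (sign-fixed) $\mp$ Lie--Poisson bracket, whose symplectic leaves are precisely the coadjoint orbits $\mathcal{O}_z$. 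I would then record the standard identification of the tangent space to a coadjoint orbit,
\[
T_z\mathcal{O}_z=\{\operatorname{ad}^{*}_{\xi}z \;:\; \xi\in\g_{\mathrm{sdp}}\},
\]
which is exactly what makes ``tangent to a coadjoint orbit'' synonymous with ``of the form $\operatorname{ad}^{*}_{\xi}z$''.

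Next I would use the $G$-equivariance of the control one-form $B$. Since $g^{\ast}B=B$, each coupling functional $B_a$ is $G$-invariant on $W$, hence constant along the gauge and particle--relabeling directions spanning $\ker(\Omega_{\mathrm{red}}|_{C_\infty})$, so it descends to a smooth functional $b_a$ on $\g^{*}_{\mathrm{sdp}}$ with $B_a|_{C_\infty}=b_a\circ\mathbf{J}$. By the collective Hamiltonian theorem, $X_{B_a}|_{C_\infty}$ is $\mathbf{J}$-related to the Lie--Poisson Hamiltonian vector field $X_{b_a}$, so $F_a=\mathbf{J}_{\ast}\big(X_{B_a}|_{C_\infty}\big)=X_{b_a}$. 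Applying the Lie--Poisson identity against an arbitrary test functional $g$,
\[
\big\langle F_a(z),\tfrac{\delta g}{\delta z}\big\rangle
=\{b_a,g\}(z)
=\mp\big\langle z,\big[\tfrac{\delta b_a}{\delta z},\tfrac{\delta g}{\delta z}\big]\big\rangle
=\mp\big\langle\operatorname{ad}^{*}_{\delta b_a/\delta z}z,\tfrac{\delta g}{\delta z}\big\rangle,
\]
and letting $g$ vary, yields $F_a(z)=\mp\operatorname{ad}^{*}_{\delta b_a/\delta z}z$. Thus the explicit choice $\xi_a(z):=\mp\,\delta b_a/\delta z\in\g_{\mathrm{sdp}}$ gives $F_a(z)=\operatorname{ad}^{*}_{\xi_a(z)}z$; it is smooth wherever $b_a$ is $C^{1}$ and is canonical modulo the isotropy subalgebra $\g_z=\{\xi:\operatorname{ad}^{*}_{\xi}z=0\}$, so non-uniqueness is harmless since only existence is asserted.

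The hard part will be the projectability step: in the weak Hilbert (Sobolev) setting $\Omega_{\mathrm{red}}$ is only weakly non-degenerate, so $X_{B_a}$ is defined merely modulo $\ker\Omega_{\mathrm{red}}$, and I must check that this ambiguity is annihilated by $T\mathbf{J}$ before the pushforward $\mathbf{J}_{\ast}(X_{B_a})$ can be declared well defined. This is resolved by the identification, established in Theorem~\ref{thm:MW-Morrison}, of $\ker(\Omega_{\mathrm{red}}|_{C_\infty})$ with the gauge and particle--relabeling directions, which are precisely the directions along which $\mathbf{J}$ is constant; hence $\ker\Omega_{\mathrm{red}}\subset\ker T\mathbf{J}$ and $F_a$ is unambiguous. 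The same equivariance underwrites the Poisson property of $\mathbf{J}$ and the collective Hamiltonian theorem in infinite dimensions; once these are in place, the remainder is the routine Lie--Poisson computation displayed above.
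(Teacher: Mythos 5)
Your proof is correct at the paper's level of rigor, but it follows a genuinely different route from the paper's own argument. The paper's proof is a two-step geometric one: the affine cotangent lift \eqref{eq:affine-cotangent-lift} preserves $G$-orbits, hence $X_{B_a}|_{C_\infty}$ is tangent to $G$-orbits in $C_\infty$, and pushing forward by the equivariant momentum map $\mathbf{J}$ sends orbit-tangent vectors to vectors tangent to the coadjoint orbit through $z$, which are precisely those of the form $\operatorname{ad}^*_{\xi}z$. You instead exploit the fact that in this setting the fibers of $\mathbf{J}$ coincide with the group directions, so the $G$-invariant functional $B_a$ is \emph{collective}, $B_a|_{C_\infty}=b_a\circ\mathbf{J}$; the collective Hamiltonian theorem then identifies $F_a$ with the Lie--Poisson vector field $X_{b_a}$, yielding the explicit generator $\xi_a(z)=\mp\,\delta b_a/\delta z$. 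Your route buys a concrete formula for $\xi_a$ (useful downstream, e.g.\ for testing $\xi_a(z_0)\notin\mathfrak{g}_{z_0}$ in the symmetry-breaking corollary, or for computing how controls deform Casimirs), together with an honest treatment of the projectability of $X_{B_a}$ modulo $\ker\Omega_{\mathrm{red}}$, a point the paper passes over silently; the paper's route buys brevity and avoids having to descend $B_a$ to the quotient at all. One caution: you assert that $G$-invariance makes $B_a$ constant along \emph{all} of $\ker(\Omega_{\mathrm{red}}|_{C_\infty})$, but invariance under $G=\Diff^s(T\mathbb{R}^3)$ only accounts for the relabeling directions; constancy along the electromagnetic gauge directions (which also lie in the kernel, by Theorem~\ref{thm:MW-Morrison}) is a separate compatibility condition on the coupling (e.g.\ $\nabla\cdot J_{\mathrm{ext},a}=0$ for the current--potential coupling). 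Since the paper's definition of $F_a$ needs this same condition for the pushforward to be well defined as a vector field on $\mathfrak{g}^*_{\mathrm{sdp}}$, this is a shared gloss rather than a defect specific to your argument, but it deserves to be stated as a hypothesis.
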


\begin{proof}
Since~\eqref{eq:affine-cotangent-lift} preserves $G$-orbits, the
restricted control field $X_{B_e^a}|_{C_\infty}$ is tangent to
$G$-orbits in $C_\infty$.  
Pushforward by $\mathbf{J}$ therefore yields a vector tangent to the
coadjoint orbit through $z$, hence of the stated form.
\end{proof}

Thus antenna controls introduce no hidden constraints, and preserve
compatibility with the Lie–Poisson structure.

\begin{theorem}
\label{thm:controlled-LP}
Let $(W,\Omega,H,B)$ be a controlled system with $G$-equivariant
control one-form $B$.  Then:

\begin{enumerate}
\item The controlled flow descends to a well-defined evolution on
$\g^*_{\mathrm{sdp}}\simeq C_\infty/G$.

\item The reduced equations are the \emph{controlled Lie--Poisson system}
\begin{equation}\label{eq:controlled-LP}
\dot{z}
  = \text{ad}^{*}_{\delta H/\delta z} z
    + \sum_{a=1}^m u_a(t)\,\text{ad}^{*}_{\xi_a(z)}z .
\end{equation}

\item The  Casimirs under control are
\[
\mathcal{C}_u
  =\left\{ C \,\middle|\,
      \langle \dd C(z), F_a(z)\rangle = 0
      \ \forall a\right\}.
\]

\item The controlled linearized dynamics at an equilibrium $z_0$ reads
\[
\dot{\delta z}
  = \text{ad}^{*}_{\delta^2 H(z_0)[\delta z]} z_0
    + \sum_{a} u_a(t)\,\text{ad}^{*}_{\xi_a(z_0)}z_0 .
\]
\end{enumerate}
\end{theorem}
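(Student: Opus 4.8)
The plan is to exploit the linearity of the controlled presymplectic equation \eqref{eq:controlled-SR} together with the reduction machinery already established for the uncontrolled system. Writing $B^\top(\cdot)\,u(t) = \sum_a u_a(t)\,\dd B_a$, the equation $i_{X_{H,u}}\Omega = \dd H + \sum_a u_a\,\dd B_a$ is affine in its right-hand side, so modulo $\ker\Omega$ I would decompose $X_{H,u} = X_H + \sum_a u_a(t)\,X_{B_a}$, where $X_H$ solves $i_{X_H}\Omega = \dd H$ and each $X_{B_a}$ solves $i_{X_{B_a}}\Omega = \dd B_a$. For part~(1), the $G$-equivariance of $B$ (hence of each $\dd B_a$) combined with the $G$-invariance of $\Omega$ and $H$ from Lemma~\ref{prop:SR-invariance} shows that every summand, and therefore $X_{H,u}$ itself, is $G$-equivariant: $g_\ast X_{H,u} = X_{H,u}\circ g$. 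Since $C_\infty$ is $G$-invariant and, by the preceding Lemma, the restricted fields $X_{B_a}|_{C_\infty}$ are tangent to $C_\infty$ and to the $G$-orbits, the controlled field descends to a well-defined vector field on $C_\infty/G \cong \g^\ast_{\mathrm{sdp}}$.

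For part~(2), I would push forward by the momentum map $\mathbf{J}$. The uncontrolled contribution $X_H$ projects, by Theorem~\ref{thm:MW-Morrison} (together with the splitting of Proposition~\ref{prop:Omega-red}), to the Lie--Poisson field $\ad^\ast_{\delta H/\delta z}z$; this is precisely the reduction already carried out. The controlled contribution pushes forward to $\sum_a u_a(t)\,\mathbf{J}_\ast\bigl(X_{B_a}|_{C_\infty}\bigr) = \sum_a u_a(t)\,F_a(z)$ by the definition \eqref{eq:control-generators}, and the preceding Lemma identifies $F_a(z) = \ad^\ast_{\xi_a(z)}z$. Summing the two contributions yields exactly the controlled Lie--Poisson system \eqref{eq:controlled-LP}. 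The one point requiring care is that the decomposition of $X_{H,u}$ is well defined modulo $\ker(\Omega|_{C_\infty})$; any ambiguity lies in the gauge and relabeling directions and is therefore annihilated by the projection to $C_\infty/G$, so the pushforward is unambiguous.

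Parts~(3) and~(4) then follow from the explicit form \eqref{eq:controlled-LP}. For~(3), computing the rate of change of a functional $C$ along \eqref{eq:controlled-LP} gives $\dot C = \langle \dd C(z),\,\ad^\ast_{\delta H/\delta z}z\rangle + \sum_a u_a(t)\,\langle \dd C(z),\,F_a(z)\rangle$; requiring $\dot C \equiv 0$ for every admissible control and using the arbitrariness of the $u_a$ decouples this into $\{C,H\}=0$ together with the $m$ independent annihilation conditions $\langle \dd C(z),\,F_a(z)\rangle = 0$, the latter being precisely the defining relations of $\mathcal{C}_u$. For~(4), I would linearize \eqref{eq:controlled-LP} at the equilibrium $z_0$, treating both $\delta z$ and the driving amplitudes $u_a(t)$ as first order: the uncontrolled term contributes $\ad^\ast_{\delta^2 H(z_0)[\delta z]}z_0$ through the variation of $z\mapsto\delta H/\delta z$, while, since each $u_a$ is already first order, the control term freezes at $z_0$ and yields $\sum_a u_a(t)\,\ad^\ast_{\xi_a(z_0)}z_0$, exactly as stated.

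The main obstacle is the verification underlying part~(1): that the affine control is compatible with the GNH constraint hierarchy, i.e.\ that $X_{H,u}$ remains tangent to $C_\infty$ and generates no new secondary constraints. This is where the $G$-equivariance hypothesis on $B$ does the essential work, since it guarantees via the preceding Lemma that each $X_{B_a}$ restricts to a field tangent to the coadjoint orbits inside $C_\infty$, so that $\dd B_a$ annihilates $\ker(\Omega|_{C_\infty})$ and the controlled equation stays solvable on the same final constraint manifold as the uncontrolled one. Once this tangency is secured, the remaining steps are essentially bookkeeping layered on top of the reduction of Theorem~\ref{thm:MW-Morrison}.
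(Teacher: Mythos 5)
Your proposal tracks the paper's proof closely. Part (1) rests on the same invariance/equivariance argument (your explicit affine decomposition $X_{H,u}=X_H+\sum_a u_a(t)\,X_{B_a}$, with the ambiguity absorbed into $\ker\Omega$, is in fact a slightly more careful version of the paper's appeal to ``uniqueness'' of GNH solutions, which strictly holds only modulo the kernel). Part (2) is identical: pushforward by $\mathbf{J}$, reduction of $X_H$ to $\operatorname{ad}^{*}_{\delta H/\delta z}z$, and the lemma identifying $\mathbf{J}_\ast\bigl(X_{B_a}|_{C_\infty}\bigr)=F_a(z)=\operatorname{ad}^{*}_{\xi_a(z)}z$. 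In part (3) you argue dynamically (conservation along every controlled trajectory, then arbitrariness of the $u_a$), whereas the paper argues structurally, enlarging the characteristic distribution to $\ker\bigl(\Omega_{\mathrm{red}}|_{C_\infty}\bigr)\oplus\operatorname{span}\{X_{B_a}\}$; your route produces the extra condition $\{C,H\}=0$, which is automatic once $C$ is restricted to be a Casimir of the uncontrolled bracket, so the two characterizations agree on that class.

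The one genuine gap is in part (4). Linearizing $z\mapsto\operatorname{ad}^{*}_{\delta H/\delta z}\,z$ at $z_0$ is a product-rule computation with \emph{two} first-order contributions: the one you keep, $\operatorname{ad}^{*}_{\delta^2 H(z_0)[\delta z]}z_0$, coming from the variation of $\delta H/\delta z$, and a transport term $\operatorname{ad}^{*}_{\delta H(z_0)}\delta z$ coming from the variation of the second slot, which your write-up never produces. The paper's own proof does generate this term and then discards it by invoking the equilibrium identity $\operatorname{ad}^{*}_{\delta H(z_0)}z_0=0$; one may debate whether that cancellation is justified (the identity controls $\operatorname{ad}^{*}_{\delta H(z_0)}z_0$, not $\operatorname{ad}^{*}_{\delta H(z_0)}\delta z$), but any proof of the stated formula must at least confront this term and explain its absence. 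As written, your linearization is incomplete at exactly this step: it is not that you argue the term away, but that your product rule never produces it.
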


\begin{proof}
We prove each statement separately.

\medskip
\noindent\textbf{(1)}
Let $X_{H,u}$ be defined on $W$ by
\[
\iota_{X_{H,u}}\Omega=\dd H + \sum_{a=1}^m u_a(t)\,B_a .
\]
Because $\Omega$ and $H$ are $G$--invariant 
(Lemma~\ref{prop:SR-invariance}) and the control one-form $B$ is 
$G$--equivariant, for every $g\in G$ we have
\[
g^\ast\Omega=\Omega,\qquad g^\ast H=H,\qquad
g^\ast B_a = B_a .
\]
Hence
\[
\iota_{g_\ast X_{H,u}}\Omega
  = \dd H + \sum_{a} u_a B_a
  = \iota_{X_{H,u}}\Omega .
\]
By uniqueness of solutions of the presymplectic equation on $C_\infty$
(GNH theory), this implies
\[
g_\ast X_{H,u} = X_{H,u}\circ g.
\]
Thus the controlled flow preserves $G$--orbits and descends to the 
quotient $C_\infty/G\simeq \mathfrak g_{\mathrm{sdp}}^\ast$.

\medskip
\noindent\textbf{(2)}
On $C_\infty$ the GNH algorithm identifies the dynamics with the 
Marsden--Weinstein Lie--Poisson system on 
$\mathfrak g_{\mathrm{sdp}}^\ast$.  
The projection map 
\(
\mathbf J : C_\infty \to \mathfrak g_{\mathrm{sdp}}^\ast
\)
is the momentum map of the lifted $G$--action.

For the uncontrolled part,
\[
\mathbf J_\ast(X_H)=\operatorname{ad}^{*}_{\delta H/\delta z}\,z .
\]

For each control channel $a$, define $X_{B_a}$ on $C_\infty$ by $\iota_{X_{B_a}}\Omega = B_a$. Since $B_a$ is $G$--equivariant, $X_{B_a}$ is tangent to $G$--orbits in 
$C_\infty$ and hence tangent to coadjoint orbits in 
$\mathfrak g_{\mathrm{sdp}}^\ast$.  
Thus there exists a (non-unique) map
\[
\xi_a(z)\in\mathfrak g_{\mathrm{sdp}}
\qquad\text{such that}\qquad
\mathbf J_\ast(X_{B_a}) = \operatorname{ad}^{*}_{\xi_a(z)} z .
\]

Pushing forward the full controlled vector field,
\[
X_{H,u} = X_H + \sum_{a} u_a(t)\, X_{B_a},
\]
yields
\[
\mathbf J_\ast(X_{H,u})
  = \operatorname{ad}^{*}_{\delta H/\delta z} z
     + \sum_{a} u_a(t)\,\operatorname{ad}^{*}_{\xi_a(z)} z ,
\]
which is exactly the controlled Lie--Poisson equation 
\eqref{eq:controlled-LP}.

\medskip
\noindent\textbf{(3)}
For any functional $C$ on $\mathfrak g_{\mathrm{sdp}}^\ast$, its pullback 
$\mathbf J^\ast C$ is constant along the characteristic distribution $\ker(\Omega_{\mathrm{red}}|_{C_\infty})$. In the controlled system, the characteristic distribution enlarges to
\[
\mathcal K_u
  = \ker(\Omega_{\mathrm{red}}|_{C_\infty})
    \oplus
    \operatorname{span}\{ X_{B_a}|_{C_\infty} \}_{a=1}^m .
\]
Thus $C$ is a (controlled) Casimir iff 
$\dd C(z)$ annihilates the reduced control generators:
\[
\dd C(z)\cdot F_a(z) = 0,
\qquad 
F_a(z)=\mathbf J_\ast(X_{B_a})
      =\operatorname{ad}^{*}_{\xi_a(z)} z .
\]

\medskip
\noindent\textbf{(4)}
Linearizing~\eqref{eq:controlled-LP} around an equilibrium $z_0$ gives
\[
\dot{\delta z}
   = \operatorname{ad}^{*}_{\delta^2 H(z_0)[\delta z]} z_0
      + 
      \operatorname{ad}^{*}_{\delta H(z_0)} \delta z
      +
      \sum_{a} u_a(t)\,\operatorname{ad}^{*}_{\xi_a(z_0)} z_0 .
\]
Since $z_0$ is an equilibrium, 
\(
\operatorname{ad}^{*}_{\delta H(z_0)} z_0 = 0,
\)
and we obtain
\[
\dot{\delta z}
  = \operatorname{ad}^{*}_{\delta^2 H(z_0)[\delta z]} z_0
    + \sum_{a} u_a(t)\,\operatorname{ad}^{*}_{\xi_a(z_0)} z_0 .
\]
\end{proof}


\begin{corollary}
If $\xi_a(z_0)\notin \mathfrak{g}_{z_0}$ (the isotropy algebra of
$z_0$), then the control breaks the residual symmetry and introduces new
tangent directions to $C_\infty/G$.
\end{corollary}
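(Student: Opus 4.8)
The plan is to unwind the statement directly from the description $F_a(z)=\operatorname{ad}^{*}_{\xi_a(z)}z$ established in the Lemma preceding Theorem~\ref{thm:controlled-LP}, together with the infinitesimal characterization of the coadjoint isotropy algebra. First I would recall that the isotropy (stabilizer) algebra of $z_0$ for the coadjoint action is
\[
\mathfrak g_{z_0}
  = \{\,\eta\in\mathfrak g_{\mathrm{sdp}}
        \mid \operatorname{ad}^{*}_{\eta}z_0 = 0\,\},
\]
obtained by differentiating the condition $\operatorname{Ad}^{*}_{\exp(\epsilon\eta)}z_0=z_0$ at $\epsilon=0$. Thus $\eta\in\mathfrak g_{z_0}$ if and only if $\operatorname{ad}^{*}_{\eta}z_0=0$. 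Combining this with the Lemma, the reduced control generator at the equilibrium is $F_a(z_0)=\operatorname{ad}^{*}_{\xi_a(z_0)}z_0$, so the hypothesis $\xi_a(z_0)\notin\mathfrak g_{z_0}$ is \emph{equivalent} to $F_a(z_0)\neq 0$.

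Next I would interpret this nonvanishing on the reduced space $\mathfrak g^{*}_{\mathrm{sdp}}\simeq C_\infty/G$. Since $z_0$ is an equilibrium of the uncontrolled Lie--Poisson flow, the drift term $\operatorname{ad}^{*}_{\delta H/\delta z}z_0$ vanishes at $z_0$ (as used in item~(4) of Theorem~\ref{thm:controlled-LP}); hence in the controlled equation~\eqref{eq:controlled-LP} the only surviving velocity at $z_0$ is the control contribution $\sum_a u_a\,F_a(z_0)$. When $\xi_a(z_0)\notin\mathfrak g_{z_0}$ this is a nonzero vector tangent to the coadjoint orbit $\mathcal O_{z_0}\subset\mathfrak g^{*}_{\mathrm{sdp}}$, so the control enlarges the span of accessible directions of the reduced flow at $z_0$ by a direction that is \emph{absent} from the uncontrolled dynamics. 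This enlargement is precisely the image under $\mathbf J_\ast$ of the control contribution to the enlarged characteristic distribution $\mathcal K_u$ of item~(3) of Theorem~\ref{thm:controlled-LP}, and is the asserted new tangent direction to $C_\infty/G$.

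Finally, for the symmetry-breaking claim I would note that the residual symmetry at $z_0$ is the isotropy subgroup $G_{z_0}$, whose infinitesimal action fixes $z_0$, i.e.\ $\operatorname{ad}^{*}_{\mathfrak g_{z_0}}z_0=\{0\}$. Because $\xi_a(z_0)$ does not belong to $\mathfrak g_{z_0}$, the one-parameter subgroup $\exp(\epsilon\,\xi_a(z_0))$ does not stabilize $z_0$; equivalently, $F_a(z_0)$ lies in $T_{z_0}\mathcal O_{z_0}$ but outside the trivial isotropy motion $\operatorname{ad}^{*}_{\mathfrak g_{z_0}}z_0=\{0\}$. Consequently the controlled flow drives $z_0$ along its coadjoint orbit in a direction that the residual symmetry alone cannot produce, which is exactly the statement that the control breaks the residual symmetry.

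The computation is essentially tautological once the Lemma and the isotropy-algebra characterization are in hand, and there is no analytic difficulty. The only point requiring care---and hence the main obstacle---is interpretive: one must fix precise meanings for \emph{new tangent direction} (a nonzero reduced velocity at the equilibrium, tangent to $\mathcal O_{z_0}$ but transverse to the trivial isotropy motion) and for \emph{breaks the residual symmetry} (the condition $\operatorname{ad}^{*}_{\xi_a(z_0)}z_0\neq 0$), and then verify that the single hypothesis $\xi_a(z_0)\notin\mathfrak g_{z_0}$ certifies both simultaneously via the equivalence $\xi_a(z_0)\notin\mathfrak g_{z_0}\iff F_a(z_0)\neq 0$.
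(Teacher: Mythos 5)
Your proposal is correct and coincides with the argument the paper leaves implicit: the corollary is stated without proof as an immediate consequence of the Lemma $F_a(z)=\operatorname{ad}^{*}_{\xi_a(z)}z$ and of Theorem~\ref{thm:controlled-LP}, and your chain of equivalences $\xi_a(z_0)\notin\mathfrak{g}_{z_0}\iff \operatorname{ad}^{*}_{\xi_a(z_0)}z_0\neq 0\iff F_a(z_0)\neq 0$, combined with the vanishing of the uncontrolled drift $\operatorname{ad}^{*}_{\delta H/\delta z}z_0=0$ at the equilibrium, is exactly that intended reasoning. Your care in fixing the meanings of ``new tangent direction'' and ``breaks the residual symmetry'' supplies precisely the interpretive content the paper's terse statement presupposes.
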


\begin{corollary}
If controls deform Casimirs in a neighborhood of $z_0$, then the
controlled energy–Casimir functional
\[
\mathcal{E}_{C_u}(z)
  := H(z)+C_u(z)
\]
may be tuned so that $z_0$ becomes a strict minimum, yielding formal
stabilization.
\end{corollary}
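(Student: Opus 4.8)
The plan is to realize this statement as a direct application of the controlled energy--Casimir machinery, viewing the control-deformed family $\{C_u\}$ as an enlarged pool of shaping functionals to which the stability criterion of Theorem~\ref{thm:EC-stability} can be applied. Concretely, I would reduce the claim to verifying, for a suitably tuned constant control $u^{*}$, the two hypotheses of the energy--Casimir method: first, that $z_0$ is a critical point of $\mathcal{E}_{C_u}=H+C_u$ restricted to $C_\infty$; and second, that the second variation $\delta^2\mathcal{E}_{C_u}(z_0)$ is positive definite transverse to $\ker\Omega_{\mathrm{red}}(z_0)$ on the effective phase space $W^{\mathrm{eff}}_{F_0}$. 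Once both hold, Theorem~\ref{thm:EC-stability} yields formal stability verbatim; and since $C_u$ is a controlled Casimir in the sense of the enlarged characteristic distribution of Theorem~\ref{thm:controlled-LP}(3), its Hamiltonian vector field lies in $\ker\Omega_{\mathrm{red}}$ (Proposition~\ref{prop:Casimirs-null}), so adding it does not alter the controlled flow while reshaping the level sets used in the Lyapunov argument.

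For the criticality step I would invoke Theorem~\ref{thm:EC-critical}, by which the equilibrium $z_0$ is a critical point of $H+C$ for some Casimir $C$ of the uncontrolled bracket. Because the controls deform the Casimir family in a neighbourhood of $z_0$, the deformed functional can be matched to the uncontrolled one at first order by selecting $u^{*}$ to enforce $\dd\mathcal{E}_{C_u}(z_0)=\dd H(z_0)+\dd C_u(z_0)=0$ on $T_{z_0}C_\infty$; this is exactly condition~(ii) of Theorem~\ref{thm:controlled-sb}, which fixes $u^{*}$ as the control value shifting $z_0$ onto the critical set of the controlled energy--Casimir functional, with nondegeneracy following as in Theorem~\ref{thm:controlled-sb}(a).

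The substantive step, and the main obstacle, is the positive-definiteness of the second variation. Here I would use the decomposition
\[
\delta^2\mathcal{E}_{C_u}(z_0)
  = \delta^2\mathcal{E}_C(z_0)
    + \sum_{a=1}^m u^{*}_a\,\delta^2 B_a(z_0),
\]
in which the tunable term is the control-weighted sum of the Hessians $\delta^2 B_a(z_0)$ of the antenna-coupling functionals. The uncontrolled Hessian $\delta^2\mathcal{E}_C(z_0)$ is typically only non-negative, with neutral directions along the broken-symmetry (Goldstone) modes $\mathfrak{g}_0^\perp$ identified in Proposition~\ref{prop:Goldstone}. The strategy is to exploit condition~(i) of Theorem~\ref{thm:controlled-sb}---that each such direction is symplectically paired by some $X_{B_a}(z_0)$---to argue that the deformations $\{\delta^2 B_a(z_0)\}$ span enough of the symmetric forms on $W^{\mathrm{eff}}_{F_0}$ to dominate the null and indefinite directions of $\delta^2\mathcal{E}_C(z_0)$. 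The delicate point, in the infinite-dimensional Sobolev setting, is that positive-definiteness must be read spectrally, as boundedness below away from zero on the effective Hilbert space, so the control-induced corrections must close a genuine spectral gap rather than merely flip finitely many eigenvalues; under hypothesis~(iii) of Theorem~\ref{thm:controlled-sb} this coercivity holds by assumption.

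With criticality and coercivity established, I would conclude by applying Theorem~\ref{thm:EC-stability} on $(C_\infty,\Omega_{\mathrm{red}})$ with shaping functional $C_u$: the controlled functional $\mathcal{E}_{C_u}$ attains a strict local minimum at $z_0$ modulo $\ker\Omega_{\mathrm{red}}$ and serves as a Lyapunov function for the equivalence class of $z_0$ under the controlled flow, yielding the claimed formal stabilization.
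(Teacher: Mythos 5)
The paper offers no proof of this corollary---it is stated as an immediate consequence of Theorem~\ref{thm:controlled-sb} together with the energy--Casimir machinery of Section~6---and your reconstruction follows exactly that intended route (criticality of $\mathcal{E}_{C_u}$ at $z_0$ via condition~(ii) of Theorem~\ref{thm:controlled-sb}, coercivity of the controlled second variation via conditions~(i) and~(iii), and the Lyapunov conclusion via Theorem~\ref{thm:EC-stability}), so your proposal is correct and matches the paper's approach. One imprecision worth fixing: your side remark that $X_{C_u}\in\ker\Omega_{\mathrm{red}}$ (so that ``adding $C_u$ does not alter the controlled flow'') is incompatible with your operative identification $C_u = C + \sum_a u^{*}_a B_a$, since the coupling functionals $B_a$ are not Casimirs; the correct justification of the Lyapunov property is that $\mathcal{E}_{C_u}=H_{u^{*}}+C$ is conserved under the controlled flow because $H_{u^{*}}$ generates that flow and $C$ Poisson-commutes with all observables.
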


Theorem~\ref{thm:controlled-LP} therefore provides a rigorous geometric
framework for plasma–antenna interaction, RF heating, and external
stabilization in a manner fully compatible with the Maxwell--Vlasov
Hamiltonian structure.


The generators \(F_a\) encode the intrinsic plasma response to each control channel. 
When their projection lies in the Vlasov (kinetic) sector, the resulting deformation of the distribution function produces resonant phase--space shearing and RF current drive.  
When the projection lies instead in the electromagnetic sector, the induced action corresponds to antenna--driven twisting of magnetic field lines or electrostatic displacement.  
Furthermore, a nonvanishing pairing \(\langle \dd C(z), F_a(z)\rangle \neq 0\) identifies precisely which invariants---such as entropy, magnetic or generalized helicity, and charge-related Casimirs---are broken by the \(a\)-th control channel. Thus, controlled Lie--Poisson reduction naturally incorporates external antenna physics into the SR--GNH geometric framework.

\section{Conclusions and Future Work}

In this paper we have developed a unified geometric formulation of the
Maxwell--Vlasov system based on the infinite-dimensional
Skinner--Rusk (SR) formalism together with the Gotay--Nester--Hinds
(GNH) algorithm for presymplectic constraint analysis.
Starting from the Low Lagrangian on the configuration manifold
$Q=\Diff(T\mathbb{R}^3)\times\mathcal{A}\times\mathcal{V}$, we built the
unified presymplectic phase space $W=TQ\oplus T^*Q$ and showed that the
full Maxwell--Vlasov equations emerge systematically through a finite,
geometrically transparent GNH constraint hierarchy. We then reduced the
SR system by the particle-relabeling symmetry group
$\Diff(T\mathbb{R}^3)$, proving that the reduced presymplectic structure
recovers both the Euler--Poincaré equations of CHHM and the full
Marsden--Weinstein/Morrison--Greene Hamiltonian picture in Eulerian
variables. The resulting framework provides a single geometric origin
for all classical formulations—Lagrangian, Eulerian, Dirac, Lie–Poisson,
and noncanonical Hamiltonian.

We also extended the construction to equilibria that partially
break the relabeling symmetry. By translating the SR system to an
equilibrium and applying a linearized version of the GNH algorithm, we
obtained an effective symplectic phase space free of gauge and relabeling
degeneracies. This construction clarifies the geometric nature of
Goldstone-type modes and provides a presymplectic foundation for
energy–Casimir stability theory around symmetric and partially symmetric
equilibria.

Finally we developed a geometric theory of controlled symmetry
breaking within the SR--GNH framework, incorporating external antenna or
RF drivings as affine Hamiltonian controls compatible with the
presymplectic constraint structure.  After reduction, these controls give
rise to controlled semidirect-product Lie--Poisson equations that modify
Casimir leaves, couple directly to Goldstone-type modes, and permit
stability enhancement via controlled energy--Casimir methods.  This
extends the unified geometric picture to include plasma control and
external actuation, providing a first step toward a rigorous Hamiltonian
framework for RF heating, current drive, and symmetry-breaking
stabilization mechanisms in kinetic plasma models.

\medskip

We conclude by outlining the geometric program suggested by the SR
framework.  All of these directions lie squarely at the intersection of
Hamiltonian control, kinetic theory, and presymplectic reduction.

\paragraph{(1) Controlled Lie--Poisson reduction for plasma--antenna coupling.}

Given an external antenna current $J_{\mathrm{ext}}$, the corresponding
affine cotangent lift defines a controlled action of
$\Diff(T\mathbb{R}^3)$ on $W$.  Reduction by this action yields a
\emph{controlled Lie--Poisson equation}
\[
\dot{z}
  = \operatorname{ad}^{*}_{\delta H/\delta z} z
    \;+\; \sum_{a=1}^m u_a(t)\,F_a(z),
\]
where $F_a$ are the reduced control vector fields.  This provides a
geometric version of radio frequency heating and antenna coupling completely
compatible with the Maxwell--Vlasov Poisson structure.

The central mathematical problem is to characterize controllability and
stabilizability of the Lie--Poisson system relative to
Casimir leaves and symmetry directions.

\paragraph{(2) SR-based Hamiltonian averaging for strong magnetic field.}

Gyrokinetic and drift-kinetic reductions rely on averaging along the
fast Larmor rotation.  In the SR framework this corresponds to
performing presymplectic averaging with respect to the $U(1)$-action
generated by gyroangle rotations.  The gyrocenter Hamiltonian
$H_{\mathrm{gy}}$ becomes
\[
H_{\mathrm{gy}}
  = H_{\mathrm{red}} + \epsilon H^{(1)}_{\mathrm{gy}}
    + \epsilon^2 H^{(2)}_{\mathrm{gy}}
    + \cdots,
\]
with each term obtained via presymplectic projection onto the quotient $C_\infty \big/ U(1)_{\mathrm{gyro}}$.

This yields a rigorous replacement for guiding-center and gyrokinetic
theory within a full Maxwell--Vlasov setting.

\paragraph{(3) Gyrokinetic control via Casimir shaping.}

The Casimirs of the reduced theory determine the foliation of phase
space.  Controlled symmetry breaking modifies these leaves, suggesting a
new method of stabilization: \emph{Casimir shaping}.  The basic idea is
to use controls to deform the Casimir invariants so that an unstable
equilibrium becomes a strict minimum of a controlled energy--Casimir
functional.

Let $C$ be a Casimir and $C_u$ the controlled Casimir after reduction.
Then the controlled second variation is
\[
\delta^2\mathcal{E}_{C_u}(z_0)
  = \delta^2 H(z_0)
    + \delta^2 C_u(z_0).
\]
By choosing $u(t)$ to alter $C_u$ within admissible control directions,
one can produce positivity on the effective linearized space, yielding
formal stabilization.

\medskip

\subsubsection*{Conflict of Interest Statement.}

The author declares that he has no known competing financial interests or personal relationships that could have appeared to influence the work reported in this manuscript.

\subsubsection*{Data Availability Statement.} No datasets were generated or analyzed during the current study. All mathematical derivations are contained within the article.

\end{document}